
\documentclass[12pt]{article}%
\usepackage{amsfonts}
\usepackage{amsmath}
\usepackage{amssymb}
\usepackage{graphicx}%
\setcounter{MaxMatrixCols}{30}
\providecommand{\U}[1]{\protect\rule{.1in}{.1in}}
\newtheorem{theorem}{Theorem}

\newtheorem{condition}[theorem]{Condition}

\newtheorem{corollary}[theorem]{Corollary}

\newtheorem{definition}[theorem]{Definition}
\newtheorem{example}[theorem]{Example}

\newtheorem{lemma}[theorem]{Lemma}
\newtheorem{notation}[theorem]{Notation}

\newtheorem{proposition}[theorem]{Proposition}
\newtheorem{remark}[theorem]{Remark}

\newenvironment{proof}[1][Proof]{\textbf{#1.} }{\ \rule{0.5em}{0.5em}}

\begin{document}

\title{Money as Minimal Complexity\thanks{In honor of Lloyd Shapley. }}
\author{Pradeep Dubey\thanks{Stony Brook Center for Game Theory, Dept. of Economics;
and Cowles Foundation for Research in Economics, Yale University}, Siddhartha
Sahi\thanks{Department of Mathematics, Rutgers University, New Brunswick, New
Jersey}, and Martin Shubik\thanks{ Cowles Foundation for Research in
Economics, Yale University; and Santa Fe Institute, New Mexico.}}
\date{December 16, 2015}
\maketitle

\section*{Abstract}

We consider mechanisms that provide traders the \emph{opportunity} to exchange
commodity $i$ for commodity $j$, for certain ordered pairs $ij.$ Given any
connected graph $G$ of opportunities, we show that there is a unique mechanism
$M_{G}$ that satisfies some natural conditions of \textquotedblleft
fairness\textquotedblright\ and \textquotedblleft
convenience\textquotedblright. Let $\mathfrak{M}(m)$ denote the class of
mechanisms $M_{G}$ obtained by varying $G$ on the commodity set $\left\{
1,\ldots,m\right\}  $. We define the complexity of a mechanism $M$ in
$\mathfrak{M(m)}$ to be a certain pair of integers $\tau(M),\pi(M)$ which
represent the time\ required to exchange $i$ for $j$ and the
information\ needed\ to determine the exchange ratio (each in the worst case
scenario, across all $i\neq j$). This induces a quasiorder $\preceq$ on
$\mathfrak{M}(m)$ by the rule%
\[
M\preceq M^{\prime}\text{ if }\tau(M)\leq\tau(M^{\prime})\text{ \emph{and}
}\pi(M)\leq\pi(M^{\prime}).
\]

We show that, for $m>3$, there are precisely three $\preceq$-minimal
mechanisms $M_{G}$ in $\mathfrak{M}(m)$, where $G$ corresponds to the star,
cycle and complete graphs. The star mechanism has a distinguished commodity --
the money -- that serves as the sole medium of exchange and mediates trade
between decentralized markets for the other commodities.

Our main result is that, for \emph{any} weights $\lambda,\mu>0,$ the star
mechanism is the \emph{unique} minimizer of $\lambda\tau(M)+\mu\pi
(M\mathcal{)}$ on $\mathfrak{M}(m)$ for large enough $m$.

\textbf{JEL Classification}: C70, C72, C79, D44, D63, D82.

\textbf{Keywords: }exchange mechanism, minimal complexity, money.

\section{Introduction}

The need for money in an exchange mechanism has been the topic of much
discussion, and it would be impossible to summarize that literature here. We
give some references that are indicative, but by no means exhaustive. (For a
detailed survey, see \cite{Shubik:1993} and \cite{starr:2012}.)

Several search-theoretic models, involving random bilateral meetings between
long-lived agents, have been developed following Jevons \cite{Jevons:1875}
(see, \textit{e.g.}, \cite{Bannerjee-Maskin(1996)}, \cite{Iwai:1996},
\cite{Jones:1976}, \cite{Kiyotaki-Wright: 1989}, \cite{Kiyotaki-Wright: 1993},
\cite{Li-Wright:1998}, \cite{Ostroy 1973}, \cite{Trejos-Wright: 1995} and the
references therein). These models turn on utility-maximizing behavior and
beliefs of the agents in Nash equilibrium, and shed light on which commodities
are likely to get adopted as money. A parallel, equally distinctive, strand of
literature builds on partial or general equilibrium models with other kinds of
frictions in trade, such as limited trading opportunities in each period, or
transaction costs (see, \textit{e.g.}, \cite{Foley:1970}, \cite{Hahn:1971},
\cite{Heller: 1974}, \cite{heller-Starr: 1976}, \cite{Howitt-Clower: 2000},
\cite{Ostroy-Starr: 1974}, \cite{Ostroy-Starr: 1990}, \cite{starr:2012},
\cite{Starret:1973}, \cite{Wallace: 1980}). In many of these models, a
specific trading mechanism is fixed exogenously, and the focus is on activity
within the mechanism that is induced by equilibrium, based again on the
optimal behavior of utilitarian individuals.

Our approach complements this literature in two salient ways, and brings to
light a new rationale for money that is different from those proposed earlier,
but not at odds with them, in that the door is left open to incorporate their
concerns within our framework. First and foremost, our focus is purely on
mechanisms of trade with no regard to the characteristics of the individuals
such as their endowments, production technologies, preferences or beliefs.
Second, no specific mechanism is\emph{ }specified \textit{ex ante} by us. We
start with a welter of mechanisms and cut them down by four natural conditions
and certain complexity criteria, ultimately ending up with the
\textquotedblleft star\textquotedblright\ mechanism in which money plays the
central role.

These mechanisms are \emph{Cournotian }in spirit\footnote{It is our purpose to
see how far matters may develop in an elementary Cournot framework. In
particular, note that \textit{ex ante }there are no \textquotedblleft
prices\textquotedblright\ to refer to, upon which a trader may condition his
offers. We do show that prices can be \textquotedblleft
admitted\textquotedblright, i.e., \emph{defined}, but this happens \textit{ex
post} once unconditional offers for trade have come into the mechanism. Our
mechanisms are thus a far cry from the more complex Bertrand mechanisms, in
which traders use prices alongside quantities in order to make contingent
statements to protect themselves against vagaries of the market (see,e.g.,
\cite{Dubey:1982}, \cite{Mertens:2003}). An analysis analogous to ours might
well be possible in the Bertrand setting, but that is a topic for future
exploration.}\emph{, }and the setting for them is simple, in keeping with our
aim of showing that the need for money can arise at a very rudimentary level.
A mechanism $M$ on commodity set $\left\{  1,\ldots,m\right\}  $ operates as
follows. For certain ordered pairs $ij,$ pre-specified by $M$, each trader may
offer any quantity of commodity $i$ in order to obtain commodity $j$. Once all
offers are in, the mechanism $M$ redistributes to the traders the commodities
it has received, holding back nothing. The returns to the traders are
calculated by an algorithm\footnote{There is no\textit{ }presumption that the
algorithm be \textquotedblleft informationally decentralized\textquotedblright%
. Indeed even the return to a simple offer of $i$, made only via the pair
$ij$, may\emph{ }well\emph{ }depend on all the offers at \emph{every} $kl\in
G;$ and may thus require a lot of information for its computation.} that is
common knowledge. Thus a mechanism $M$ is characterized by a collection of
exchange \emph{opportunities }$ij$, which form the edges of a directed graph
$G$ on nodes $\left\{  1,\ldots,m\right\}  $, and the algorithm. We assume
throughout that $G$ is \emph{connected, }i.e., $M$ permits \emph{iterative}%
\textit{ }exchange of any $i$ for any $j.$

At this level of generality, there are infinitely many mechanisms (algorithms)
for any given graph $G$. However, we shall show that only one of them
satisfies some natural conditions of \textquotedblleft
fairness\textquotedblright\ and \textquotedblleft
convenience\textquotedblright\ (see Section \ref{characterization}). This
special mechanism is denoted $M_{G}$ and is described precisely in Section
\ref{Formal model}. It is a striking property of $M_{G}$ that it admits unique
prices\footnote{Prices are to be thought of as consistent exchange rates
between commodities, i.e. the ratios $p_{i}/p_{j}$. Thus they correspond to
rays in $\mathbb{R}_{++}^{m}$, each of which is represented by a vector $p$ in
$\mathbb{R}_{++}^{m}$ (identified with all its scalar multiples $\lambda p$
for $\lambda>0$).
\par
{}}, which depend only on the aggregate offers by the traders on the various
edges of $G$, and which mediate trade in the following strong sense: first,
the return to any trader depends only on his own offers and the prices;
second, the total value --- under the prevailing prices --- of every trader's
offers is equal to that of his returns. The immediate upshot of price
mediation is that the returns to any trader can be calculated in a transparent
manner from the prices and his own offers.

Thus we are led to consider the class $\mathfrak{M}(m)$ of mechanisms $M_{G}$,
where $G$ ranges over all directed, connected graphs on the vertex set
$\left\{  1,\ldots,m\right\}  $. The cardinality of $\mathfrak{M}(m),$ though
finite, grows super-exponentially in $m.$ However we shall show in Section
\ref{Formal model} that if one invokes natural complexity considerations,
based on the time needed to exchange any commodity $i$ for $j$ and the
information needed to determine the exchange ratio $p_{i}/p_{j}$, then the
welter of mechanisms in $\mathfrak{M}(m)$ is eliminated and we are left with
only three mechanisms of minimal complexity, namely those that arise from the
star, cycle and complete graphs (Theorem \ref{Emergence of money}). Indeed,
provided $m$ is large enough, just the star\ mechanism remains (Theorem
\ref{Uniqueness of money}) in which one commodity emerges endogenously as
money and mediates trade across decentralized markets for the other
commodities\footnote{To be precise: the price of any commodity $1\leq i\leq
m-1$ , in terms of money $m,$ depends only on the aggregate offers on edges
$im$ and $mi$; and thus this pair of edges may be viewed as a decentralized
market for $i$ and $m,$ with $m$ mediating between the various markets.}.

Our analysis is carried out in the oligopolistic setting of finitely many
traders. However, in Section \ref{continuum} we show that it readily extends
to the case of \textquotedblleft perfect competition\textquotedblright, where
there is a continuum of traders and $G$-mechanisms induce \textquotedblleft
price-taking\textquotedblright\ behavior as in the Walrasian\ model .

It is worth emphasizing that ours is a purely \textquotedblleft
mechanistic\textquotedblright, as opposed to a \textquotedblleft
utilitarian\textquotedblright\ or \textquotedblleft
behavioral\textquotedblright, approach to the emergence of
money\footnote{There \emph{is} a faint touch of rationality that we assume
regarding the traders, but it an order-of-magnitude milder than utilitarian
(or other behavioral) considerations. See Remark \ref{touch of rationality}.}.
In the parlance of game theory, we are concerned with the \textquotedblleft
game form\textquotedblright\ behind the game or --- to be more precise ---
with the mechanism that underlies the game form itself. Indeed, with the same
mechanism as the foundation, several different game forms can be constructed
by introducing other considerations, such as whether netting\footnote{Netting
means that if an individual \emph{ex ante} offers $x$ units of commodity $i$
to the mechanism, and is \emph{ex post} entitled to receive $y$ units of $i$
from it, then he is deemed to owe $\max\left\{  0,x-y\right\}  $ or else to
receive $\max\left\{  0,y-x\right\}  .$ In this scenario, one may think that
\textquotedblleft offers\textquotedblright\ consist of \emph{ promises} to
deliver commodities, rather than commodities themselves; and that the
mechanism calls upon traders to make (take) net deliveries (receipts) of
actual commodities. But note the \textit{a priori} need for a mechanism with
respect to which netting can be formulated (or, for that matter, borrowing and
default, or any other trade regulation). Also note that these these
regulations do not come without a cost (see Remark \ref{clearinghouse}).} of
commodities is permitted or not, and if so to what extent; or whether certain
commodities can be borrowed prior to trade and on what terms, along with rules
for the settlement of debt in the event of default. These are no doubt
important economic issues, bearing on the \textquotedblleft
liquidity\textquotedblright\ in the system and the efficiency of its
equilibria. They have been discussed at length, often in terms of the star
mechanism which conforms to the well-known Walrasian model once there is
perfect competition\ and \textquotedblleft sufficient
liquidity\textquotedblright\ (see, e.g., \cite{Dubey-Shapley: 1994},
\cite{Dubey-Geanakoplos-Shubik}, \cite{Dubey-Geanakoplos ISLM}). However, to
even raise these issues, we first \emph{need a mechanism in the background}.
It is this background \emph{alone} that forms the domain of our inquiry .

Our analysis builds squarely upon \cite{Dubey-Sahi: 2003}, which provided an
axiomatic characterization of the finite set of "G-mechanisms" (see Section
\ref{Formal model}), bridging the gap between the Shapley-Shubik model of
decentralized \textquotedblleft trading posts\textquotedblright, i.e., the
star mechanism (see \cite{Shapley: 1976}, \cite{Shapley-Shubik: 1977},
\cite{Shubik: 1973}) and the Shapley model of centralized \textquotedblleft
windows\textquotedblright, i.e., the complete mechanism\ (see
\cite{Sahi-Yao:1989}). Various strategic market games, based upon trading
posts, have been analyzed, with commodity or fiat money in
\cite{Dubey-Shubik:1978}, \cite{Peck:1992}, \cite{Peck-Shell:1992},
\cite{Postlewaite:1978}, \cite{Shapley: 1976}, \cite{Shapley-Shubik: 1977},
\cite{Shubik: 1973}, \cite{Shubik-Wilson}; most of these papers also discuss
the convergence of Nash equilibria (NE) to Walras equilibria (WE) under
replication of traders. For a continuum-of-traders version, with details on
explicit properties of the commodity money (its distribution and desirability)
or of fiat money (its availability and the harshness of default penalties),
under which we obtain equivalence (or near-equivalence) of NE and WE, see
\cite{Dubey-Shapley: 1994}, \cite{Dubey-Geanakoplos-Shubik},
\cite{Dubey-Geanakoplos ISLM}; and, for an axiomatic approach to the
equivalence phenomenon, see \cite{Dubey-MasColell-Shubik:1980}.

Strategic market games differ in a fundamental sense from the Walrasian model,
despite the equivalence of NE and WE. In the WE framework, agents always
optimize generating supply and demand, but markets do not clear except at
equilibrium. We are left in the dark as to what happens outside of
equilibrium. In sharp contrast, in the NE\ framework, markets always clear,
producing prices and trades based on agents' strategies; but agents do not
optimize except at equilibrium. The very formulation of a game demands that
the \textquotedblleft game form\textquotedblright, \textit{i.e.}, the map from
strategies to outcomes, be defined prior to the introduction of agents'
preferences on outcomes; thus disentangling the physics of trade from its
psychology\footnote{To put it bluntly, the insistence on a game form pertains
to the following situation in the real world. People exercise choice all the
time through their actions; and the world goes merrily on, by well-defining
the outcome of those actions --- it does not come to a standstill until they
can explain why they have acted as they did!}. Our mechanisms are firmly in
this genre, and indeed form the bases upon which many market games are built.
To be precise: game forms arise from our mechanisms by introducing private
endowments and the rules of trade (including the degree of netting or
borrowing permitted); and strategic market games then arise by further
introducing preferences.

\section{The Emergence of Money\label{Formal model}}

Let $G$ be a directed and connected graph\footnote{In this paper by a graph we
mean a \emph{directed simple} \emph{graph}. Such a graph $G$ consists of a
finite \emph{vertex} set $V_{G}$, togther with an \emph{edge} set
$E_{G}\subseteq V_{G}\times V_{G}$ that does not contain any loops,
\textit{i.e.,} edges of the form $ii$. For simplicity we shall often write
$i\in G$, $ij\in G$ in place of $i\in V_{G}$, $ij\in E_{G}$ but there should
be no confusion. By a \emph{path} $ii_{1}i_{2}\ldots i_{k}j$ from $i$ to $j$
we mean a nonempty sequence of edges in $G$ of the form $ii_{1},i_{1}%
i_{2},\ldots,i_{k-1}i_{k},i_{k}j.$ If $k=0$ then the path consists of the
single edge $ij$, otherwise we insist that the \emph{intermediate }vertices
$i_{1},\ldots,i_{k}$ be distinct from each other and from the endpoints $i,j$.
However we do allow $i=j$, in which case the path is called a \emph{cycle}. We
say that $G$ is \emph{connected} if for any two vertices $i\neq j$ there is a
\emph{path} from $i$ to $j$.} with vertex set $\left\{  1,\ldots,m\right\}  $.
We define a mechanism $M_{G}$ as follows. Each trader can use every
opportunity in $M,$ i.e., place arbitrary weights on the edges $ij$ of $G,$
representing his offer of $i$ for $j.$ Let $b_{ij}$ denote the total weight on
$ij$ (i.e., the aggregate amount of commodity $i$ offered for $j$ by all
traders). We shall specify what happens when $b_{ij}>0$ for every edge $ij$ in
$G$, i.e., when there is sufficient diversity in the population of traders so
that each opportunity is active. Denote $b=(b_{ij})$ and let $\mathbb{R}%
_{++}^{m}/\sim$ be the set of rays in $\mathbb{R}_{++}^{m}$ representing
prices. It is well-known that (with $b_{ij}$ understood to be $0$ if $ij$ is
not an edge in $G$) there is a unique ray $p=p(b)$ in $\mathbb{R}_{++}%
^{m}/\sim$ satisfying%
\begin{equation}
\sum\nolimits_{i}p_{i}b_{ij}=\sum\nolimits_{i}p_{j}b_{ji}\text{ for all }j.
\label{=price}%
\end{equation}
Note that the left side of (\ref{=price}) is the total value of all the
commodities \textquotedblleft chasing\textquotedblright\ $j$, while the right
side is the total value of commodity $j$ on offer; thus (\ref{=price}) is
tantamount to \textquotedblleft\emph{value conservation}\textquotedblright.

It turns out that (\ref{=price}) has an explicit combinatorial solution, which
we now describe. Let $\mathcal{T}_{i}$ be the collection of all
\textquotedblleft spanning\textquotedblright\ trees in $G$ that are rooted at
$i$ (\emph{i.e.} subgraphs of $G$ in which there is a unique directed path to
$i$ from \emph{every} $j\neq i$); and for any subgraph $H$, define
$b_{H}=\prod\nolimits_{ij\in H}b_{ij}$; then we have\footnote{Formula
(\ref{=priceformula}) has a short proof \cite{Sahi:2013} but a long history.
It seems to be orginally due to \cite{Hill:1966} but has been rediscovered
several times (see the discussion in \cite{Anantharam-Tsoucas: 1989}).}
\begin{equation}
p_{i}=\sum\nolimits_{T\in\mathcal{T}_{i}}b_{T}\text{.} \label{=priceformula}%
\end{equation}

The principle of value conservation, which determines prices, also determines
trade. An individual who offers $a_{ij}$ units of $i$ via opportunity $ij$
gets back $r_{j}$ units of $j$, where $p_{i}a_{ij}=p_{j}r_{j.}$ More
generally, if a trader offers $a=(a_{ij})\geq0$ across all edges of $G$, he
gets a return $r(a,b)\in$ $\mathbb{R}_{+}^{m}$ whose components are given by
\begin{equation}
r_{j}(a,b)=\sum\nolimits_{i}(p_{i}/p_{j})a_{ij} \label{MGr}%
\end{equation}
for all $j.$ This completes the definition of the $G$-\emph{mechanism}%
\footnote{$G$-mechanisms may arise naturally in the context of currency
exchange, with edges $ij$ indicating the direct convertibility of currency $i$
to currency $j.$} $M_{G}.$

Note that the return to a trader depends only on his offer $a$ and the
\emph{price ratios }$p_{i}/p_{j},$ which are well-defined functions of $b$
(unlike the price vector $p=(p_{i})$ which is only defined up to a scalar
multiple). It might be instructive to see the formulae for price ratios (and
thereby also for returns, thanks to equation (\ref{MGr})) for specific
mechanisms. Let us, from now on, identify two mechanisms if one can be
obtained from the other by relabeling commodities. There are three mechanisms
of special interest to us called the \emph{star, cycle, }and \emph{complete
mechanisms; }with the following edge-sets and price ratios:
\[%
\begin{tabular}
[c]{|c|c|c|c|}\hline
$G$ & Star & Cycle & Complete\\\hline
$E_{G}$ & $\left\{  mi,im:i<m\right\}  $ & $\left\{  12,23,\ldots,m1\right\}
$ & $\left\{  ij:i\neq j\right\}  $\\\hline
$p_{i}/p_{j}$ & $b_{mi}b_{jm}/b_{im}b_{mj}$ & $b_{j,j+1}/b_{i,i+1}$ & $\ast
$\\\hline
\end{tabular}
\ \ \ \ \ \ \ \ \ \
\]
For the star and cycle mechanisms, the right-hand side of (\ref{=priceformula}%
) involves a \emph{single} tree and, in the ratio $p_{i}/p_{j},$ several
factors cancel leading to the simple expressions in the table above. However,
for the complete mechanism there is no cancellation and in fact here each
price ratio depends on \textit{every} $b_{ij}.$

The class of $G$\emph{-mechanisms} is the set
\begin{equation}
\mathfrak{M}(m)=\left\{  M_{G}:G\text{ is a directed, connected graph on
}\left\{  1,\ldots,m\right\}  \right\}  .
\end{equation}
Although finite, $\mathfrak{M}(m)$ is rather large, indeed super-exponential
in $m.$ We shall see that some natural complexity considerations help cut down
its size.

Consider a trader who interfaces with $M\in$ $\mathfrak{M}(m)$ in order to
exchange $i$ for $j$. A natural concern for him would be: what is the minimum
number of time periods $\tau_{ij}\left(  M\right)  $ needed to accomplish this
exchange? We define the \emph{time-complexity }of $M$ to be
\begin{equation}
\tau\left(  M\right)  =\max_{i\neq j}\tau_{ij}\left(  M\right)  .
\end{equation}
It is evident that $\tau_{ij}\left(  M\right)  $ is the length of the shortest
path in $G$ from $i$ to $j$ and $\tau\left(  M\right)  $ is the
\textit{diameter} of the graph $G.$

The other concern of our trader would be: how much of commodity $j$ can he get
per unit of $i$? It follows from equation (\ref{MGr}) that he can calculate
this from the state $b$ of the mechanism which determines the\ \emph{price
ratio}\footnote{If there is a continuum of traders (see Section
\ref{continuum}), his own action has no affect on the price ratio. Otherwise
it affects the aggregate offer and thereby the price ratio, which is but to be
expected in an oligopolistic framework. In \emph{either} case, equation
(\ref{MGr}) applies; and $p_{i}/p_{j}$ is the exchange ratio between $i$ and
$j$.} $p_{i}/p_{j}$. Thus the question can be rephrased: how many components
of $b$ does he need to know\footnote{And, since he always knows his own offer,
this is the same as asking: how many components does he need to know of the
aggregate offer of the \emph{others} ?} in order to calculate $p_{i}/p_{j}$?
The table above indicates that it is easier to compute $p_{i}/p_{j}$ for the
star and cycle mechanisms than, say, the complete mechanism.

To make this notion precise, if $f$ is a function of several variables
$x=\left(  x_{1,}\ldots,x_{l}\right)  $, let us say that the component $i$ of
$x$ is \emph{influential} if there are two inputs $x,x^{\prime}$, differing
only in the $i$-th place, such that $f\left(  x\right)  \neq f\left(
x^{\prime}\right)  $. Define\emph{ }$\pi_{ij}(M\mathcal{)}$ to be the number
of influential components of $b$ in the price ratio function $p_{i}/p_{j}.$
For example, from the expression for $p_{i}/p_{j}$ for the star mechanism in
the previous table, it is clear that $\pi_{ij}(M\mathcal{)}$ is $4$ unless one
of $i$ or $j$ is $m,$ in which case it is $2$. We define the\emph{ price
complexity }of $M$ to be%
\begin{equation}
\pi(M\mathcal{)}=\max_{i\neq j}\pi_{ij}(M\mathcal{)}.
\end{equation}

We now define a \emph{quasiorder} $\preceq$ (reflexive and transitive) on
$\mathfrak{M}(m)$ by%
\begin{equation}
M\preceq M^{\prime}\text{\quad}\iff\quad\tau(M)\leq\tau^{\prime}(M^{\prime
})\text{ \emph{and }}\pi(M)\leq\pi^{\prime}\left(  M^{\prime}\right)
\end{equation}

We are ready to state our main result\footnote{A word about the numbering
system used in this paper: all theorems, remarks, conditions, lemmas etc. are
arranged in a \emph{single }grand sequence. Thus the reader shall see, in
order of appearance: Theorem 1, Theorem 2, Remark 3, Condition 4,\ldots. This
does \emph{not }mean that Condition 4 is the fourth condition; in fact it is
the first condition, but it has fourth place in the grand sequence (and, the
marker 4 makes the condition easy to locate).}.

\begin{theorem}
\label{Emergence of money}If\footnote{When $m=3$, we get a fourth mechanism
with complexities $4,2$ identical to the star mechanism. And when $m=2$, we
must change $4$ to $2$ in the table (the three graphs become identical with
complexities $2,2$ for each).} $m>3$ then the three special mechanisms are
\emph{precisely} the $\preceq$-minimal\footnote{$M$ is said to be $\preceq
$\emph{-minimal }in\emph{ }$\mathfrak{M}(m)$ if there is no $M^{\prime}%
\in\mathfrak{M}(m)$ for which $\tau(M^{\prime})\leq\tau(M)$ and $\pi
(M^{\prime})\leq\pi(M)$, with strict inequality in at least one place.}
elements of $\mathfrak{M}(m)$. Their complexities are as follows\emph{:}%
\[%
\begin{tabular}
[c]{|c|c|c|c|}\hline
& Star & Cycle & Complete\\\hline
$\pi(M\mathcal{)}$ & $4$ & $2$ & $m(m-1)$\\\hline
$\tau(M)$ & $2$ & $m-1$ & $1$\\\hline
\end{tabular}
\ \ \
\]

\end{theorem}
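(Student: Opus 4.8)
The plan is to prove the theorem in two halves: first verify the complexity table for the three special mechanisms, and then show that every other $M_G\in\mathfrak{M}(m)$ is dominated by one of these three (so the three are the only $\preceq$-minimal elements). For the first half, the time-complexities are immediate: the diameter of the star is $2$, of the $m$-cycle is $m-1$, and of the complete graph is $1$. The price-complexities for star and cycle follow by inspection of the closed-form price ratios in the table in Section \ref{Formal model}: for the star, $p_i/p_j=b_{mi}b_{jm}/b_{im}b_{mj}$ has $4$ influential components when $i,j\neq m$ (and $2$ otherwise), so $\pi=4$; for the cycle, $p_i/p_j=b_{j,j+1}/b_{i,i+1}$ has $2$ influential components, so $\pi=2$. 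For the complete graph I would argue from formula (\ref{=priceformula}): since every spanning tree rooted at $i$ uses $m-1$ of the $m(m-1)$ edges, and the $\mathcal{T}_i$, $\mathcal{T}_j$ collectively involve all edges with no cancellation (as the text already asserts), every one of the $m(m-1)$ components of $b$ is influential in $p_i/p_j$, giving $\pi=m(m-1)$.

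For the second half — the real content — I would establish three lower bounds that together pin down minimality. \textbf{(a)} If $\tau(M_G)=1$, then $G$ has diameter $1$, i.e.\ $G$ is the complete graph, so any $M_G$ with $\tau=1$ other than the complete mechanism does not exist; hence the complete mechanism is the unique minimizer of $\tau$ and is automatically $\preceq$-minimal. \textbf{(b)} I claim that for any connected $G$ on $m>3$ vertices that is not the cycle, $\pi(M_G)\geq 3$, and in fact $\pi(M_G)\ge 4$ unless $G$ is the star. The idea: using (\ref{=priceformula}), $p_i/p_j$ is a ratio of two polynomials in the $b_{kl}$; a variable $b_{kl}$ is non-influential in this ratio only in degenerate situations (e.g.\ it appears to the same total degree in numerator and denominator as a common factor, which forces strong structural constraints on the trees in $\mathcal{T}_i\cup\mathcal{T}_j$). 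A cycle is exactly the graph where, for every $i\neq j$, the spanning trees $T\in\mathcal{T}_i$ and $T'\in\mathcal{T}_j$ share all but one edge, yielding $\pi_{ij}=2$; any graph with a vertex of out-degree $\ge 2$ or with a more complicated cycle structure forces $\pi_{ij}\ge 3$ for some pair, and indeed $\pi_{ij}\ge 4$ unless the extra edges are organized exactly as in the star. \textbf{(c)} Separately, $\tau(M_G)=2$ forces $G$ to have diameter $2$; among diameter-$2$ graphs I must show the star minimizes $\pi$ (value $4$), so that the star is $\preceq$-minimal and no diameter-$2$ graph other than the star can be. Combining: a $\preceq$-minimal $M_G$ must have $\tau\in\{1,2\}$ or be a minimizer of $\pi$; case $\tau=1$ gives complete, minimizing $\pi$ gives the cycle, and the only remaining Pareto-efficient point is $(\tau,\pi)=(2,4)$, achieved uniquely by the star. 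Every other $G$ has $\tau\ge 2$ and $\pi\ge 3$ with at least one inequality strict against one of the three benchmarks $(1,m(m-1))$, $(m-1,2)$, $(2,4)$, hence is dominated.

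The main obstacle is step (b): proving the structural dichotomy that $\pi(M_G)\ge 4$ for every connected $G$ other than the cycle (where it is $2$) and the star (where it equals $4$), and that no graph achieves $\pi=3$. This requires a careful combinatorial analysis of when a variable $b_{kl}$ can fail to be influential in the rational function $\bigl(\sum_{T\in\mathcal{T}_i}b_T\bigr)\big/\bigl(\sum_{T\in\mathcal{T}_j}b_T\bigr)$ — essentially characterizing graphs all of whose spanning in-trees rooted at a fixed vertex have a prescribed sparse overlap pattern. I would handle this by (i) showing $b_{kl}$ is influential whenever it lies on some spanning in-tree to $i$ but the ratio genuinely depends on it, reducing non-influence to an algebraic-identity condition; (ii) showing that for $G$ neither a star nor a cycle there is always a pair $i\neq j$ and four distinct edges each influential in $p_i/p_j$ — built, for instance, from two short internally-disjoint $i$-$j$ paths when such exist, or from a branch vertex otherwise; and (iii) checking the small-diameter boundary cases by hand. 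Once (b) is in place, assembling the minimality statement from (a), (b), (c) is routine.
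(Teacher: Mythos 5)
Your overall architecture is the same as the paper's final assembly: verify the table, observe that the three special mechanisms are pairwise $\preceq$-incomparable, and then eliminate every other graph, with the star's $(\tau,\pi)=(2,4)$ doing most of the dominating. But the real content of the theorem is exactly your steps (b) and (c), and these are announced rather than proved. The claim that every connected $G$ other than the cycle has $\pi(M_G)\geq 4$, and the claim that no graph other than the star can sit at $(\tau,\pi)=(2,4)$, are precisely what the paper spends its technical sections establishing: Theorem \ref{main} classifies the graphs with $\pi\leq 4$ as exactly the chorded cycles and $k$-roses, and the proof requires a notion of minor adapted to price complexity (connected subgraphs plus collapsible edges), monotonicity of $\pi$ under minors (Proposition \ref{subgraph}, Lemma \ref{collapse}), an augmentation and complexity-jump lemma, and a circuit-rank/covered-vertex analysis reducing to a few forbidden configurations whose price ratios are computed explicitly. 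After that, chorded cycles on $\geq 4$ vertices and non-star roses are killed by \emph{time} complexity ($\tau\geq 3$, Lemma \ref{lem:4-chord} and the collapsing argument for roses), which is what makes the star the unique graph with $\pi\leq4$ and $\tau=2$. Your sketch offers heuristics in place of this ("non-influence forces strong structural constraints", "a branch vertex forces four influential edges for some pair"), and the second heuristic is delicate: chorded cycles and roses have branch vertices yet have $\pi$ exactly $4$, so "branch vertex $\Rightarrow$ can find five influential edges" is false and the correct dividing line is the structural classification you have not supplied.

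Two further concrete gaps. First, your concluding logic does not close: "every other $G$ has $\tau\geq 2$ and $\pi\geq 3$ with at least one inequality strict against one of the three benchmarks" does not imply domination — a non-star graph with $(\tau,\pi)=(2,4)$ would tie the star, be incomparable to the cycle and the complete graph, and hence be a fourth minimal element; ruling this out is equivalent to the missing classification. Likewise in (c), showing the star \emph{minimizes} $\pi$ among diameter-$2$ graphs is not enough; you need that every other diameter-$2$ graph has $\pi\geq 5$. Second, your justification of the entry $\pi=m(m-1)$ for the complete graph ("no cancellation, as the text already asserts") is circular, since the table is part of what must be proved; the paper's Lemma \ref{lem:comp} proves it by specializing edge weights to a suitable spanning connected subgraph (an $m$-cycle or a $2$-rose) in which the dependence of $p_i/p_j$ on each given $b_{kl}$ is exhibited. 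As it stands, the proposal identifies the right skeleton but leaves out the arguments that constitute the proof.
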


This has the following immediate consequence.

\begin{theorem}
\label{Uniqueness of money} Given any choice of strictly positive weights
$\lambda,\mu>0,$ there exists an integer $m_{0}$ such that for $m\geq m_{0}$
the star mechanism is the \emph{unique }minimizer in $\mathfrak{M}(m)$ of
$\lambda\pi(M)+\mu\tau(M\mathcal{)}.$
\end{theorem}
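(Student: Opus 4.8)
The plan is to derive Theorem \ref{Uniqueness of money} from Theorem \ref{Emergence of money} almost for free, by comparing three numbers. Since $\mathfrak{M}(m)$ is finite, the functional $L(M):=\lambda\pi(M)+\mu\tau(M)$ attains its minimum on $\mathfrak{M}(m)$; fix any minimizer $M^{\ast}$. The first observation is that $M^{\ast}$ must be $\preceq$-minimal: if it were not, there would exist $M'\in\mathfrak{M}(m)$ with $\tau(M')\leq\tau(M^{\ast})$ and $\pi(M')\leq\pi(M^{\ast})$, with at least one inequality strict; since $\lambda,\mu>0$ this yields $L(M')<L(M^{\ast})$, contradicting the minimality of $M^{\ast}$. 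Hence, by Theorem \ref{Emergence of money}, for $m>3$ the mechanism $M^{\ast}$ is, up to relabeling of commodities, one of the star, the cycle, or the complete mechanism.

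Next I would read off the three values of $L$ from the complexity table in Theorem \ref{Emergence of money}, namely $L(\mathrm{star})=4\lambda+2\mu$, $L(\mathrm{cycle})=2\lambda+(m-1)\mu$, and $L(\mathrm{complete})=m(m-1)\lambda+\mu$. The value $L(\mathrm{star})$ is independent of $m$, whereas $L(\mathrm{cycle})\to\infty$ and $L(\mathrm{complete})\to\infty$ as $m\to\infty$; indeed $L(\mathrm{star})<L(\mathrm{cycle})$ exactly when $m>3+2\lambda/\mu$, and $L(\mathrm{star})<L(\mathrm{complete})$ exactly when $m(m-1)>4+\mu/\lambda$, and both hold for every integer $m$ above some threshold $m_{0}>3$ determined by $\lambda/\mu$. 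For $m\geq m_{0}$ the star is thus the unique element of $\{\mathrm{star},\mathrm{cycle},\mathrm{complete}\}$ minimizing $L$ over that set; combined with the first step --- every minimizer of $L$ lies in this set --- it follows that the star is the unique minimizer of $L$ on all of $\mathfrak{M}(m)$, which is the assertion of Theorem \ref{Uniqueness of money}.

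I do not expect a genuine obstacle: all the real content is in Theorem \ref{Emergence of money}, and what remains is the elementary principle that a strictly positive linear functional is minimized only at $\preceq$-minimal points of a finite partially ordered set, followed by a three-way arithmetic comparison. The only point worth a sentence of care is the standing identification of mechanisms that differ by a relabeling of commodities: one should observe that both $\tau$ and $\pi$ are invariant under such relabeling, so $L$ descends to the quotient and the minimization --- and hence the uniqueness claim --- is to be read there.
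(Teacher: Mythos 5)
Your proposal is correct and follows essentially the same route as the paper: the paper treats Theorem \ref{Uniqueness of money} as an immediate consequence of Theorem \ref{Emergence of money}, with the same observation that a minimizer of $\lambda\pi+\mu\tau$ must be $\preceq$-minimal and the same two inequalities ($m>3+2\lambda/\mu$ and $m(m-1)>4+\mu/\lambda$) appearing in the remark following the theorem. Your extra sentence on invariance of $\tau,\pi$ under relabeling is a harmless clarification of the paper's standing identification of relabeled mechanisms.
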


Theorem \ref{Uniqueness of money} says that, so long as traders ascribe
positive weight to \emph{both} time and price complexity considerations, the
star mechanism with money is the unique optimal mechanism as soon as the
number of commodities is sufficiently large.

\begin{remark}
In fact $m_{0}$ does not have to be too large. We only require $4\lambda
+2\mu<2\lambda+(m-1)\mu$ and $4\lambda+2\mu<m(m-1)\lambda+\mu$ for the star to
beat the cycle and complete mechanisms, respectively; which may be rearranged%
\[
m>2\left(  \frac{\lambda}{\mu}\right)  +3\text{ and }m^{2}-m>\frac{\mu
}{\lambda}+4
\]
So, for example, if at least $10\%$ weight is accorded to both $\pi$ and
$\mu,$ then $\lambda/\mu$ and $\mu/\lambda$ can each be at most $9$ and the
above inequalities will hold if $m>18+3$ and $m^{2}-m>9+4$; thus $m_{0}=22$
does the job.
\end{remark}

\begin{remark}
\label{inside complexity} Our notion $\pi_{ij}(M)$ of price complexity counts
the number of components of the market state $b$ that are needed for the
computation of price ratios $p_{i}/p_{j}$. The difficulty of that computation
is not taken into account. However, even if it were, the star would perform
well relative to the other $G$-mechanisms. The intuition for this is implicit
in our earlier discussion. First recall that, by equation (\ref{MGr}), the
returns to the traders are immediate from the price ratios $p_{i}/p_{j}.$ As
for these ratios, they are given by equation (\ref{=priceformula}), which
entails spanning trees of the graph $G.$ For most graphs $G$ this leads to
complicated expressions for $p_{i}/p_{j}$. But, as was said, the star has
unique spanning trees for each commodity, with heavy overlaps between them.
This enables cancellations in the right-hand of equation \ref{MGr}, yielding
the simple formula $p_{i}/p_{j}=b_{mi}b_{jm}/b_{im}b_{mj}$ (see the first
table in Section \ref{Formal model}) whose \textquotedblleft computational
complexity\textquotedblright\ is hardly worth the mention.
\end{remark}

\begin{remark}
\label{clearinghouse} If \textquotedblleft netting\textquotedblright\ of
commodities were permitted, an individual could trade $i$ for $j$ in one go,
instead of trading iteratively along the path that connects $i$ to $j.$ This
reduction of time complexity is quite illusory, however. It requires
bookkeeping --- carried out by a centralized clearing-house ? --- to determine
the net due to, or owed by, any individual across his many trades. Thus
netting simply transfers time complexity to the complexity of bookkeeping.
There is another complication with netting. What if someone is unable to honor
his net debts? Are his final holdings to be confiscated? How, and at what
cost? And furthermore how are the confiscated goods to be apportioned among
the many claimants? Without being formal about it, it should be intuitively
clear that such a clearing-house is complicated and costly to operate, and yet
it is unavoidable if \textquotedblleft netting\textquotedblright\ is to be
accomodated; or, for that matter, borrowing and default, or other variations
of the (\textquotedblleft value-for-value\textquotedblright\ and
\textquotedblleft on-the-spot\textquotedblright) trade that prevails in our
G-mechanisms. More importantly, note that such variations cannot even be
defined except in the context of some \emph{given }basic\emph{ }mechanism. Our
analysis pertains to basic mechanisms, which are prerequisite to the variations.
\end{remark}

\section{Characterization of G-mechanisms \label{characterization}}

Our analysis above was carried out on the domain $\mathfrak{M}(m)$. We now
show how to derive $\mathfrak{M}(m)$ from a more general standpoint. To this
end, let us first define an \emph{abstract }exchange mechanism on commodity
set $\left\{  1,\ldots,m\right\}  $ and with trading opportunities given by a
directed, connected graph $G$ on $\left\{  1,\ldots,m\right\}  $. Such a
mechanism allows individuals in $\left\{  1,\ldots,n\right\}  $ to trade by
means of quantity offers in each commodity $i$ across all edges $ij$ in $G.$
(Here $m$ is fixed and $n$ can be arbitrary.) The offer of any trader can thus
be viewed as an $m\times m$ non-negative matrix in the space%
\[
S=\left\{  a:a_{ij}=0\text{ if }ij\notin G\text{, }a_{ij}\geq0\text{
otherwise}\right\}
\]
Define
\[
S_{+}=\left\{  a\in S:a_{ij}>0\text{ if }ij\in G\right\}
\]
Also define
\[
\overline{a}=\left(  \overline{a_{1}},\ldots,\overline{a_{m}}\right)
\]
where $\overline{a_{i}}=\sum_{j}a_{ij}$ is the $i$-th row sum of $a$ and
denotes the total amount of commodity $i$ involved in sending offer $a_{i}.$
Let $S^{n}$ be the $n$-fold Cartesian product of $S$ with itself, and (with
$\boldsymbol{a=}$ $(\boldsymbol{a}^{1},\ldots,\boldsymbol{a}^{n})$) let
\[
S(n)=\left\{  \boldsymbol{a}\in S^{n}:\sum_{\alpha=1}^{n}\boldsymbol{a}%
^{\alpha}\in S_{+}\right\}
\]
denote the $n$-tuples of offers that are positive on aggregate. Also let
$C=\mathbb{R}_{+}^{m}$ denote the \emph{commodity space}; and $C^{n}$ its
$n$-fold product.

An \emph{exchange mechanism} $M$, for a given set $\left\{  1,\ldots
,m\right\}  $ of commodities and with trading opportunities in accordance with
the graph $G$, is a collection of maps (one for each positive integer $n$)
from $S(n)$ to $C^{n}$ such that, if\textbf{ }$\boldsymbol{a\in}$ $S(n)$ leads
to returns $\mathbf{r}$ $\in C^{n}$, then we have%
\[
\sum_{\alpha=1}^{n}\overline{\boldsymbol{a}}^{\alpha}=\sum_{\alpha=1}%
^{n}\boldsymbol{r}^{\alpha},
\]
i.e., there is \emph{conservation of commodities}. It is furthermore
understood, in keeping with our concept of opportunity $ij,$ that for an offer
$a\in S$ whose only non-zero components are $\left\{  a_{ij}:j=\ldots\right\}
$, the return will consist exclusively of commodity $j.$

We shall impose four conditions on the mechanisms which reflect
\textquotedblleft convenience\textquotedblright\ and \textquotedblleft
fairness\textquotedblright\ in trade. The first condition is that the
mechanism must be blind to all other characteristics of a trader except for
his offer (and rules out discrimination on irrelevant grounds):

\begin{condition}
[Anonymity]\label{Anonymity} Let $(\boldsymbol{r}^{1},\ldots,\boldsymbol{r}%
^{n})\in C^{n}$ denote the returns from $(\boldsymbol{a}^{1},\ldots
,\boldsymbol{a}^{n})\in S(n)$\textbf{. }Then for any permutation $\sigma$ the
returns from $(\boldsymbol{a}^{\sigma(1)},\ldots,\boldsymbol{a}^{\sigma(n)})$
are $(\boldsymbol{r}^{\sigma(1)},\ldots,\boldsymbol{r}^{\sigma(n)}).$
\end{condition}

The second condition is that if any trader pretends to be two different
persons by splitting his offer, the returns to the others is unaffected. In
its absence, traders would be faced with the complicated task of tracking
everyone's offers. It is easier (and sufficient!) to state this condition for
the \textquotedblleft last\textquotedblright\ trader.

\begin{condition}
[Aggregation]\label{Aggregation} Suppose $\boldsymbol{a}\in S(n)$ and
$\boldsymbol{b}$ $\in S(n+1)$ are such that $\boldsymbol{a}^{\alpha
}=\boldsymbol{b}^{\alpha}$ for $\alpha<n$ and $\boldsymbol{a}^{n}%
=\boldsymbol{b}^{n}+\boldsymbol{b}^{n+1}$ . Let $\mathbf{r}$,$\mathbf{s}$
denote the returns that accrue from $\boldsymbol{a}\mathbf{,}\boldsymbol{b}$
respectively. Then $\mathbf{r}^{\alpha}=\boldsymbol{s}^{\alpha}$ for
$\alpha<n.$
\end{condition}

\emph{Anonymity }and\emph{ Aggregation} immediately imply that, regardless of
the size $n$ of the population, the return to any trader may be written
$r(a,b),$where $a\in S$ is his own offer and $b\in S_{+}$ is the aggregate of
all offers.

Let $\nu$ denote his \emph{net trade}:%
\[
\nu(a,b)=r(a,b)-\overline{a}%
\]

The third condition is \emph{Invariance. }Its main content is that the
\textit{maps} which comprise $M$ are invariant under a change of units in
which commodities are measured. This makes the mechanism much simpler to
operate in: one does not need to keep track of seven pounds or seven kilograms
or seven tons, just the numeral $7$ will do.

In what follows, we will consistently use $a$ for an individual's offer and
$b$ for the positive aggregate offer; so, when we refer to the pair $a,b$ it
will be implicit that $a\in S$, $b\in S_{+}$ and $a\leq b$.

\begin{condition}
[Invariance]\label{Invariance}\ $\nu(\lambda a,\lambda b)=$ $\lambda\nu(a,b)$
for all $a,b$ and any $m\times m$ strictly positive diagonal matrix $\lambda.$
\end{condition}

The fourth, and last, condition is that no trader can get strictly less than
his offer (otherwise, such unfortunate traders would tend to abandon the mechanism).

\begin{condition}
[Non-dissipation]\label{Non-dissipation} If $\nu(a,b)\neq0,$ then $\nu
_{i}(a,b)>0$ for some component $i.$
\end{condition}

It turns out that these four conditions categorically determine a unique mechanism.

\begin{theorem}
\label{characterization theorem}Let $M$ be an exchange mechanism on commodity
set $\left\{  1,\ldots,m\right\}  $ and let $G$ be the (directed, connected)
graph induced by the trading opportunities in $M.$ If $M$ satisfies
\emph{Anonymity}, \emph{Aggregation}, \emph{Invariance} and
\emph{Non-dissipation}, then $M=M_{G}.$
\end{theorem}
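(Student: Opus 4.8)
The plan is to fix the aggregate offer, linearize the return map, repackage everything into a single linear operator, and then extract the price vector by a no--arbitrage/separation argument.

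First I would pin down the functional form of the return. By Conditions~\ref{Anonymity} and~\ref{Aggregation} the return to a trader is a function $r(a,b)$ of his own offer $a\in S$ and the aggregate offer $b\in S_{+}$ alone. Splitting one trader's offer $a=a'+a''$ into two traders (Condition~\ref{Aggregation}) together with conservation of commodities forces $r(a'+a'',b)=r(a',b)+r(a'',b)$; since returns lie in $C=\mathbb{R}_{+}^{m}$, this additive map is monotone on the box $\{0\le a\le b\}$ and hence linear. Combined with the opportunity convention --- an offer supported on the single edge $ij$ returns only commodity $j$ --- this gives
\[
r_{j}(a,b)=\sum\nolimits_{i}c_{ij}(b)\,a_{ij}
\]
for scalars $c_{ij}(b)\ge0$, with $c_{ij}(b)=0$ unless $ij\in G$; and applying Condition~\ref{Non-dissipation} to the offer $t\,e_{ij}$ (here $e_{ij}\in S$ places unit weight on edge $ij$), whose net trade $t\,c_{ij}(b)\,e_{j}-t\,e_{i}$ is nonzero, forces $c_{ij}(b)>0$ on every edge. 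The goal then becomes to prove $c_{ij}(b)=p_{i}(b)/p_{j}(b)$, which by~(\ref{MGr}) is exactly the assertion $M=M_{G}$.

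Next, fixing $b\in S_{+}$ and writing $c_{ij}=c_{ij}(b)$, I would study the linear map $L:\mathbb{R}^{E_{G}}\rightarrow\mathbb{R}^{m}$ given by $L(a)_{j}=\sum_{i}c_{ij}a_{ij}-\sum_{k}a_{jk}$, so that $L(a)=\nu(a,b)$ whenever $a\le b$. The crux is the cone $\mathcal{C}=L(S)$. On one hand, a lone trader offering $b$ faces aggregate $b$, so conservation gives $r(b,b)=\overline{b}$, i.e. $L(b)=0$, while $b\gg0$; since $b-\varepsilon e_{ij}\in S$ for small $\varepsilon>0$, we get $-\varepsilon L(e_{ij})=L(b-\varepsilon e_{ij})\in\mathcal{C}$, so $\mathcal{C}$ contains both $L(e_{ij})$ and $-L(e_{ij})$ for every edge and hence all of $W:=L(\mathbb{R}^{E_{G}})$; thus $\mathcal{C}=W$ is a linear subspace. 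On the other hand, Condition~\ref{Non-dissipation} says precisely that $L(a)\le0$ implies $L(a)=0$ for $a\ge0$ (a general $a\ge0$ reduces to $\varepsilon a\le b$, to which the condition applies, using linearity of $L$), so $\mathcal{C}\cap(-\mathbb{R}_{+}^{m})=\{0\}$. Hence $W$ is a linear subspace meeting the nonnegative orthant only at the origin. Then I would invoke a strict separation argument (Stiemke's lemma): such a $W$ admits $p\gg0$ with $p\perp W$ (separate $W$ from the unit simplex; finiteness of the supremum over a subspace forces $p\perp W$, strict separation forces $p\gg0$). Consequently $\sum_{j}p_{j}L(a)_{j}=0$ for all $a$, and reading off the coefficient of $a_{ij}$ gives $p_{j}c_{ij}=p_{i}$, i.e. $c_{ij}=p_{i}/p_{j}$. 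Substituting back into $L(b)=0$ yields $\sum_{i}p_{i}b_{ij}=p_{j}\sum_{k}b_{jk}=\sum_{i}p_{j}b_{ji}$ for each $j$, which is the value--conservation equation~(\ref{=price}); by uniqueness of its solution ray, $p=p(b)$ up to scale, so $c_{ij}(b)=p_{i}(b)/p_{j}(b)$ and $r_{j}(a,b)=\sum_{i}(p_{i}/p_{j})a_{ij}$. Since $b\in S_{+}$ was arbitrary, $M=M_{G}$.

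The main obstacle is the middle step --- showing that the cone $\mathcal{C}$ of attainable net trades collapses to a linear subspace. That is where conservation of commodities, which places the strictly positive vector $b$ in $\ker L$, meets Non-dissipation, which annihilates the nonpositive part of $\mathcal{C}$; once $\mathcal{C}$ is a subspace disjoint from the orthant, separation hands over the supporting price and the rest is bookkeeping. I would also take some care with the additivity--to--linearity passage in the first step (it rests on monotonicity, which in turn rests on nonnegativity of returns); Condition~\ref{Invariance} does not seem to enter this particular implication.
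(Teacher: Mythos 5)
Your proposal is correct, but it takes a genuinely different route from the paper. The paper first proves an extension/linearity result for $\nu$ (using \emph{Aggregation}, conservation, the Aczel--Dhombres theorem, and \emph{Invariance} to go beyond $a\leq b$), then exploits connectedness to convert $i$ into $j$ along a directed path (Proposition \ref{Convertibility}), shows all such conversions yield a common exchange ratio via a no-arbitrage argument (Lemma \ref{Common exchange rate}), establishes budget balance $p(b)\cdot\nu(a,b)=0$ by induction on the support of the net trade (Lemma \ref{budget balance}), and finally splits each trader into $m$ fictitious traders by target commodity to recover (\ref{MGr}) and (\ref{=price}). You instead derive the explicit form $r_{j}(a,b)=\sum_{i}c_{ij}(b)a_{ij}$ directly from additivity (which, as in the paper, comes from \emph{Aggregation} \emph{plus} conservation, and needs the monotonicity/boundedness argument you flag --- exactly the Aczel--Dhombres step) together with the single-edge return convention, and then run a fixed-$b$ linear-algebra/duality argument: $\nu(b,b)=0$ with $b$ strictly positive on every edge makes the attainable net-trade cone $L(S)$ a full subspace $W$, \emph{Non-dissipation} makes $W$ miss the orthant, and Stiemke's lemma produces $p\gg0$ with $p_{j}c_{ij}=p_{i}$, after which conservation pins $p$ to the unique ray of (\ref{=price}). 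What your route buys: it never needs to extend $r$ beyond $a\leq b$, it replaces the path-tracing and induction with a single separation argument, and --- as you observe --- it does not invoke \emph{Invariance} at all, so it in fact shows that condition is redundant for this particular theorem (it is of course harmless to prove the statement from a subset of its hypotheses). What the paper's route buys: its intermediate results (convertibility, uniqueness of exchange ratios, budget balance) are economically interpretable and reused in the completion argument and elsewhere, and it works directly from no-arbitrage rather than through an explicit matrix representation. The only points to make fully explicit in a write-up are the ones noted above: that additivity uses conservation alongside \emph{Aggregation}, that $W=L(S)$ requires $b\in S_{+}$ (so $b-\varepsilon e_{ij}\in S$) and $\nu(b,b)=0$ (a legitimate $n=1$ instance of conservation), and that \emph{Non-dissipation} is applied after rescaling $\varepsilon a\leq b$, which your homogeneity step licenses.
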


\subsubsection{Comments on the Conditions \label{no arbitrage}}

\emph{Aggregation} does not imply that if two individuals were to merge, they
would be unable to enhance their \textquotedblleft
oligopolistic\ power\textquotedblright. For despite the \emph{Aggregation}
condition, the merged individuals are free to \emph{coordinate} their actions
by jointly picking a point in the Cartesian product of their action spaces.
Indeed all the mechanisms we obtain display this \textquotedblleft
oligopolistic effect\textquotedblright, even though they also satisfy
\emph{Aggregation}.

It is worthy of note that the cuneiform tablets of ancient Sumeria, which are
some of the earliest examples of written language\ and arithmetic, are in
large part devoted to records and receipts pertaining to economic
transactions. \emph{Invariance} postulates the "numericity" property of the
maps $r(a,b)$ (equivalently, $\nu(a,b)$) making them independent of the
underlying choice of units, and this goes to the very heart of the
quantitative measurement of commodities. In its absence, one would need to
figure out how the maps are altered when units change, as they are prone to
do, especially in a dynamic economy. This would make the mechanism cumbersome
to use.

\emph{Non-dissipation }(in conjunction with \emph{Aggregation, Anonymity}, and
the conservation of commodities) immediately implies \emph{no-arbitrage: }for
any $a,b$ neither $\nu(a,b)\gvertneqq0$ nor $\nu(a,b)\lvertneqq0.$ To check
this, we need consider only the case $a\leq b$ and rule out $\nu
(a,b)\gvertneqq0.$ Denote $c=b-a.$ Then $\nu(a,b)+\nu(c,b)=\nu(a+c,b)=\nu
(b,b)=0,$ where the first equality follows from \emph{Aggregation}, and the
last from conservation of commodities. But then $\nu(a,b)\gvertneqq0$ implies
$\nu(c,b)\lvertneqq0,$ contradicting \emph{Non-dissipation.}

\begin{remark}
\label{touch of rationality} There is a \textquotedblleft
touch\textquotedblright\ of rationality, imputed to the traders, in these
conditions. \emph{Non-dissipation} implies that commodities are liked and an
uncompensated loss of them is not tolerable. (This is compatible with
\emph{any} monotonic utility function and hardly very restrictive.)
\emph{Anonymity }rules out discrimination among traders on extra-economic
grounds. \emph{Aggregation} and \emph{Invariance, }as well our notion of the
complexity of a mechanism, reflect the fact that traders find complicated
computations inconvenient. These requirements are minimalistic and an
order-of-magnitude milder than the standard utilitarian (or other behavioral)
considerations. In fact, our mechanisms permit \emph{arbitrary }utility
functions to be ascribed to the traders in order to build a game (see, e.g.,
\cite{Shapley: 1976}, \cite{Dubey-Shubik:1978} \cite{Sahi-Yao:1989} and the
references therein).
\end{remark}

\subsubsection{Alternative Characterizations of G-Mechanisms}

The formula (\ref{MGr}) for the return function of a $G$-mechanism immediately
implies
\begin{equation}
p(b)=p(c)\Longrightarrow r(a,b)=r(a,c)\text{ for all }a\geq0\text{ and }b,c>0
\label{PrMed}%
\end{equation}
In \cite{Dubey-Sahi: 2003}, a mechanism was supposed to produce both trades
and prices, based upon everyone's offers; and the property (\ref{PrMed}) was
referred to as \emph{Price Mediation}. It was shown in \cite{Dubey-Sahi: 2003}
that $\mathfrak{M}(m)$ is characterized by \emph{Anonymity, Aggregation,
Invariance, Price Mediation }and \emph{Accessibility }(the last representing a
weak form of continuity)$.$

An alternative characterization of $\mathfrak{M}(m)$, which assumes -- as we
do here -- that a mechanism produces only trades (and no prices), was given in
\cite{Dubey-Sahi-Shubik: 2014}. Here we have presented a simplified version of
the analysis in \cite{Dubey-Sahi-Shubik: 2014}, and established that $M_{G}$
arises \textquotedblleft naturally\textquotedblright\ \emph{once }we assume
that trading opportunities are restricted to pairwise exchange of commodities,
i.e., correspond to the edges of a connected graph $G.$ In contrast, in both
\cite{Dubey-Sahi: 2003} and \cite{Dubey-Sahi-Shubik: 2014}, the opportunity
structure $G$ was \textit{itself} an object of deduction, starting from a more
abstract viewpoint; and is the theme of our forthcoming companion paper
\cite{Dubey-Sahi-Shubik:2015}, (extracted from \cite{Dubey-Sahi-Shubik: 2014}).

\section{Proofs}

\subsection{Graphs with complexity $\leq4$}

Let $G$ be a connected graph on $\left\{  1,\ldots,m\right\}  $ as in Section
\ref{Formal model}, and write
\[
p_{i}\left(  G\right)  =p_{i}\left(  M_{G}\right)  \text{,\quad}p_{ij}\left(
G\right)  =p_{ij}\left(  M_{G}\right)  \text{ and }\pi\left(  G\right)
=\pi\left(  M_{G}\right)
\]
If $G$ consists of a single vertex then $\pi\left(  G\right)  =0$ by definition.

\begin{lemma}
\label{lm:cycle}If $G$ is a cycle then $\pi\left(  G\right)  =2.$
\end{lemma}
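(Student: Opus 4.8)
The plan is to pin down the price ratios of a cycle mechanism explicitly via the spanning-tree formula (\ref{=priceformula}) and then merely count influential coordinates.

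First I would determine the rooted spanning trees of a directed cycle $G$ with edge set $\left\{12,23,\ldots,m1\right\}$. For each vertex $i$ there is \emph{exactly one} spanning tree rooted at $i$: the subgraph $T_{i}$ obtained by deleting the single edge $i(i+1)$, all vertex labels read modulo $m$. Indeed, in any spanning tree rooted at $i$ the root $i$ has out-degree $0$, which forces the deletion of $i(i+1)$; once that edge is gone the remaining $m-1$ edges already form a single directed path $i+1\to i+2\to\cdots\to i$ visiting every vertex, so deleting any further edge would leave some vertex with no path to $i$. Hence $\mathcal{T}_{i}=\left\{T_{i}\right\}$, and, writing $B=\prod_{k=1}^{m}b_{k,k+1}$ for the product over all $m$ edges of the cycle, (\ref{=priceformula}) gives $p_{i}=b_{T_{i}}=\prod_{k\neq i}b_{k,k+1}=B/b_{i,i+1}$.

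Next, for any ordered pair $i\neq j$ this yields $p_{i}/p_{j}=b_{j,j+1}/b_{i,i+1}$, the entry recorded in the table of Section~\ref{Formal model}. Because $i\neq j$, the edges $i(i+1)$ and $j(j+1)$ are distinct, so $b_{i,i+1}$ and $b_{j,j+1}$ are two distinct coordinates of $b$; and each is influential in the map $b\mapsto b_{j,j+1}/b_{i,i+1}$, since perturbing just one of them (holding the other fixed) strictly changes the value. No other coordinate of $b$ occurs in this expression. Hence $\pi_{ij}(M_{G})=2$ for every ordered pair $i\neq j$, and taking the maximum over such pairs gives $\pi(G)=\pi(M_{G})=2$.

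The argument presents no real obstacle; the only points requiring care are the uniqueness of the rooted spanning tree — which is precisely what prevents $p_{i}$ from acquiring additive structure and hence spurious influential variables — and the verification that, after the cancellation of $B$ in the ratio, exactly two coordinates survive and both genuinely matter. The reasoning is uniform in $m\geq 2$.
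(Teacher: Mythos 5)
Your proof is correct and follows essentially the same route as the paper: identify the unique spanning tree rooted at each vertex of the cycle, obtain $p_{i}=B/b_{i,i+1}$ from (\ref{=priceformula}), cancel to get $p_{i}/p_{j}=b_{j,j+1}/b_{i,i+1}$, and count the two influential coordinates. The only difference is that you spell out the uniqueness of the rooted tree and the influentiality check, which the paper leaves implicit.
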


\begin{proof}
Each vertex $i$ in a cycle has a unique outgoing edge, and we denote its
weight by\footnote{This is a departure from our convention heretofore that $a$
shall refer to an individual's offer, and $b$ to the aggregate offer; but
there should be no confusion.} $a_{i}$. For each $i$ we have $p_{i}%
=b_{G}/a_{i}$ where $b_{G}=\prod\nolimits_{ij\in G}b_{ij}=\prod\nolimits_{i}%
a_{i}$ as in (\ref{=priceformula}); hence $p_{i}/p_{j}=a_{j}/a_{i}$ and the
result follows.
\end{proof}

By a \emph{chorded} \emph{cycle} we mean a graph that is a union $G=C\cup P$
where $C$ is a cycle and $P$, the chord, is a path that connects two distinct
vertices of $C$, but which is otherwise disjoint from $C$.

\begin{lemma}
If $G=C\cup P$ is a chorded cycle then $\pi\left(  G\right)  =4$.
\end{lemma}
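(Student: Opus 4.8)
The plan is to use the extreme rigidity of a chorded cycle: formula (\ref{=priceformula}) involves only a one‑parameter family of spanning trees, and after cancellation every price ratio lives on at most four coordinates of $b$.

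\textbf{Reducing to a theta graph.} First I would observe that a chorded cycle $G=C\cup P$ is a directed theta graph: writing the chord as a directed path from a vertex $u\in C$ to a vertex $v\in C$ with $u\neq v$, and cutting $C$ at $u,v$, the graph $G$ is the union of three internally disjoint directed paths $C_{1}\colon u\to v$ and $C_{2}\colon v\to u$ (which reassemble into $C$) and $P\colon u\to v$ (the chord); relabelling $u,v$ if necessary loses no generality. Hence every vertex of $G$ has out-degree $1$ except $u$, whose two out-edges $e_{1}$ (the first edge of $C_{1}$) and $e_{1}'$ (the first edge of $P$) are the only place where a spanning in-tree has any freedom. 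Note that $C_{1}$ and $P$ cannot both be single edges, since $G$ is simple.

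\textbf{Writing down all prices.} A spanning in-tree rooted at $r$ must use the unique out-edge of every non-root vertex, so when $r\neq u$ it is determined by the single binary choice of $e_{1}$ versus $e_{1}'$ at $u$. Tracing out-edges from each vertex and checking that the trace reaches $r$ without closing a loop shows: the choice $e_{1}$ yields a tree precisely when $r$ lies on $C$, and the choice $e_{1}'$ yields a tree precisely when $r$ lies on $P\cup C_{2}$. Writing $\gamma_{r}$ for the weight of the out-edge of $r$ (for $r\neq u$) and using $b_{G}=b_{e_{1}}b_{e_{1}'}\prod_{r\neq u}\gamma_{r}$, formula (\ref{=priceformula}) gives $p_{u}=\prod_{r\neq u}\gamma_{r}$ and, for $r\neq u$,
\[
p_{r}=\varepsilon_{r}\prod_{r'\neq u,\ r'\neq r}\gamma_{r'},
\qquad
\varepsilon_{r}=
\begin{cases}
b_{e_{1}} & \text{if }r\text{ interior to }C_{1},\\
b_{e_{1}'} & \text{if }r\text{ interior to }P,\\
b_{e_{1}}+b_{e_{1}'} & \text{if }r\in\{v\}\cup(\text{interior of }C_{2}),
\end{cases}
\]
the last case being exactly the roots that admit both choices at $u$.

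\textbf{Counting influential coordinates.} Dividing, $p_{i}/p_{j}=\varepsilon_{i}\gamma_{j}/(\varepsilon_{j}\gamma_{i})$ for $i,j\neq u$ (and $p_{u}/p_{j}=\gamma_{j}/\varepsilon_{j}$), all common $\gamma$-factors having cancelled. The surviving coordinates lie among $\gamma_{i},\gamma_{j},b_{e_{1}},b_{e_{1}'}$, which are four distinct coordinates of $b$ (the $\gamma$'s are out-edges of $i,j\neq u$, hence distinct from each other and from $u$'s out-edges); so $\pi_{ij}(G)\le 4$ for every pair and $\pi(G)\le 4$. For the reverse inequality take $j=v$ (so $\varepsilon_{j}=b_{e_{1}}+b_{e_{1}'}$) and let $i$ be an interior vertex of $C_{1}$ if $C_{1}$ has length $\ge 2$, otherwise an interior vertex of $P$ — one exists since $C_{1}$ and $P$ are not both single edges. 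Then $p_{i}/p_{j}=b_{e_{1}}\gamma_{j}/((b_{e_{1}}+b_{e_{1}'})\gamma_{i})$ (or the analogue with $b_{e_{1}'}$), and differentiating shows each of $\gamma_{i},\gamma_{j},b_{e_{1}},b_{e_{1}'}$ is influential; hence $\pi_{ij}(G)=4$ and $\pi(G)=4$.

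\textbf{Main obstacle.} The one step requiring genuine care is the in-tree enumeration, in particular the observation that it is exactly the roots on the ``$C_{2}$-side'' that admit both choices at $u$ — this is what produces the $b_{e_{1}}+b_{e_{1}'}$ term, and hence the value $4$ rather than $2$ — together with the bookkeeping for the degenerate shapes in which one of $C_{1},P,C_{2}$ is a single edge. Everything else is routine manipulation of (\ref{=priceformula}).
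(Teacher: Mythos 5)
Your proof is correct and takes essentially the same route as the paper's: both exploit that the chord's initial vertex $u$ is the only vertex with more than one outgoing edge, enumerate the one or two spanning in-trees per root (the binary choice $e_{1}$ vs.\ $e_{1}'$ at $u$), and conclude that every ratio $p_{i}/p_{j}$ involves only the four weights $\gamma_{i},\gamma_{j},b_{e_{1}},b_{e_{1}'}$, with some pair attaining exactly four because of the $b_{e_{1}}+b_{e_{1}'}$ term. The only (cosmetic) difference is the choice of witness pair for the lower bound: the paper uses the two out-neighbours of $u$, while you use an interior vertex of $C_{1}$ or $P$ together with the chord's terminal vertex $v$.
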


\begin{proof}
Let $i$ be the initial vertex of the path $P$, then $i$ has two outgoing
edges, $ij$ and $ik$ say, on the cycle and path respectively. Any vertex
$l\neq i$ has a unique outgoing edge, and we denote its weight by $a_{l}$ as
before. Let $x$ be the terminal vertex of the path $P$. If $x=j$ then $G$ has
two $j$-trees, otherwise there is a unique $j$-tree; similarly if $x=k$ then
there are two $k$-trees, otherwise there is a unique $k$-tree. Thus we get the
following table:%
\[%
\begin{tabular}
[c]{|c|c|c|c|}\hline
& $x=j$ & $x=k$ & $x\neq j,k$\\\hline
$p_{j}/b_{G}$ & $a_{j}^{-1}\left(  b_{ik}^{-1}+b_{ij}^{-1}\right)  $ &
$a_{j}^{-1}b_{ik}^{-1}$ & $a_{j}^{-1}b_{ik}^{-1}$\\\hline
$p_{k}/b_{G}$ & $a_{k}^{-1}b_{ij}^{-1}$ & $a_{k}^{-1}\left(  b_{ik}%
^{-1}+b_{ij}^{-1}\right)  $ & $a_{k}^{-1}b_{ij}^{-1}$\\\hline
\end{tabular}
\ \
\]
In every case, the ratio $p_{j}/p_{k}$ depends on all $4$ variables
$a_{j},a_{k},b_{ij},b_{ik}$, thus $\pi\left(  G\right)  \geq4$.

On the other hand, since all vertices other than $i$ have a unique outgoing
edge, it follows that if $x$ is any vertex then every $x$-tree contains all
the outgoing edges except perhaps the edges $b_{ij},b_{ik}$ and $a_{x}$ (if
$x\neq i$); thus $p_{x}$ is divisible by all other weights. It follows that
for any two vertices $x,y$ the ratio $p_{x}/p_{y}$ can only depend on the
variables $b_{ij},b_{ik},a_{x},a_{y}$. Thus we get $\pi\left(  G\right)
\leq4$ and hence $\pi\left(  G\right)  =4$ as desired.
\end{proof}

\begin{remark}
\label{T0}A special case of a chorded cycle is a graph $T_{0}$ with three
vertices that we call a chorded triangle.%
\[%
\begin{tabular}
[c]{|lll|}\hline
$3$ &  & \\
$\uparrow\downarrow$ & $\nwarrow$ & \\
$1$ & $\longrightarrow$ & $2$\\\hline
\end{tabular}
\ \ \qquad%
\begin{tabular}
[c]{|l|l|}\hline
$p_{1}$ & $b_{23}b_{31}$\\\hline
$p_{2}$ & $b_{12}b_{31}$\\\hline
$p_{3}$ & $b_{23}\left(  b_{12}+b_{13}\right)  $\\\hline
\end{tabular}
\ \ \qquad%
\begin{tabular}
[c]{|l|l|}\hline
$p_{1}/p_{2}$ & $b_{23}/b_{12}$\\\hline
$p_{2}/p_{3}$ & $b_{12}b_{31}/b_{23}\left(  b_{12}+b_{13}\right)  $\\\hline
$p_{3}/p_{1}$ & $\left(  b_{12}+b_{13}\right)  /b_{31}$\\\hline
\end{tabular}
\ \
\]

For future use we note that for each index $j$ there is an $i$ such that
$\pi_{ij}\geq3.$
\end{remark}

By a $k$-\emph{rose} we mean a graph that is a union $C_{1}\cup\cdots\cup
C_{k}$, where the $C_{i}$ are cycles that share a single vertex $j$, but which
are otherwise disjoint. Thus a $0$-rose is a single vertex and a $1$-rose is a
cycle. If $G$ is a $k$-rose for some $k\geq2$ then we will simply say that $G$
is a \emph{rose}$.$

If each cycle in a rose $G$ has exactly two vertices, \textit{i.e.,} is a
bidirected edge, then we say that $G$ is a \emph{star}.

\begin{lemma}
If $G$ is a rose then $\pi\left(  G\right)  =4$.
\end{lemma}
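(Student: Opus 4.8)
The plan is to follow the same template as the proofs of the preceding two lemmas (for the cycle and the chorded cycle): identify the spanning trees of $G$ explicitly, read off the prices $p_i$ from formula (\ref{=priceformula}), and then count the influential variables in each ratio $p_i/p_j$.

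First I would fix notation. Let $v$ be the common (``central'') vertex shared by the $k\ge 2$ cycles $C_{1},\dots,C_{k}$. Every vertex $l\neq v$ lies on exactly one of the cycles and has a \emph{unique} outgoing edge in $G$, whose weight I denote $a_{l}$ (as in Lemma \ref{lm:cycle}); the vertex $v$ has exactly $k$ outgoing edges, one into each cycle, and I write $b_{t}$ for the weight of the edge leaving $v$ along $C_{t}$. Set $A=\prod_{l\neq v}a_{l}$.

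The main obstacle — and, I expect, the only real content of the proof — is the combinatorial claim that $G$ has a \emph{unique} spanning tree rooted at each vertex. For the root $v$: every $l\neq v$ is forced to use its unique outgoing edge, so the set of all $a$-edges is the only candidate, and it is indeed an in-arborescence to $v$; hence $p_{v}=A$. For a root $x\neq v$, say $x\in C_{t}$: every $l\notin\{v,x\}$ must again use its $a$-edge, while $v$ must use exactly one of its $k$ outgoing edges, say the one into $C_{s}$. If $s\neq t$, then following the forced edges out of $v$ runs all the way around $C_{s}$ and returns to $v$, producing a directed cycle — impossible in a tree — so necessarily $s=t$. I would then check that $b_{t}$ together with all $a$-edges except $a_{x}$ does form an in-arborescence to $x$; thus this is the unique spanning tree rooted at $x$, and (\ref{=priceformula}) gives $p_{x}=b_{t}A/a_{x}$.

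Finally I would compute the ratios. For $x,y$ on the same cycle, $p_{x}/p_{y}=a_{y}/a_{x}$ (two variables); for $x\neq v$ and $y=v$, $p_{x}/p_{v}=b_{t}/a_{x}$ (two variables); and for $x\in C_{s}$, $y\in C_{t}$ with $s\neq t$, all the $b_{i}$ with $i\notin\{s,t\}$ cancel, leaving
\[
p_{x}/p_{y}=\frac{a_{y}b_{s}}{a_{x}b_{t}},
\]
a monomial in the four distinct variables $a_{x},a_{y},b_{s},b_{t}$, so all four are influential. Hence every ratio has at most four influential components, giving $\pi(G)\le 4$; and since $k\ge 2$ there exist vertices on two different cycles, so $\pi(G)\ge 4$, whence $\pi(G)=4$. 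As a consistency check, the star is the special case in which each $C_{t}$ is a bidirected edge, and the displayed formula reduces to the entry $b_{mi}b_{jm}/b_{im}b_{mj}$ of the table in Section \ref{Formal model}.
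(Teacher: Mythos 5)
Your proof is correct and follows essentially the same route as the paper's: exhibit the unique spanning tree rooted at each vertex, read off the prices from (\ref{=priceformula}), and count the surviving variables in each ratio, with the cross-cycle pair giving the maximum of four. You merely spell out the uniqueness of the rooted trees (which the paper dismisses as "easy to see") and make the upper bound $\pi(G)\le 4$ explicit for every pair; both are welcome but not a different argument.
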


\begin{proof}
Let $G$ be the union of cycles $C_{1}\cup\cdots\cup C_{k}$ with common vertex
$j$ as above. Let $a_{1},\ldots,a_{k}$ be the weights of the outgoing edges
from $j$ in cycles $C_{1},\ldots,C_{k}$ respectively, and for all other
vertices $x$ let $b_{x}$ denote the weight of the unique outgoing edge at $x.$
It is easy to see that there for each vertex $v$ of $G$ there is a unique
$v$-tree, and thus the price vectors are given as follows:%
\[
p_{j}=\prod_{x\neq j}b_{x}\text{, \quad}p_{x}=\frac{a_{i}p_{j}}{b_{x}}\text{
if }x\neq j\text{ is a vertex of }C_{i}%
\]
Thus we get%
\[
p_{j}/p_{x}=b_{x}/a_{i},\quad p_{y}/p_{x}=b_{x}a_{l}/b_{y}a_{i}\text{ if
}y\neq j\text{ is a vertex of }C_{l}%
\]
Taking $i\neq l$, we see that $p_{y}/p_{x}$ depends on $4$ variables, and
$\pi\left(  G\right)  =4$.
\end{proof}

Our main result is a classification of connected graphs with $\pi\left(
G\right)  \leq4$.

\begin{theorem}
\label{main}If $G$ is not a chorded cycle or a $k$-rose, then $\pi\left(
G\right)  \geq5$.
\end{theorem}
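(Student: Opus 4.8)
The plan is to assume that $G$ is a connected graph with $\pi(G) \le 4$ and show that $G$ must be either a chorded cycle or a $k$-rose. The key device is the combinatorial price formula (\ref{=priceformula}): $p_i = \sum_{T \in \mathcal{T}_i} b_T$, so that a weight $b_{uv}$ is influential in the ratio $p_i/p_j$ essentially when it fails to divide \emph{both} $p_i$ and $p_j$ with the same multiplicity — more precisely, when the edge $uv$ is \emph{omitted} from some spanning tree rooted at $i$ or at $j$. So I would first isolate the notion of a \emph{dispensable} edge: call an edge $e$ dispensable for root $i$ if some $i$-tree avoids $e$. Then $\pi_{ij}(G)$ is the number of edges that are dispensable for $i$ or dispensable for $j$ (one checks that dispensability for $i$ or $j$ is exactly influentiality of that weight in $p_i/p_j$, using that the $b$'s are algebraically independent). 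The bound $\pi(G) \le 4$ then says: for every ordered pair $i \ne j$, at most $4$ edges are dispensable for $i$ or for $j$.

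**Structural consequences of few dispensable edges.** An edge $uv$ is dispensable for root $i$ precisely when $G - uv$ still admits an $i$-tree, i.e. when deleting $uv$ does not disconnect $i$ from every remaining vertex along directed paths — equivalently $uv$ lies on a directed cycle through a vertex other than merely being the unique exit, or there is an alternate route. The cleanest handle: $uv$ is \emph{in}dispensable for $i$ iff every $i$-tree uses $uv$, iff deleting $uv$ destroys reachability of $u$ to $i$ (so every path from $u$ to $i$ uses $uv$). I would fix one root $i$ and count: the indispensable edges for $i$ form the "forced" part of every $i$-tree. If $G$ has $e$ edges and $m$ vertices, an $i$-tree has $m-1$ edges, so the number of dispensable-for-$i$ edges is at least $e - (m-1)$... this is not quite an equality but gives $e \le m - 1 + (\text{dispensable for } i)$. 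Summing this idea against the hypothesis for two well-chosen roots should force $e \le m + O(1)$, i.e. $G$ has very few independent cycles (small cyclomatic number $e - m + 1$). The bulk of the argument is then a finite case analysis on graphs with cyclomatic number $\le 3$ or so, showing that unless the cycles all pass through a common vertex (rose) or there are exactly two cycles sharing a path (chorded cycle), one can find a pair $i, j$ with $\ge 5$ dispensable edges.

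**The heart: locating a bad pair.** Given the earlier lemmas (cycle gives $2$; chorded cycle gives $4$; rose gives $4$), the obstruction analysis is: suppose $G$ is neither. Two cases. \emph{Case A: $G$ contains two vertex-disjoint cycles $C, C'$ joined by a path.} Then take $i$ on $C$ and $j$ on $C'$: the exit edges around $C$ not on a shortest $i$-route, the exit edges around $C'$, and the branch edges at the attachment points together give $\ge 5$ edges dispensable for $i$ or $j$ — I would pick $i, j$ to maximize the count, typically the attachment vertices themselves or their neighbors, as in Remark \ref{T0}'s observation. \emph{Case B: $G$ is "two-connected-ish" with a chord structure but the chord is itself chorded, or there are three cycles not through a common vertex (a "theta-plus" graph or a rose-with-a-chord).} Here a direct tally of dispensable edges at a suitable pair exceeds $4$. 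The main obstacle I anticipate is the bookkeeping in Case B: making sure the case split on low-cyclomatic-number connected digraphs is genuinely exhaustive (theta graphs, figure-eights, dumbbells, roses, chorded cycles, and their orientations) and that in each non-exceptional configuration one exhibits a concrete pair $i,j$ with five dispensable edges rather than four. I would organize this by first reducing to the underlying structure "suppress all vertices of out-degree exactly one that are not branch points," which collapses $G$ to a small multigraph on the branch vertices, then classify those small multigraphs; the exceptional ones are exactly the single loop-at-a-vertex family (rose) and the single double-edge family (chorded cycle).
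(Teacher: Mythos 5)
Your plan hinges on two bridges, and both break. First, the combinatorial criterion you propose for influence is not correct: an edge weight $b_{uv}$ being ``dispensable'' (omitted by some $i$-tree or some $j$-tree) is necessary but not sufficient for it to be influential in $p_{i}/p_{j}$. Writing $p_{i}=A+b_{uv}B$, $p_{j}=C+b_{uv}D$ (each tree is squarefree in the weights), influence of $b_{uv}$ is the condition $AD\neq BC$, not $A\neq0$ or $C\neq0$. Concretely, in the star with $i,j$ two leaves and $k$ a third leaf, the edge $m\rightarrow k$ lies in \emph{no} $i$-tree and \emph{no} $j$-tree, so it is dispensable for both roots under your definition, yet $p_{i}/p_{j}=b_{mi}b_{jm}/b_{im}b_{mj}$ does not involve it. Hence $\pi(G)\leq4$ does \emph{not} bound the number of dispensable edges, and the inequality you want to run (``at most $4$ dispensable edges per pair'') is unavailable. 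Second, and independently, the structural conclusion you try to extract --- that $\pi(G)\leq4$ forces $e(G)\leq m+O(1)$, i.e.\ small cyclomatic number --- is simply false: the star, and more generally any $k$-rose, has $\pi=4$ while its circuit rank $e-v+1$ is unbounded in $m$. So the reduction of the whole problem to a finite case analysis of graphs with cyclomatic number at most $3$ cannot be justified along these lines.

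The paper gets around exactly this obstruction in a different order. It first proves that $\pi$ is monotone under passing to connected subgraphs and under collapsing edges at suitable out-degree-one vertices (your ``suppression'' step is the same idea, but it requires proof: one uses a specialization of the tree polynomials for subgraphs, and the factorization $p_{k}(G)=a_{ij}\,p_{k}(G^{\prime})$ for collapses). It then computes $\pi$ for cycles, chorded cycles and roses directly, and --- crucially --- it excludes roses \emph{before} invoking circuit rank: if $G$ is not a rose and $c(G)\geq3$, one can find a non-rose connected subgraph with $c=3$, collapse it until it is rigid (which forces at most $4$ vertices), and finish with an explicit check of the handful of rigid graphs that remain, using augmentation to jump the complexity past $4$ when a chorded triangle is properly contained. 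If you want to salvage your draft, you would need (i) a correct influence criterion or, as in the paper, explicit specializations exhibiting five influential weights in each small configuration, (ii) a proof of monotonicity of $\pi$ under your suppression/minor operations, and (iii) a replacement for the false ``small circuit rank'' deduction, e.g.\ the paper's maximal-rose argument.
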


We give a brief sketch of the proof of this theorem, which will be carried out
in the rest of this section. The actual proof is organized somewhat
differently, but the main ideas are as follows.

We say that a graph $H$ is a \emph{minor} of $G$, if $H$ can be obtained from
$G$ by removing some edges and vertices, and collapsing certain kinds of
edges. Our first key result is that the property $\pi\left(  G\right)  \leq4$
is a \emph{hereditary} property, in the sense that connected minors of such
graphs also satisfy the property. The usual procedure for studying a
hereditary property is to identify the \emph{forbidden minors}, namely a set
$\Gamma$ of graphs such that $G$ fails to have the property iff it contains
one of the graphs from $\Gamma$. We identify a finite collection of such
graphs. The final step is to show that if $G$ is not a chorded cycle or a
$k$-rose then it contains one of the forbidden minors.

We note the following immediate consequence of the results of this section.

\begin{corollary}
\label{not-cycle}If $G$ is not a cycle then $\pi_{ij}\left(  G\right)  \geq4$
for some $ij$.
\end{corollary}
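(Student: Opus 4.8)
The plan is to derive this directly from Theorem \ref{main} together with the $\pi$-value computations already established for the three basic families. Suppose $G$ is not a cycle. I split into two cases according to whether $G$ is one of the "exceptional" graphs covered by Theorem \ref{main}. If $G$ is \emph{not} a chorded cycle and \emph{not} a $k$-rose, then Theorem \ref{main} gives $\pi(G)\geq 5$, which means $\pi_{ij}(G)\geq 5\geq 4$ for some ordered pair $ij$ --- in fact the pair realizing the maximum in $\pi(G)=\max_{i\neq j}\pi_{ij}(G)$. So in this case we are done immediately. It remains to handle the case where $G$ \emph{is} a chorded cycle or a $k$-rose (but still not a cycle, so in particular not a $1$-rose and, if a $k$-rose, then $k\geq 2$).

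For the remaining case I would invoke the explicit analyses in the preceding lemmas. If $G=C\cup P$ is a chorded cycle, the proof of the second unnumbered lemma (the chorded-cycle lemma) exhibits, with $i$ the initial vertex of the chord and $j,k$ its two out-neighbors, that $p_j/p_k$ depends on all four variables $a_j,a_k,b_{ij},b_{ik}$; hence $\pi_{jk}(G)=4\geq 4$ and we have the desired pair $ij$ (namely $jk$). If instead $G$ is a $k$-rose with $k\geq 2$, the proof of the rose lemma shows that, picking vertices $x$ in cycle $C_i$ and $y$ in cycle $C_l$ with $i\neq l$ (possible since $k\geq 2$), the ratio $p_y/p_x=b_x a_l/(b_y a_i)$ depends on all four variables $a_i,a_l,b_x,b_y$; hence $\pi_{yx}(G)=4$ and again we have a pair realizing $\pi_{ij}(G)\geq 4$. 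Combining the three cases covers every connected $G$ that is not a cycle, so $\pi_{ij}(G)\geq 4$ for some $ij$ in all cases.

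The only subtlety --- and the one place I would be careful in writing this up --- is the boundary behavior for very small $m$: for $m=2$ or $m=3$ the "four distinct variables $a_j,a_k,b_{ij},b_{ik}$" may collapse, and indeed the chorded triangle $T_0$ of Remark \ref{T0} is the relevant small case. But Remark \ref{T0} already records that for $T_0$ there is, for each index $j$, an $i$ with $\pi_{ij}\geq 3$, and in fact from its table $\pi_{31}=\pi_{23}=2$ while $p_3/p_1=(b_{12}+b_{13})/b_{31}$ gives $\pi_{31}$ with three influential variables $b_{12},b_{13},b_{31}$ --- wait, this is only $3$, not $4$. Since the Corollary only claims $\pi_{ij}(G)\geq 4$ for \emph{some} $ij$, and the chorded-triangle has $\pi(T_0)=4$ by the chorded-cycle lemma applied with $m=3$ (there the four weights are $b_{12},b_{13},b_{23},b_{31}$, which remain distinct), the statement still holds; one just has to be sure not to conflate $\pi(G)$ with $\pi_{ij}$ for a badly chosen $ij$. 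So the clean way to phrase the write-up is: in every case exhibit \emph{one} specific pair $ij$ together with its price-ratio formula and read off four influential components, citing the relevant earlier lemma for the formula. The main (and only real) obstacle is bookkeeping of these small cases; the structural content is entirely carried by Theorem \ref{main}.
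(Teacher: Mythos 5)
Your argument is correct and is essentially the paper's own (the paper simply records the corollary as an immediate consequence of Theorem \ref{main} together with the chorded-cycle and rose lemmas, exactly the case split you spell out: non-exceptional graphs have $\pi\geq5$, while a chorded cycle or a $k$-rose with $k\geq2$ has an explicit pair with $\pi_{ij}=4$). The small-$m$ worry you raise is harmless, since the four edge weights $a_j,a_k,b_{ij},b_{ik}$ in the chorded-cycle lemma are always distinct edges (as your own $T_0$ check confirms), so no further bookkeeping is needed.
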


\subsection{Subgraphs}

Throughout this section $G$ denotes a connected graph. We say that a graph $H$
is a \emph{subgraph} of $G$ if $H$ is obtained from $G$ by deleting some edges
and vertices.

\begin{proposition}
\label{subgraph}If $G^{\prime}$ is a connected subgraph of $G$ then
$\pi\left(  G\right)  \geq\pi\left(  G^{\prime}\right)  $.
\end{proposition}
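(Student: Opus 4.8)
The plan is to realise every price ratio of $G'$ as a limit of price ratios of $G$ under a one-parameter degeneration of the edge weights, and then transfer influential variables across the limit. Set $W=V_{G}\setminus V_{G'}$ and $E^{\circ}=E_{G}\setminus E_{G'}$, and note that since $G'$ is a subgraph of $G$ every edge with tail in $W$ lies in $E^{\circ}$. Fix a positive weighting $b=(b_{e})_{e\in E_{G}}$ and, for $t>0$, let $b^{(t)}$ agree with $b$ on $E_{G'}$ and equal $tb_{e}$ on $E^{\circ}$.

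The core computation will be to read off the behaviour of $p_{j}(G)$ at $b^{(t)}$, for $j\in V_{G'}$, from the tree formula (\ref{=priceformula}). In any spanning tree $T$ of $G$ rooted at $j$, each of the $|W|$ vertices of $W$ contributes its unique outgoing edge in $T$, and those $|W|$ distinct edges all lie in $E^{\circ}$; hence $b^{(t)}_{T}$ is divisible by $t^{|W|}$. Equality in the exponent forces every vertex of $V_{G'}\setminus\{j\}$ to use an edge of $E_{G'}$, so $T$ splits uniquely as $T=T'\sqcup F$ with $T'\in\mathcal{T}_{j}(G')$ and $F$ a spanning in-forest of $W$ rooted into $V_{G'}$ (i.e.\ each $w\in W$ has one outgoing $F$-edge, with head in $W\cup V_{G'}$, and $F$ has no directed cycle); conversely every such pair $(T',F)$ glues to a $j$-tree of $G$, and this correspondence is a bijection. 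Hence
\[
t^{-|W|}\,p_{j}(G)\big|_{b^{(t)}}\;\longrightarrow\;\Phi\cdot p_{j}(G')\qquad(t\to0^{+}),
\]
where $\Phi=\sum_{F}\prod_{e\in F}b_{e}$, the sum over in-forests as above. The point is that $\Phi$ is independent of $j$; moreover $\Phi>0$, since strong connectivity of $G$ gives a directed path from each $w\in W$ to $V_{G'}$, so such forests exist. Dividing, for any $j\neq j'$ in $V_{G'}$ we get $\;p_{j}(G)/p_{j'}(G)\big|_{b^{(t)}}\to p_{j}(G')/p_{j'}(G')$ as $t\to0^{+}$.

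I would then invoke the routine fact that influentiality survives such pointwise limits. If $c\in E_{G'}$ is an influential variable of the ratio $p_{j}(G')/p_{j'}(G')$, choose weightings of $E_{G'}$ that differ only in the coordinate $c$ and give distinct ratio values, extend both by $b^{(t)}$ on $E^{\circ}$ (keeping some fixed positive values for the remaining coordinates), and observe that for all sufficiently small $t>0$ the two resulting weightings of $E_{G}$ differ only in $b_{c}$ yet give distinct values of $p_{j}(G)/p_{j'}(G)$. Thus $c$ is influential for $G$ too, so $\pi_{jj'}(G')\le\pi_{jj'}(G)\le\pi(G)$; maximising over $j\neq j'$ in $V_{G'}$ gives $\pi(G')\le\pi(G)$.

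The one genuinely combinatorial step, and the place I expect to have to be careful, is the identification of the leading term of $p_{j}(G)\big|_{b^{(t)}}$: verifying that the spanning trees of $G$ rooted at $j\in V_{G'}$ that use exactly $|W|$ edges of $E^{\circ}$ are precisely the disjoint unions $T'\sqcup F$ of a $G'$-tree rooted at $j$ with a $W$-into-$V_{G'}$ in-forest, that the gluing is a bijection, and hence that the $W$-contribution factors out as a single positive constant $\Phi$ common to all $j\in V_{G'}$ and therefore cancels in ratios. Everything else is continuity bookkeeping. (When $W=\varnothing$ the degeneration is unnecessary: simply setting the $E^{\circ}$-weights to $0$ turns (\ref{=priceformula}) for $G$ into the corresponding formula for $G'$.)
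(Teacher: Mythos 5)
Your proof is correct and follows essentially the same route as the paper's: both extract the $G'$-prices from the spanning-tree formula for the $G$-prices, with the contribution of the vertices and edges outside $G'$ factoring out as a $j$-independent forest factor that cancels in the ratios, so that each $G'$-ratio is a specialization (here, a limit) of the corresponding $G$-ratio and hence involves no more influential variables. The only difference is technical: the paper sets the outside edge weights directly to $0$ and $1$, whereas you scale them by $t$ and let $t\to0^{+}$, which has the minor advantage of witnessing influentiality at strictly positive inputs.
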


\begin{proof}
For a vertex $i$ in $G^{\prime}$ let $p_{i}^{\prime}$ and $p_{i}$ denote its
price in $G^{\prime}$ and $G$ respectively; we first relate $p_{i}^{\prime}$
to a certain specialization of $p_{i}$.

Let $E,E^{\prime}$ be the edge sets of $G,G^{\prime}$ respectively, and let
$E_{0}$ (resp. $E_{1}$) denote the edges in $E\setminus E^{\prime}$ whose
source vertex is inside (resp. outside) $G^{\prime}$. Let $\bar{p}_{i}$ be the
specialization of $p_{i}$ obtained by setting the edge weights in $E_{0}$ and
$E_{1}$ to $0$ and $1$ respectively. Then we claim that%
\begin{equation}
p_{i}^{\prime}=\left\vert F\right\vert \bar{p}_{i}, \label{=pip}%
\end{equation}
where $F$ is the set of directed forests $\phi$ in $G$ such that

\begin{enumerate}
\item the root vertices of $\phi$ are contained in $G^{\prime}$,

\item the non-root vertices of $\phi$ consist of \emph{all} $G$-vertices not
in $G^{\prime}.$
\end{enumerate}

Indeed, consider the expression of $p_{i}$ as a sum of $i$-trees in $G$. The
specialization $\bar{p}_{i}$ assigns zero weight to all trees with an edge
from $E_{0}$. The remaining $i$-trees in $G$ are precisely of the from
$\tau\cup\phi$ where $\tau$ is an $i$-tree in $G^{\prime}$ and $\phi\in F$,
and these get assigned weight $wt\left(  \tau\right)  $. Formula (\ref{=pip})
is an immediate consequence.

Now if $i,j$ are vertices in $G^{\prime}$, then formula (\ref{=pip}) gives%
\[
\frac{p_{i}^{\prime}}{p_{j}^{\prime}}=\frac{\bar{p}_{i}}{\bar{p}_{j}}%
\]
Thus the $ij$ price ratio in $G^{\prime}$ is obtained by a
\emph{specialization} of the ratio in $G$. Consequently the former cannot
involve \emph{more} variables. Taking the maximum over all $i,j$ we get
$\pi\left(  G\right)  \geq\pi\left(  G^{\prime}\right)  $ as desired.
\end{proof}

\subsection{Collapsible edges}

We write \emph{out}$\left(  k\right)  $ for the number of outgoing edges at
the vertex $k$. In a connected graph we have \emph{out}$\left(  k\right)
\geq1$ for all vertices, and we will say $k$ is \emph{ordinary} if
\emph{out}$\left(  k\right)  =1$ and \emph{special }if \emph{out}$\left(
k\right)  >1$. Among special vertices, we will say that $k$ is \emph{binary}
if \emph{out}$\left(  k\right)  =2$ and \emph{tertiary} if \emph{out}$\left(
k\right)  =3$.

\begin{definition}
We say that an edge $ij$ of a graph $G$ is \emph{collapsible} if

\begin{enumerate}
\item $i$ is an ordinary vertex

\item $ji$ is not an edge of $G$

\item there is no vertex $k$ such that $ki$ and $kj$ are both edges of $G.$
\end{enumerate}
\end{definition}

\begin{definition}
If $G$ has no collapsible edges we will say $G$ is \emph{rigid}.
\end{definition}

If $G$ is a connected graph with a collapsible edge $ij$, we define the
$ij$-\emph{collapse} of $G$ to be the graph $G^{\prime}$ obtained by deleting
the vertex $i$ and the edge $ij$, and replacing any edges of the form $li$
with edges $lj$. The assumptions on $ij$ imply that the procedure does not
introduce any loops or double edges, hence $G^{\prime}$ is also simple (and
connected). Moreover each vertex $k\neq i$ has the same outdegree in
$G^{\prime}$ as in $G.$

\begin{lemma}
\label{collapse}If $G^{\prime}$ is the $ij$-collapse of $G$ as above, then
$\pi\left(  G\right)  \geq\pi\left(  G^{\prime}\right)  .$
\end{lemma}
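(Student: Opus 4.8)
The plan is to show that collapsing an edge is, from the point of view of price ratios, essentially a specialization of the variables of $G$, just as in the proof of Proposition \ref{subgraph}. So I would begin by fixing notation: let $ij$ be the collapsed edge, let $G'$ be the $ij$-collapse, and for a vertex $v$ of $G'$ (equivalently $v\neq i$ in $G$) write $p_v$ for its price in $G$ and $p'_v$ for its price in $G'$. The goal is to exhibit, for each pair of vertices $v,w$ of $G'$, a substitution of the $G$-edge-weights that turns $p_v/p_w$ into $p'_v/p'_w$; then $\pi_{vw}(G')\le\pi_{vw}(G)$ for every such pair, and taking the maximum over $v,w$ gives $\pi(G')\le\pi(G)$.

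**Key steps.** First I would describe a bijection between spanning trees of $G'$ and a suitable family of spanning trees of $G$. Since $i$ is ordinary, in any $v$-tree $\tau$ of $G$ the unique outgoing edge of $i$ used by $\tau$ is forced to be $ij$ (for $v\neq i$, every vertex including $i$ has outdegree $1$ in the tree, and $ij$ is $i$'s only outgoing edge). The incoming edges of $i$ used by $\tau$ are some subset of $\{li:li\in G\}$; under the collapse these become edges $lj$. Conversely a $v$-tree $\tau'$ of $G'$ uses the edge $j$-ward from exactly the "new" edges $lj$ together with possibly a genuine old edge $lj$, and here the collapsibility hypothesis is exactly what I need: condition (2), $ji\notin G$, and condition (3), no common in-neighbour $k$ of $i$ and $j$, guarantee that in $G'$ the edges incoming to $j$ that came from $\{li\}$ are disjoint (as a set of source vertices) from those already incoming to $j$ in $G$, so no information is lost and the correspondence $\tau'\leftrightarrow \tau$ is a weight-matching bijection once we identify the weight of a new edge $lj$ with the old weight $b_{li}$. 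Concretely: set $b_{ij}:=1$ in $p_v$ (its only role was to carry the tree through $i$), and identify $b_{li}$ with the weight of the collapsed edge $lj$ for each $l$; this specialization sends $p_v(G)$ to $p'_v(G')$ for every $v\neq i$. Therefore $p'_v/p'_w$ is a specialization of $p_v/p_w$, which cannot increase the number of influential variables, giving $\pi_{vw}(G')\le\pi_{vw}(G)$, and hence $\pi(G')\le\pi(G)$.

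**Main obstacle.** The delicate point is the bijection on trees and, in particular, verifying that the three collapsibility conditions are precisely what make it weight-preserving and well-defined — i.e. that after the substitution no two edges of $G'$ get their weights conflated in a way that would (spuriously) reduce variable count or, worse, fail to reproduce $p'_v$. Condition (1) ensures $ij$ is forced in every tree through $i$; conditions (2) and (3) ensure that when $li\mapsto lj$ we do not create a parallel edge with an already-existing $lj$ and do not create the reverse edge $ji$, so the source vertices of $j$'s in-edges in $G'$ are in bijection (with matching weights) with the union of $j$'s old in-edges and $i$'s old in-edges in $G$. I would spell this out carefully, perhaps via the analogue of the forest decomposition used in Proposition \ref{subgraph}: every $v$-tree of $G$ not killed by setting $b_{ij}=1$ already uses $ij$, and its restriction away from $i$, with $li$-edges relabelled $lj$, is exactly a $v$-tree of $G'$; summing over trees yields $p'_v=\bar p_v$ where $\bar p_v$ is the stated specialization of $p_v$. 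The rest of the argument is then the same one-line specialization-cannot-add-variables observation as in Proposition \ref{subgraph}.
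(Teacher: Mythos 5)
Your proposal is correct and follows essentially the same route as the paper: since $i$ is ordinary, every $k$-tree of $G$ (for $k\neq i$) contains the forced edge $ij$, and collapsing it gives a weight-matching bijection with $k$-trees of $G'$, yielding the factorization $p_{k}(G)=a_{ij}\,p_{k}(G')$ and hence identical price ratios for vertices of $G'$. The paper states this more tersely (without spelling out the role of conditions (2) and (3) in making the inverse map well-defined), but the argument is the same.
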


\begin{proof}
Let $k$ be any vertex of $G^{\prime}$ then $k$ is also a vertex of $G$. Since
$i$ is ordinary every $k$-tree in $G$ must contain the edge $ij$; collapsing
this edge gives a $k$-tree in $G^{\prime}$ and moreover every $k$-tree in
$G^{\prime}$ arises uniquely in this manner. Thus we have a factorization%
\[
p_{k}\left(  G\right)  =a_{ij}p_{k}\left(  G^{\prime}\right)  .
\]
Thus for any two vertices $k,l$ of $G^{\prime}$ we get $p_{k}\left(  G\right)
/p_{l}\left(  G\right)  =p_{k}\left(  G^{\prime}\right)  /p_{l}\left(
G^{\prime}\right)  $ and the result follows.
\end{proof}

We will say that $H$ is a\emph{ minor} of $G$ if it is obtained from $G$ by a
\emph{sequence} of steps of the following kind: a) passing to a connected
subgraph, b) collapsing some collapsible edges. By Proposition \ref{subgraph}
and Lemma \ref{collapse} we get

\begin{corollary}
\label{minor}If $H$ is a minor of $G$ then $\pi\left(  H\right)  \leq
\pi\left(  G\right)  .$
\end{corollary}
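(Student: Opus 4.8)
The plan is to argue by induction on the length of a sequence of minor operations realizing $H$ from $G$. Recall that, by definition, such a sequence $G = G_0, G_1, \ldots, G_n = H$ has the property that each $G_{t+1}$ is obtained from $G_t$ either (a) by passing to a connected subgraph, or (b) by collapsing a collapsible edge. The two ingredients needed are already in hand: Proposition \ref{subgraph} covers case (a), giving $\pi(G_t) \geq \pi(G_{t+1})$ whenever $G_{t+1}$ is a connected subgraph of $G_t$; and Lemma \ref{collapse} covers case (b), giving $\pi(G_t) \geq \pi(G_{t+1})$ whenever $G_{t+1}$ is an $ij$-collapse of $G_t$.

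First I would dispose of the base case $n = 0$, where $H = G$ and there is nothing to prove. For the inductive step, assume the claim for all minors reachable in fewer than $n$ steps, and let $H$ be reached from $G$ in $n \geq 1$ steps via $G = G_0, G_1, \ldots, G_n = H$. The one point deserving attention is that every $G_t$ must be connected, so that Proposition \ref{subgraph} and Lemma \ref{collapse} are actually applicable at each stage: in case (a) connectedness of $G_{t+1}$ is built into the operation, and in case (b) the discussion preceding Lemma \ref{collapse} records that an $ij$-collapse of a connected simple graph is again connected and simple. Granting this, I apply Proposition \ref{subgraph} or Lemma \ref{collapse} to the first step to obtain $\pi(G) = \pi(G_0) \geq \pi(G_1)$; since $H$ is a minor of $G_1$ via the remaining $n-1$ steps, the induction hypothesis yields $\pi(H) \leq \pi(G_1)$, and chaining the two inequalities gives $\pi(H) \leq \pi(G)$.

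I do not anticipate a genuine obstacle here: the corollary is a bookkeeping consequence of the two preceding results, and the only thing to be mildly careful about is that the notion of \emph{minor} was set up precisely so that connectedness (and simplicity) is preserved along the entire sequence, which is exactly the hypothesis required to invoke Proposition \ref{subgraph} and Lemma \ref{collapse} at every step.
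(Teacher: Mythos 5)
Your proof is correct and is essentially the paper's argument: the paper simply cites Proposition \ref{subgraph} and Lemma \ref{collapse} and chains them over the defining sequence of subgraph/collapse steps, which is exactly the induction you spell out. The extra care about connectedness being preserved at each stage is fine but not a point of difference.
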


\subsection{Augmentation}

Throughout this section $G$ denotes a connected graph.

\begin{notation}
We write $H\trianglelefteq G$ if $H$ is a \emph{connected} subgraph of $G$,
and write $H\vartriangleleft G$ to mean $H\trianglelefteq G$ and $H\neq G$.
\end{notation}

We say that $H\vartriangleleft G$ can be \emph{augmented} if there is a path
$P$ in $G$ whose endpoints are in $H$, but which is otherwise completely
disjoint from $H$. We refer to $P$ as an augmenting path of $H$, and to
$K=H\cup P$ as an augmented graph of $H$; note that $K$ is also connected,
\textit{i.e.} $K\trianglelefteq G$. It turns out that augmentation is always possible.

\begin{lemma}
If $H\vartriangleleft G$ then $H$ can be augmented.
\end{lemma}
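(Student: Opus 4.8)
The plan is to argue by contradiction: suppose $H \vartriangleleft G$ cannot be augmented, and derive that $H = G$, contradicting $H \neq G$. Since $G$ is connected and $H \subsetneq G$, there is at least one vertex or edge of $G$ outside $H$. I would first show that every vertex of $G$ actually lies in $H$. Indeed, take any vertex $v \notin V_H$ (if one exists); pick any vertex $u \in V_H$. By connectedness of $G$ there is a path from $u$ to $v$ in $G$; walking along it, let $i$ be the last vertex on the path that lies in $H$ before leaving $H$. From $i$ the path exits $H$; now continue the path until it first returns to $H$ — this must happen eventually, since in the worst case we can route back to $u$ through $G$ (here one uses that $G$ is connected so there is a path back to a vertex of $H$, and one takes the initial segment of it up to the first $H$-vertex). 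The sub-path from $i$ out of $H$ until its first return to $H$ at some vertex $i'$ is, after trimming to make it simple, a path $P$ with both endpoints in $H$ and otherwise disjoint from $H$, i.e. an augmenting path — contradiction. Hence $V_H = V_G$.

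Next, with $V_H = V_G$, suppose there is an edge $ij \in E_G \setminus E_H$. Then the single-edge path $P = (ij)$ has both endpoints $i, j$ in $H$ and no intermediate vertices, so it is trivially disjoint from $H$ except at its endpoints; thus $P$ augments $H$ (here the definition of augmenting path allows the degenerate path consisting of one edge, since "otherwise disjoint from $H$" is a condition only on the intermediate vertices, of which there are none). This again contradicts the assumption that $H$ cannot be augmented. Therefore $E_H = E_G$ as well, so $H = G$, contradicting $H \vartriangleleft G$.

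The one subtlety — and the step I would be most careful about — is the routing argument in the first paragraph: from a vertex that has just left $H$, one must guarantee a path that comes back to $H$ and then trim it to a simple path whose only $H$-vertices are its two endpoints. The clean way to handle this is: since $G$ is connected, there is \emph{some} walk in $G$ from the exit vertex back to any fixed $u \in V_H$; take a shortest such walk (hence a simple path), and let $i'$ be its first vertex lying in $V_H$ (which exists, as $u$ is such a vertex); the initial segment up to $i'$ has all intermediate vertices outside $H$. Prepending the edge by which we left $H$ (from $i \in V_H$) gives the desired augmenting path $P$ from $i$ to $i'$; if $i = i'$ this is a cycle through $G$-vertices outside $H$, which is still a legitimate augmenting path by the conventions in the excerpt (the definition of path explicitly permits $i = j$, yielding a cycle). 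A secondary point to check is that $P$ is genuinely \emph{not all of} the relevant part — but this is automatic since $P$ passes through at least one vertex outside $H$, so $K = H \cup P \ne H$, and $K \trianglelefteq G$ as noted. No heavier machinery is needed; the argument is purely a connectivity/path-surgery fact about directed connected graphs.
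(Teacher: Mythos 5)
Your proof is correct and follows essentially the same route as the paper: both split into the case $V_H=V_G$ (where any single edge of $G\setminus H$ is an augmenting path) and the case of a vertex outside $H$, where one builds a path that leaves $H$ and re-enters it with all intermediate vertices outside $H$. The only difference is bookkeeping: the paper secures interior-disjointness by choosing a triple $(k,P_1,P_2)$ of minimal total edge count, while you take an explicit exit edge followed by a shortest return path trimmed at its first $H$-vertex; both work, and your remark that the degenerate case $i=i'$ (an augmenting cycle) is allowed is consistent with the paper's conventions.
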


\begin{proof}
If $G$ and $H$ have the same vertex set then any edge in $G\setminus H$
comprises an augmenting path. Otherwise consider triples $\left(
k,P_{1},P_{2}\right)  $ where $k$ is a vertex not in $H$, $P_{1}$ is a path
from some vertex in $H$ to $k$, and $P_{2}$ is a path from $k$ to some vertex
in $H$. Among all such triples choose one with $e\left(  P_{1}\right)
+e\left(  P_{2}\right)  $ as small as possible. Then $P_{1}$ and $P_{2}$
cannot share any \emph{intermediate} vertices with $H$ or with each other,
else we could construct a smaller triple. It follows that $P=P_{1}\cup P_{2}$
is an augmenting path.
\end{proof}

We are particularly interested in augmenting paths for $H$ that consist of one
or two edges; we refer to these as \emph{short augmentations} of $H$.

\begin{corollary}
\label{augmentation}If $H\vartriangleleft G$ then $G$ has a minor that is a
short augmentation of $H$.
\end{corollary}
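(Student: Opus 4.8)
The plan is to combine the preceding lemma (every proper connected subgraph $H\vartriangleleft G$ can be augmented) with the minor operations from Corollary \ref{minor}, reducing a long augmenting path down to one of length at most $2$. So first I would invoke the lemma to obtain an augmenting path $P = P_1 \cup \cdots$, i.e.\ a path in $G$ with both endpoints $u,v$ in $H$ but otherwise disjoint from $H$, and set $K = H \cup P \trianglelefteq G$. If $P$ already has one or two edges we are done, since $K$ is itself a short augmentation of $H$ and is a (connected) subgraph, hence a minor, of $G$. The work is therefore in the case where $P$ has $k \geq 3$ edges, say $P = u\, w_1\, w_2 \cdots w_{k-1}\, v$ with intermediate vertices $w_1,\ldots,w_{k-1}$ none of which lie in $H$.

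The key observation is that the intermediate vertices of $P$ are ordinary \emph{within $K$}: since $P$ is disjoint from $H$ except at its endpoints, in $K = H\cup P$ each $w_\ell$ (for $1\le \ell \le k-1$) has exactly one outgoing edge, namely the next edge of $P$. One must check conditions (2) and (3) of collapsibility for an edge $w_\ell w_{\ell+1}$ of $P$ inside $K$: condition (2), that $w_{\ell+1}w_\ell \notin K$, and condition (3), that no vertex has edges into both $w_\ell$ and $w_{\ell+1}$. These could fail — for instance if $P$ revisits a neighborhood — so rather than collapsing an arbitrary interior edge I would collapse the \emph{first} interior edge that is collapsible, or argue that among all augmenting paths of minimum length the relevant edge is collapsible. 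A cleaner route: collapse $w_1 w_2$ in $K$; if conditions (2),(3) hold this is legitimate and identifies $w_1$ with $w_2$, shortening $P$ by one edge and producing a minor $K'$ of $G$ that is an augmentation of $H$ by a path of length $k-1$ (note $H$ is unchanged since $w_1 \notin H$ and $w_2 \notin H$, and collapsing does not disturb $H$'s internal structure). Iterating, after $k-2$ collapses we reach a minor that is a short augmentation of $H$.

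The main obstacle I anticipate is precisely verifying collapsibility of an interior edge of $P$ at each stage — conditions (2) and (3) are about $G$ (or the current minor $K^{(\cdot)}$), not just about the path in isolation, so a "bad" back-edge or a common predecessor could block a naive collapse. I would handle this by choosing $P$, via the lemma's minimality argument, to be a \emph{shortest} augmenting path of $H$ inside $G$: minimality already forces $P$ to have no chords and no shortcuts through $H$, and I expect it also forces the first interior edge to satisfy (2) and (3) — if $w_2 w_1$ were an edge then $u w_1$ together with $w_1 w_2 w_1$... one rules such configurations out since they would yield a shorter augmenting path or a shorter triple. Alternatively, one can first pass to the connected subgraph $K = H\cup P$ and work entirely inside it, where the interior vertices of $P$ are ordinary by construction and conditions (2),(3) for the edge $w_1w_2$ reduce to: $w_1$ has no in-edge from $w_2$ (true, as $w_2$'s only out-edge goes to $w_3$ or to $v$, and if $w_2$'s only out-edge went to $w_1$ then $w_2=w_{k-1}$, $v=w_1\in H$, contradiction) and $w_1,w_2$ have no common in-neighbor in $K$ (the only in-edge to $w_2$ is $w_1w_2$ itself, so this is automatic). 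This confirms the edge $w_1w_2$ is collapsible in $K$, and induction on $k$ finishes the proof.
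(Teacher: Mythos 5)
Your proposal is correct and follows essentially the same route as the paper: pass to $K=H\cup P$, collapse an edge of $P$ whose source is an intermediate (hence ordinary) vertex to shorten the augmenting path by one, and iterate until at most two edges remain. Your explicit verification of collapsibility conditions (2) and (3) for $w_1w_2$ inside $K$ is exactly the detail the paper's terse \textquotedblleft collapse the first edge of $P$\textquotedblright\ leaves implicit, and your observation that one should work inside $K$ rather than in $G$ is the right resolution of the worry you raise.
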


\begin{proof}
Let $K=H\cup P$ be an augmentation of $H$. If $P$ has more than two edges,
then we may collapse the first edge of $P$ in $K$. The resulting graph is a
minor of $G,$ which is again an augmentation of $H$. The result follows by iteration.
\end{proof}

\begin{lemma}
\label{complexity jump}If $K=H\cup P$ with $P=\left\{  jk,kl\right\}  $, then
for any vertex $i$ of $H$ we have $\pi_{ik}\left(  K\right)  =\pi_{ij}\left(
H\right)  +2.$
\end{lemma}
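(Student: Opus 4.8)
The plan is to directly compute the price ratio $p_i/p_k$ in $K = H \cup P$, where $P = \{jk, kl\}$ with $j, l$ vertices of $H$, $k \notin H$, and compare it to the ratio $p_i/p_j$ in $H$. First I would observe that the vertex $k$ in $K$ is ordinary: its only outgoing edge is $kl$. So any spanning tree rooted at some vertex $x$ of $H$ must contain the edge $kl$ (to connect $k$ to the rest), and must reach $k$ via the unique edge $jk$. Hence every $x$-tree in $K$ decomposes as $\tau \cup \{jk, kl\}$ where $\tau$ is obtained by deleting $k$ and its two incident edges. The subtlety is identifying what $\tau$ is: deleting $k$ from an $x$-tree of $K$, the branch that went $\cdots \to j \to k \to l$ now becomes $\cdots \to j$, with $j$ having lost its path to $x$ through $k$; but $j$ must still be connected to $x$ within $H$, so in fact $\tau$ is an $x$-tree of $H$ \emph{in which the outgoing edge at $j$ is whatever it was}, i.e. $\tau$ ranges over all $x$-trees of $H$. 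I would make this bijection precise, obtaining $p_x(K) = b_{jk}\, b_{kl}\, p_x(H)$ for every vertex $x \neq k$ of $K$ lying in $H$.

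**Deriving the complexity jump.** Applying the factorization to $x = i$ and to $x = j$, but being careful about the case $x = j$ — here $j$ \emph{is} a vertex of $H$, and I claim the same factorization $p_j(K) = b_{jk} b_{kl}\, p_j(H)$ still holds, because $j$'s trees in $K$ again must route through $k$ only as a pendant detour $j \to k \to l$ and otherwise are $j$-trees of $H$. Wait — one must check whether $j$-trees of $K$ could route $l$ back to $j$ through $k$; but the edge into $k$ is $jk$, so any tree containing $k$ uses $jk$, meaning $j \to k$, and then $k \to l$; the root is $j$, so this is fine and $\tau = (x\text{-tree of }H)$ with $x = j$. Then
\[
\frac{p_i(K)}{p_k(K)} = \frac{b_{jk} b_{kl}\, p_i(H)}{p_k(K)},
\]
and I need $p_k(K)$. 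Since $k$ is ordinary with outgoing edge $kl$, $p_k(K) = b_{kl} \cdot (\text{something})$; more precisely a $k$-tree of $K$ is obtained from an $l$-tree-analogue... actually the cleanest route: by the same collapse argument, $p_k(K)$ equals $b_{kl}$ times $p_j$ of the graph where we've rerouted, which works out to $p_k(K) = b_{kl}\, b_{jk}^{?}\cdots$. I would instead use value conservation (\ref{=price}) at node $k$: $\sum_i p_i b_{ik} = \sum_i p_k b_{ki}$, i.e. $p_j b_{jk} = p_k b_{kl}$, giving $p_k(K) = p_j(K)\, b_{jk}/b_{kl} = b_{jk}^2\, p_j(H)/1$ — hmm, let me just record $p_k(K) = b_{jk} b_{kl} p_j(H) \cdot b_{jk}/b_{kl} = b_{jk}^2 p_j(H)$. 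That gives $p_i(K)/p_k(K) = b_{kl}\, p_i(H) / (b_{jk}\, p_j(H))$, so the ratio $p_i(K)/p_k(K)$ depends on exactly the influential variables of $p_i(H)/p_j(H)$ \emph{together with} the two new variables $b_{jk}$ and $b_{kl}$, which are genuinely new (they do not appear in $H$) and genuinely influential. Hence $\pi_{ik}(K) = \pi_{ij}(H) + 2$.

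**Main obstacle.** The delicate point — and the step I would spend the most care on — is the claim that the set of influential variables of $p_i(K)/p_k(K)$ is \emph{exactly} the disjoint union of the influential variables of $p_i(H)/p_j(H)$ and $\{b_{jk}, b_{kl}\}$. The new variables are clearly influential and clearly disjoint from $E_H$. But I must also verify that no \emph{old} variable loses influentiality in passing from $H$ to $K$ (it cannot, since setting $b_{jk}, b_{kl}$ to fixed positive constants recovers a positive scalar multiple of $p_i(H)/p_j(H)$, so any two inputs distinguishing the $H$-ratio still distinguish the $K$-ratio), and — more importantly — that no old variable becomes influential that wasn't before; this is automatic because after fixing $b_{jk}, b_{kl}$ the $K$-ratio is literally a scalar multiple of the $H$-ratio in the remaining variables. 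I should also double-check the boundary subcases $j = l$ (then $P$ would be a cycle $j \to k \to j$, but the lemma's hypothesis via the augmentation construction in Corollary \ref{augmentation} may or may not permit this) and whether $i = j$ is allowed (then $\pi_{ij}(H) = \pi_{jj}(H)$, which is $0$ or undefined) — I expect the intended reading restricts $i \neq j$, and I would state that hypothesis explicitly if the surrounding text does not already supply it.
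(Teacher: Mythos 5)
The central factorization in your argument is false, and the error occurs at the very first step. You assert that every $x$-tree of $K$ (for $x$ a vertex of $H$) ``must reach $k$ via the unique edge $jk$,'' hence contains both $jk$ and $kl$. But an $x$-tree only requires a directed path \emph{from} every vertex \emph{to} the root $x$; no path is required to reach $k$, so $jk$ need not appear at all. What is forced is only that $k$'s unique outgoing edge $kl$ lies in every such tree. Moreover your correspondence also misbehaves in the other direction: an $x$-tree of $K$ may use $jk$ as the outgoing edge of $j$ (routing $j\to k\to l\to\cdots\to x$), and deleting $k$ from such a tree leaves $j$ with no outgoing edge, so what remains is \emph{not} an $x$-tree of $H$. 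The correct statement is $p_x(K)=b_{kl}\bigl(p_x(H)+b_{jk}q_x\bigr)$, where $q_x$ is a sum over spanning forests of $H$ with two components rooted at $x$ and at $j$ (with $l$ in the $x$-component); in particular $q_j=0$, so $p_j(K)=b_{kl}\,p_j(H)$ and, by your (correct) value-conservation step at $k$, $p_k(K)=b_{jk}\,p_j(H)$ --- not $b_{jk}^2p_j(H)$. A concrete check: let $H$ be the $3$-cycle $1\to2\to3\to1$, $j=1$, $l=2$, with new vertex $k$ and edges $1k,k2$. Then $p_3(K)=b_{23}b_{k2}\bigl(b_{12}+b_{1k}\bigr)$, not $b_{1k}b_{k2}b_{12}b_{23}$, and even $p_1(K)=b_{23}b_{31}b_{k2}$ carries no factor $b_{1k}$. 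Your two mistakes cancel and you land on the ratio $\frac{p_i(K)}{p_k(K)}=\frac{b_{kl}}{b_{jk}}\frac{p_i(H)}{p_j(H)}$, which is also the formula displayed in the paper's own proof; but in the example above the true ratio is $\frac{b_{k2}(b_{12}+b_{1k})}{b_{1k}b_{31}}$, which is not a monomial multiple of $p_3(H)/p_1(H)=b_{12}/b_{31}$, so that clean formula is not valid in general (the paper's proof glosses over the same trees through $jk$; the claim it does justify correctly is the one about $k$-trees).

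Because of the extra term, your resolution of the ``main obstacle'' also collapses: after fixing $b_{jk},b_{kl}$ the $K$-ratio is $\mathrm{const}\cdot\bigl(p_i(H)+b_{jk}q_i\bigr)/p_j(H)$, not a scalar multiple of $p_i(H)/p_j(H)$, so neither ``no old variable loses influence'' nor ``no old variable gains influence'' follows from your argument, and exact equality is not established. What can be salvaged --- and is all that the downstream application in Corollary \ref{CCE} uses --- is the inequality $\pi_{ik}(K)\ge\pi_{ij}(H)+2$: writing $\frac{p_i(K)}{p_k(K)}=\frac{b_{kl}}{b_{jk}}\cdot\frac{p_i(H)}{p_j(H)}+b_{kl}\cdot\frac{q_i}{p_j(H)}$, the weight $b_{kl}$ is influential (it scales a positive quantity), $b_{jk}$ is influential (the expression is strictly decreasing in $b_{jk}$ since $p_i(H)>0$), and any $H$-variable influential for $p_i(H)/p_j(H)$ stays influential because, for a pair of inputs on which the first summand changes, one can choose $b_{jk}$ so that the change is not cancelled by the second summand. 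Proving the stated \emph{equality} would further require showing that $q_i/p_j(H)$ involves no variables beyond those of $p_i(H)/p_j(H)$, a point your proof (and, as written, the paper's) does not address; so as it stands your proposal does not prove the lemma.
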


\begin{proof}
The edges $\left(  j,k\right)  $ and $\left(  k,l\right)  $ are the unique
incoming and outgoing edges at $k$. It follows that every $i$-tree in $K$ is
obtained by adding the edge $kl$ to an $i$-tree in $H$, and every $k$-tree in
$K$ is obtained by adding the edge $jk$ to a $j$-tree in $H$. Thus if $a_{jk}$
and $a_{kl}$ are the respective weights of the two edges in the path $P$ then
we have
\[
p_{i}\left(  K\right)  =a_{kl}p_{i}\left(  H\right)  ,p_{k}\left(  K\right)
=a_{jk}p_{j}\left(  H\right)  \implies\frac{p_{i}\left(  K\right)  }%
{p_{k}\left(  K\right)  }=\frac{a_{kl}}{a_{jk}}\frac{p_{i}\left(  H\right)
}{p_{j}\left(  H\right)  }%
\]
Thus the price ratio in question depends on two additional variables, and the
result follows.
\end{proof}

\begin{corollary}
\label{CCE}If $G$ contains the chorded triangle $T_{0}$ as a proper subgraph
then $\pi\left(  G\right)  \geq5.$
\end{corollary}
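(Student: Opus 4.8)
The plan is to deduce the statement from the reduction machinery already in place, namely Corollaries~\ref{augmentation} and~\ref{minor} together with Lemma~\ref{complexity jump}. Since $T_{0}$ is assumed to be a \emph{proper} subgraph of $G$, we have $T_{0}\vartriangleleft G$, so by Corollary~\ref{augmentation} the graph $G$ has a minor $K$ which is a short augmentation of $T_{0}$, i.e.\ $K=T_{0}\cup P$ where the augmenting path $P$ consists of either one or two edges. By Corollary~\ref{minor} we have $\pi\left(  G\right)  \geq\pi\left(  K\right)$, so it suffices to prove $\pi\left(  K\right)  \geq5$ in each of these two cases.

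If $P=\left\{  jk,kl\right\}  $ has two edges (with $j,l$ vertices of $T_{0}$, possibly equal, and $k$ a fresh vertex), I would invoke the final observation of Remark~\ref{T0}: for the source index $j$ there is a vertex $i$ of $T_{0}$ with $\pi_{ij}\left(  T_{0}\right)  \geq3$. Lemma~\ref{complexity jump}, applied to $K=T_{0}\cup P$, then gives $\pi_{ik}\left(  K\right)  =\pi_{ij}\left(  T_{0}\right)  +2\geq5$, hence $\pi\left(  K\right)  \geq5$. It is worth noting that this argument covers without modification the degenerate case $l=j$, in which $P$ is a $2$-cycle attached at a vertex of $T_{0}$, since $jk$ and $kl$ are still the unique incoming and outgoing edges at $k$.

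If $P=\left\{  jl\right\}  $ has a single edge, then $K$ is obtained from $T_{0}$ by adjoining one new directed edge on the same three vertices; since the edges of $T_{0}$ are $12,23,31,13$, the only possibilities are $K=T_{0}\cup\left\{  21\right\}  $ and $K=T_{0}\cup\left\{  32\right\}  $, and these are isomorphic (via $1,2,3\mapsto3,1,2$), so it suffices to treat $K=T_{0}\cup\left\{  21\right\}  $. Here I would recompute the spanning-tree polynomials directly: the new edge $21$ contributes one extra $1$-tree $\left\{  21,31\right\}  $ and one extra $3$-tree $\left\{  21,13\right\}  $, giving
\[
p_{1}=b_{31}\left(  b_{21}+b_{23}\right)  ,\qquad p_{2}=b_{12}b_{31},\qquad p_{3}=b_{13}\left(  b_{21}+b_{23}\right)  +b_{12}b_{23}.
\]
The ratio $p_{3}/p_{1}=\left(  b_{13}\left(  b_{21}+b_{23}\right)  +b_{12}b_{23}\right)  /\left(  b_{31}\left(  b_{21}+b_{23}\right)  \right)  $ depends on all five edge-weights $b_{12},b_{13},b_{21},b_{23},b_{31}$ of $K$, so $\pi_{31}\left(  K\right)  =5$ and $\pi\left(  K\right)  \geq5$. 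Combining the two cases with $\pi\left(  G\right)  \geq\pi\left(  K\right)  $ completes the proof.

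I expect the only genuine labor to be the short explicit computation in the one-edge case, together with the routine verification that each of the five variables is actually influential in the displayed ratio (which holds for generic positive weights since it fails only when $b_{12}=0$). The two-edge case falls out of the existing lemmas, and the one conceptual step --- replacing $G$ by a short-augmentation minor --- is precisely what Corollaries~\ref{augmentation} and~\ref{minor} were set up to deliver.
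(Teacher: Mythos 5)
Your proposal is correct and follows essentially the same route as the paper: reduce to a short-augmentation minor $K$ of $T_{0}$ via Corollaries \ref{augmentation} and \ref{minor}, dispose of the two-edge case with Remark \ref{T0} and Lemma \ref{complexity jump}, and handle the one-edge case by direct computation on the five-edge graph, where your formula for $p_{1}/p_{3}$ agrees with the ratio displayed in the paper. Your extra details (the degenerate case $l=j$, the isomorphism between the two one-edge augmentations, and the influentiality check) merely make explicit what the paper leaves implicit.
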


\begin{proof}
By Corollary \ref{augmentation}, $G$ has a minor $K=T_{0}\cup P$, which is a
short augmentation of $T_{0}$, and it is enough to show that $\pi\left(
K\right)  \geq5$. If $P$ consists of two edges $\left\{  jk,kl\right\}  $ then
by Remark \ref{T0} we can choose $i$ such that $\pi_{ij}\left(  T_{0}\right)
=3$; now by Lemma \ref{complexity jump}, we have $c_{ik}\left(  K\right)  =5$
and hence $\pi\left(  K\right)  \geq5$. If $P$ consists of a single edge then
$K$ is necessarily as below, and once again $\pi\left(  K\right)  \geq5$.%
\[%
\begin{tabular}
[c]{|lll|}\hline
$2$ &  & \\
$\uparrow\downarrow$ & $\searrow$ & \\
$1$ & $\leftrightarrows$ & $3$\\\hline
\end{tabular}
\ \ \ \ \ \ \
\begin{tabular}
[c]{|c|}\hline
$p_{1/}p_{3}$\\\hline
$\dfrac{b_{31}\left(  b_{21}+b_{23}\right)  }{b_{23}b_{12}+b_{23}b_{13}%
+b_{21}b_{13}}$\\\hline
\end{tabular}
\ \ \ \ \
\]

\end{proof}

\subsection{The circuit rank}

As usual $G$ denotes a simple connected graph, and we will write $e\left(
G\right)  $ and $v\left(  G\right)  $ for the numbers of edges and vertices of
$G$.

\begin{definition}
The circuit rank of $G$ is defined to be
\[
c\left(  G\right)  =e\left(  G\right)  -v\left(  G\right)  +1
\]

\end{definition}

The circuit rank is also known as the \emph{cyclomatic number}, and it counts
the number of independent cycles in $G$, see \textit{e.g.} \cite{Berge}.

\begin{example}
If $G$ is a $k$-rose then $c\left(  G\right)  =k$, and if $G$ is a chorded
cycle then $c\left(  G\right)  =2$.
\end{example}

We now prove a crucial property of $c\left(  G\right)  $.

\begin{proposition}
\label{C-rank jump}If $H\vartriangleleft G$ then there is some
$K\trianglelefteq G$ such that $H\vartriangleleft K$ and $c\left(  K\right)
=c\left(  H\right)  +1$.
\end{proposition}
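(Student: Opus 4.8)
The plan is to lean entirely on the augmentation lemma just established. Applying it to $H\vartriangleleft G$ produces an augmenting path $P$ in $G$ — a path (possibly a cycle) whose two endpoints lie in $H$ but which is otherwise completely disjoint from $H$ — and I would simply take $K=H\cup P$. As noted in the text preceding that lemma, $K$ is connected, so $K\trianglelefteq G$; and since $P$ is disjoint from $H$ apart from its endpoints, it contributes at least one edge not in $H$, so $K\neq H$ and hence $H\vartriangleleft K$. Everything then comes down to a circuit-rank count.

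For that count, write $\ell\geq1$ for the number of edges of $P$. Then $P$ has exactly $\ell-1$ intermediate vertices, and by the definition of an augmenting path every one of these is new (not a vertex of $H$) and every one of the $\ell$ edges of $P$ is new (not an edge of $H$): an edge of $H$ has both endpoints in $H$, whereas each edge of $P$ is incident to an intermediate vertex when $\ell\geq2$, and when $\ell=1$ the single edge is excluded by disjointness. Hence $e(K)=e(H)+\ell$ and $v(K)=v(H)+(\ell-1)$, so
\[
c(K)=e(K)-v(K)+1=\bigl(e(H)+\ell\bigr)-\bigl(v(H)+\ell-1\bigr)+1=c(H)+1,
\]
as required.

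I do not expect a genuine obstacle here: the whole weight of the proposition is carried by the augmentation lemma, and the present statement merely records the evident fact that joining two vertices of a connected subgraph by an otherwise-disjoint path creates exactly one new independent cycle, raising the circuit rank by one. The only point worth a moment's attention is verifying that the intermediate vertices and the edges of $P$ are genuinely new — which is immediate from the phrase ``otherwise completely disjoint from $H$'' in the definition of augmentation — since it is precisely this that keeps the vertex and edge tallies from absorbing any of the new structure and thereby altering the arithmetic.
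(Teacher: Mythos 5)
Your proposal is correct and matches the paper's own argument: both invoke the augmentation lemma to obtain $K=H\cup P$ and then observe that a path with $\ell$ edges adds $\ell$ new edges and $\ell-1$ new vertices, so the circuit rank rises by exactly one. Your additional check that the intermediate vertices and edges of $P$ are genuinely new, and that $H\vartriangleleft K$, just makes explicit what the paper leaves implicit.
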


\begin{proof}
Let $K=H\cup P$ be an augmentation of $H$. If $P$ consists of $m$ edges, then
$K$ has $e\left(  H\right)  +m$ edges and $v\left(  H\right)  +$ $m-1$
vertices; hence $c\left(  K\right)  =c\left(  H\right)  +1$.
\end{proof}

\begin{corollary}
\label{C-rank cases}Let $G$ be a connected graph.

\begin{enumerate}
\item If $H\vartriangleleft G$ then $c\left(  H\right)  <c\left(  G\right)  $.

\item $c\left(  G\right)  =0$ iff $G$ is a single vertex.

\item $c\left(  G\right)  =1$ iff $G$ is a cycle.

\item $c\left(  G\right)  =2$ iff $G$ is a chorded cycle or a $2$-rose.
\end{enumerate}
\end{corollary}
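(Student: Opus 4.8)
The plan is to prove the four items in order, each leaning on the previous ones, with Proposition \ref{C-rank jump} and the augmentation lemma (every proper connected subgraph of a connected graph can be augmented) as the only real inputs.

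For Item 1, I would iterate Proposition \ref{C-rank jump}. Given $H\vartriangleleft G$, the proposition produces $K_1$ with $H\vartriangleleft K_1\trianglelefteq G$ and $c(K_1)=c(H)+1$. If $K_1\neq G$, apply it again to $K_1\vartriangleleft G$, and so on. Each step strictly increases the edge count, which is bounded by $e(G)$, so after finitely many steps $n\geq 1$ we reach $K_n=G$, whence $c(G)=c(H)+n>c(H)$. This strict monotonicity of $c$ along proper connected subgraphs is the workhorse for everything that follows. The "if" directions of Items 2 and 3 are the one-line computations $c=0-1+1$ for a single vertex and $c=k-k+1$ for a $k$-cycle. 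For "only if" in Item 2: if $G$ had two or more vertices, a single vertex would be a proper connected subgraph, forcing $c(G)>0$ by Item 1; hence $c(G)=0$ forces $G$ to be a single vertex. For "only if" in Item 3: $c(G)=1$ together with Item 2 rules out the single-vertex case, so $G$ has at least two vertices; applying the augmentation lemma to a single-vertex subgraph yields an augmenting path whose two endpoints necessarily coincide, i.e. a (directed) cycle $C\trianglelefteq G$ with $c(C)=1$. If $C$ were a proper subgraph, Item 1 would give $c(G)>1$, a contradiction; so $C=G$.

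For Item 4, "if" is immediate from the Example ($c=2$ for both a chorded cycle and a $2$-rose). For "only if", suppose $c(G)=2$. By Items 2 and 3, $G$ is neither a single vertex nor a cycle, so exactly as in Item 3 it properly contains a cycle $C$ with $c(C)=1$. Proposition \ref{C-rank jump} then produces $K$ with $C\vartriangleleft K\trianglelefteq G$ and $c(K)=2=c(G)$; by Item 1, $K$ cannot be a proper subgraph of $G$, so $K=G$. Thus $G=C\cup P$, where $P$ is an augmenting path of $C$: its endpoints lie on $C$ and it is otherwise disjoint from $C$. If the two endpoints of $P$ are distinct, this is precisely the definition of a chorded cycle; if they coincide, $P$ is itself a directed cycle meeting $C$ in a single vertex, so $G$ is a $2$-rose. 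This disposes of all four items.

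I expect no serious obstacle. The only points that require care are the termination argument in Item 1, the observation — specific to the directed setting — that an augmenting path of a single vertex is forced to be a directed cycle (so that "positive circuit rank" genuinely delivers a cycle in our sense), and the case split in Item 4 on whether the augmenting path of $C$ returns to its starting vertex.
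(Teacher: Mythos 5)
Your proof is correct and follows the paper's intended route: the paper proves part 1 by exactly this appeal to Proposition \ref{C-rank jump} and dismisses parts 2--4 as straightforward, and your fleshing-out of those parts uses only the paper's own tools (the augmentation lemma and the strict monotonicity from part 1), including the two points that genuinely need care in the directed setting -- that an augmenting path of a single vertex is a directed cycle, and the endpoint case split giving chorded cycle versus $2$-rose. No gaps; at most one could note that in part 4 you should cite the augmentation lemma (i.e., the construction $K=C\cup P$ inside the proof of Proposition \ref{C-rank jump}) rather than the proposition's bare statement, which you effectively do.
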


\begin{proof}
The first part follows from Proposition \ref{C-rank jump}, the other parts are
completely straightforward.
\end{proof}

\begin{lemma}
\label{nonrose}If $G$ is not a rose and $c\left(  G\right)  >3$, then there is
some $K\vartriangleleft G$ such that $K$ is not a rose and $c\left(  K\right)
=3$.
\end{lemma}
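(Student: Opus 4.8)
The plan is to reduce the circuit rank of $G$ one step at a time while keeping the graph connected and non-rose, until the rank hits $3$. The tool for lowering circuit rank is the contrapositive mechanism behind Corollary \ref{C-rank cases}(1): if $H\vartriangleleft G$ then $c(H)<c(G)$, and more usefully Proposition \ref{C-rank jump} run ``in reverse'' — although that proposition goes \emph{up}, what I actually need is a way to go down by exactly one. So first I would establish the down-step: if $G$ is connected with $c(G)=r\geq 1$, then $G$ has a connected proper subgraph $K$ with $c(K)=r-1$. This follows by deleting a single edge lying on some cycle: removing such an edge keeps the graph connected (the cycle supplied an alternative route) and drops $e(G)$ by one while leaving $v(G)$ fixed, so $c$ drops by exactly one. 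Since $c(G)>3\geq 1$ there is a cycle, hence such an edge exists.

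The real content is arranging that the down-step can always be taken \emph{while avoiding roses}. So the key step is: if $G$ is connected, not a rose, and $c(G)=r\geq 3$, then $G$ has a connected proper subgraph $K$ with $c(K)=r-1$ and $K$ not a rose. Granting this, the lemma follows by iterating from $c(G)$ down to $c(G)-1, c(G)-2,\ldots$ until we reach rank $3$; at no stage do we pass through a rose, and the terminal graph $K$ has $c(K)=3$, is connected, is a proper subgraph of $G$ (since we deleted at least one edge), and is not a rose. I expect this ``avoid roses'' step to be the main obstacle. The danger is that \emph{every} cycle-edge whose removal lowers the rank happens to turn $G$ into a rose. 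To handle it I would argue as follows: a $k$-rose with $k=r$ has a very rigid structure (cycles $C_1,\ldots,C_r$ pairwise meeting only at one common vertex, so exactly one vertex of degree $2r$ and all others of degree $2$, as bidirected or as directed cycles through the hub). If $G$ is not a rose but deleting edge $e$ from a cycle makes it the $(r-1)$-rose $R$, then $G=R+e$ where $e$ reconnects two vertices of $R$; one then checks directly that among the remaining cycle-edges of $R\cup\{e\}$ there is another whose removal yields a non-rose (for instance an edge on a cycle of $R$ that is \emph{not} incident to the hub, or the edge $e$ itself if its removal from a different cycle is available) — the point being that $r\geq 3$ gives enough cycles that we cannot be forced into the rose normal form from two independent directions at once. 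A clean way to package this: among all connected proper subgraphs $K$ with $c(K)=r-1$, if all were roses, compare two of them obtained by deleting different cycle-edges and derive a contradiction with $G$ itself being non-rose.

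Alternatively, and perhaps more simply, I would split on whether $G$ contains a chorded cycle as a subgraph. If it does, say on vertex set $W$, then since $c(G)>3$ and a chorded cycle has circuit rank $2$, repeated application of the down-step that \emph{keeps $W$ and its chorded-cycle structure inside} lands at rank $3$ with a non-rose witness (a chorded cycle is never a rose). If $G$ contains no chorded cycle as a subgraph, then its cycle structure forces it to be rose-like — every pair of distinct cycles can share at most a single vertex and no path — and one argues that a connected non-rose with $c(G)>3$ and no chorded subcycle still admits some connected proper subgraph of rank $3$ that fails to be a rose, by carefully removing an edge incident to a vertex that distinguishes $G$ from the rose normal form (e.g.\ a vertex that is special but lies on only some of the cycles). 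I would carry out the first split in detail and treat the no-chorded-cycle case by the structural normal-form argument above; the bookkeeping of ``which edge to delete'' is the part that needs care, but it is finite case analysis rather than anything deep.
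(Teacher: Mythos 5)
There is a genuine gap, and it sits at the very first step. Your down\-/step rests on the claim that deleting a single edge lying on a cycle keeps the graph connected because ``the cycle supplied an alternative route.'' But the graphs here are \emph{directed}, and the paper's notion of connectedness requires a directed path from $i$ to $j$ for every ordered pair: if you delete the edge $uv$ of a directed cycle, the rest of that cycle only provides a route from $v$ back to $u$, not from $u$ to $v$. In fact, for exactly the graphs relevant to this lemma there may be \emph{no} edge whose deletion preserves connectedness: take $G$ to be a ``chain'' of four directed cycles in which consecutive cycles share a single vertex (a cactus). This $G$ is connected, is not a rose, has $c(G)=4$, yet every vertex off the attachment points has a unique outgoing edge, so removing any edge destroys connectedness. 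Thus your induction never gets started, and the ``avoid roses'' analysis -- which you yourself flag as the crux and which is phrased entirely in terms of comparing single-edge deletions -- inherits the same defect and is in any case only gestured at, not proved. Your second, alternative sketch is closer to viable: the branch where $G$ contains a chorded cycle does work, since one can grow the chorded cycle to a connected subgraph of circuit rank $3$ by the paper's augmentation result (Proposition \ref{C-rank jump}), and no subgraph of a rose can contain a chorded cycle (it would need two distinct vertices of out-degree $\geq 2$ and in-degree $\geq 2$). But the complementary ``no chorded cycle'' case is precisely the cactus situation above, where your proposed remedy (``removing an edge incident to a vertex that distinguishes $G$ from the rose normal form'') again collides with directed connectivity, and you give no argument for it.

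For contrast, the paper avoids deletions altogether and works \emph{upward}: it takes a rose $R\vartriangleleft G$ of maximal circuit rank. If $c(R)\leq 2$, then any connected subgraph of rank $3$ (which exists by iterated augmentation) is automatically not a rose, by maximality. If $c(R)\geq 3$, it augments $R$ by a path $P$; maximality forces $P$ not to begin and end at the hub, and then the union of $P$ with two suitable petals $L_{1},L_{2}$ is a connected subgraph of rank $3$ that cannot be a rose (it has either a second special vertex or a non-hub vertex of in-degree $2$). If you want to salvage your top-down plan, you would need to replace edge deletion by removal of a whole ``ear'' (the reverse of the paper's augmentation), and then still supply the rose-avoidance argument; at that point you have essentially reconstructed the paper's machinery.
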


\begin{proof}
Let $R$ be a $k$-rose in $G$ with $c\left(  R\right)  =k$ as large as
possible, then $R\vartriangleleft G$ by assumption. If $c\left(  R\right)
\leq2$ then any $K\vartriangleleft G$ with $c\left(  K\right)  =3$ is not a
rose. Thus we may assume that $c\left(  R\right)  >2,$ and in particular $R$
has a unique special vertex $i$ and at least three loops. Since $R\neq G,$ $R$
can be augmented, and $S=R\cup P$ is an augmentation, then $P$ cannot both
begin and end at $i$, else $R\cup P$ would be a rose, contradicting the
maximality of $R$. Since there are at most two endpoints of $P,$ we can choose
two \emph{distinct} loops $L_{1}$ and $L_{2}$ of $R$, such that $L_{1}\cup
L_{2}$ contains these endpoints of $P$. Then $K=L_{1}\cup L_{2}\cup P$ is the
desired graph.
\end{proof}

\subsection{Covered vertices}

\begin{definition}
Let $i$ be an ordinary vertex of $G$ with outgoing edge $ij$. We say that a
vertex $k$ \emph{covers} $i$, if one of the following holds:

\begin{enumerate}
\item the edges $ki$ and $kj$ belong to $G$

\item $j=k$ and the edge $ki$ belongs to $G$
\end{enumerate}

If there is no such $k$ then we say that $i$ is an \emph{uncovered} vertex.
\end{definition}

We emphasize that the terminology covered/uncovered is only applicable to
ordinary vertices in a graph $G$. The main point of this definition is the
following simple observation.

\begin{remark}
An ordinary vertex is uncovered iff its outgoing edge is collapsible.
\end{remark}

\begin{lemma}
\label{cover}Suppose $G$ is a connected graph .

\begin{enumerate}
\item If $v\left(  G\right)  \geq3$ then an ordinary vertex cannot cover
another vertex.

\item If $v\left(  G\right)  \geq$ $4$ then a binary vertex can cover at most
one vertex.

\item A tertiary vertex can cover at most three vertices.

\item If $G$ is a rigid graph with $c\left(  G\right)  =3$, then $v\left(
G\right)  \leq4$.
\end{enumerate}
\end{lemma}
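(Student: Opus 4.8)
The plan is to treat the four parts essentially independently, since each is a local combinatorial statement about outgoing edges and the "cover" relation; parts (1)--(3) are genuinely routine counting, and the real content is in part (4).

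For part (1): suppose an ordinary vertex $i$ (with outgoing edge $ij$) covers some vertex. By definition the covering relation requires either that $i=k$ with edges $ki$ and $kj$ present --- impossible since that would need two outgoing edges at $i$ --- or the degenerate case "$j=k$ and edge $ki$ in $G$", which would force $ij$ and $ji$ both present, i.e. a $2$-cycle on $\{i,j\}$; if $v(G)\ge 3$ connectedness forces a third vertex to attach, but attachment must be to $i$ or $j$, and then one checks directly that $i$ still has only its one outgoing edge $ij$, so $i$ covering would require being \emph{special}, contradiction. I would phrase this cleanly by noting that for $i$ to cover $k$ (case 1 of the definition) $i$ needs \emph{two} outgoing edges $ik,ij$, impossible, and case 2 requires $k=j$ and edge $ij$ --- but that is just the outgoing edge of $i$ itself and "covering" in case 2 is about a \emph{third} vertex $k$ relative to an ordinary vertex; I would spell out that an ordinary vertex simply lacks the outdegree to cover anything once $v(G)\ge 3$ rules out the trivial configuration. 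For parts (2) and (3): if $k$ is binary it covers $i$ only if (case 1) both $ki$ and $kj$ are among $k$'s two outgoing edges, which pins down $i$ up to the choice of which outgoing neighbor of $k$ plays the role of $j$; since $i$'s outgoing edge $ij$ must land on the \emph{other} outgoing neighbor of $k$, and with only two outgoing edges at $k$ there are at most the two ordered choices, but one checks the constraint "$i$ ordinary, edge $ij$ with $j$ the other out-neighbor of $k$" admits at most one such $i$ when $v(G)\ge 4$ (the $v(G)\ge4$ hypothesis rules out small coincidences where both out-neighbors of $k$ could simultaneously be such an $i$); similarly a tertiary vertex has three outgoing edges hence at most $\binom{3}{1}=3$ admissible ordered pairs surviving, giving the bound three. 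I would present (2) and (3) as short direct arguments from the outgoing-edge count, not belabour them.

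Part (4) is the main obstacle and deserves the real work. Let $G$ be rigid with $c(G)=3$. Rigidity means every ordinary vertex is covered (by the Remark, an ordinary vertex is uncovered iff its outgoing edge is collapsible). Write $V=V_o\cup V_s$ for the partition into ordinary and special vertices. Since $G$ is connected, $\sum_k \mathrm{out}(k)=e(G)=v(G)+c(G)-1=v(G)+2$, so $\sum_{k\in V_s}(\mathrm{out}(k)-1)=2$. Hence the special vertices contribute total "excess outdegree" exactly $2$: either there is one vertex with $\mathrm{out}=3$ (tertiary) and the rest ordinary, or two vertices with $\mathrm{out}=2$ (binary) and the rest ordinary, or one tertiary handled as the sum --- precisely, $|V_s|\le 2$ with the excess-degree profile being $(3)$ or $(2,2)$. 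Now count covering capacity: by parts (1)--(3), ordinary vertices cover nothing, a binary vertex covers at most one vertex, a tertiary vertex at most three. Every ordinary vertex must be covered by some special vertex, so $|V_o|\le \sum_{k\in V_s}(\text{cover capacity of }k)$. In the tertiary case this gives $|V_o|\le 3$, so $v(G)=|V_o|+|V_s|\le 3+1=4$. In the two-binary case it gives $|V_o|\le 1+1=2$, so $v(G)\le 2+2=4$. In either case $v(G)\le 4$, as claimed. The one delicate point to verify carefully is that the hypotheses $v(G)\ge 3$ resp. $v(G)\ge 4$ needed in parts (1)--(3) are actually available when we invoke them: if $v(G)\le 3$ we are already done since $3\le 4$, and if $v(G)=4$ the conclusion $v(G)\le 4$ holds trivially, so we may assume $v(G)\ge 4$ throughout the counting, which is exactly the threshold needed for part (2). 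That closes the argument.
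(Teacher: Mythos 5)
Your plan is essentially the paper's own proof: parts (1)--(3) by local analysis of outgoing edges together with connectedness, and part (4) by exactly the same two steps --- from $c(G)=3$ and outdegree $\geq 1$ at every vertex deduce that the special vertices are either one tertiary vertex or two binary ones, then combine rigidity (via the Remark that an ordinary vertex is uncovered iff its outgoing edge is collapsible) with the covering capacities from (1)--(3) to force $v(G)\leq 4$. Your handling of the thresholds $v(G)\geq 3$ and $v(G)\geq 4$ when invoking (1) and (2) inside (4) is also sound.

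The one genuine soft spot is part (2). You present it as a ``short direct argument from the outgoing-edge count,'' but outdegree counting alone cannot give the bound; the statement rests on connectedness, just as in part (1). Moreover your case analysis admits only clause (1) of the covering definition, whereas a binary vertex $k$ can also cover an out-neighbour $i$ via clause (2), namely when $i$'s outgoing edge is $ik$ and $ki$ is an edge. The argument that is actually needed (and is the paper's): if a binary vertex $k$ covered two ordinary vertices $i,j$, then $k$'s two outgoing edges are $ki,kj$, and in every combination of clauses (1)/(2) the outgoing edges of $i$ and $j$ also terminate inside $\left\{ i,j,k\right\} $; hence no path leaves $\left\{ i,j,k\right\} $, so a fourth vertex is unreachable, contradicting connectedness. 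Your parenthetical that $v(G)\geq 4$ ``rules out small coincidences'' asserts this conclusion without supplying the mechanism. A milder version of the same comment applies to your part (1): the clean way to land the contradiction is that strong connectedness would require an outgoing edge from the bidirected pair to a third vertex, forcing one of the two ordinary vertices to have a second outgoing edge. With these repairs your argument coincides with the paper's.
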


\begin{proof}
If $k$ is an ordinary vertex covering $i$ then $G$ must contain the edges $ki$
and $ik$. Thus $i$ and $k$ do not have any other outgoing edges, and if $G$
has a third vertex $j$ then there is no path from $k$ or $i$ to $j$, which
contradicts the connectedness of $G$, thereby proving the first statement.

If $k$ is a binary vertex covering the ordinary vertices $i$ and $j$ then $G$
must contain the edges $ki,kj,ij,ji$. The vertices $i,j,k$ cannot have any
other outgoing edges, so a fourth vertex would contradict the connectedness of
$G$ as before. This proves the second statement.

If a vertex $k$ covers $i$ then there must be an edge from $k$ to $i$. Thus if
\emph{out}$\left(  k\right)  =3$ then $k$ can cover at most three vertices.

If $c\left(  G\right)  =3$ then $G$ has either 2 binary vertices or 1 tertiary
vertex, with the remaining vertices being ordinary. If $v\left(  G\right)  >4$
then by previous two paragraphs $G$ would have an uncovered vertex, which is a contradiction.
\end{proof}

\subsection{Proof of Theorem \ref{main}}

\begin{proposition}
\label{c and pi}If $c\left(  G\right)  \geq3$ and $G$ is not a rose, then
$\pi\left(  G\right)  \geq5$.
\end{proposition}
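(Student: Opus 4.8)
The plan is to combine the structural lemmas just established with the reduction-to-minors philosophy from Corollary~\ref{minor}. Since $\pi\left( G\right) \geq5$ is inherited downward along minors (Corollary~\ref{minor} gives $\pi\left( H\right) \leq\pi\left( G\right) $ for $H$ a minor of $G$), it suffices to exhibit, inside any connected $G$ with $c\left( G\right) \geq3$ that is not a rose, a minor $H$ with $\pi\left( H\right) \geq5$. So the whole proof is a hunt for an obstruction minor.

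The first step is to reduce to the case $c\left( G\right) =3$. If $c\left( G\right) >3$, then Lemma~\ref{nonrose} produces a proper connected subgraph $K\vartriangleleft G$ with $c\left( K\right) =3$ that is still not a rose; by Proposition~\ref{subgraph}, $\pi\left( G\right) \geq\pi\left( K\right) $, so it is enough to treat $K$. Hence from now on I would assume $c\left( G\right) =3$ and $G$ not a rose. The second step is to reduce to the case that $G$ is rigid: repeatedly collapse collapsible edges. Each collapse preserves connectedness, preserves the circuit rank (deleting one ordinary vertex and one edge leaves $e-v$ unchanged), and by Corollary~\ref{minor} can only decrease $\pi$; and collapsing cannot turn a non-rose into a rose (a rose has all non-special vertices ordinary with their outgoing edge pointing into a cycle, and one checks directly that an $ij$-collapse of a non-rose with $c=3$ stays a non-rose — this is the kind of bookkeeping I would verify but not belabor). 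So we may assume $G$ is rigid, connected, $c\left( G\right) =3$, and not a rose.

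Now the key leverage: by Lemma~\ref{cover}(4), a rigid connected graph with $c\left( G\right) =3$ has $v\left( G\right) \leq4$. So $G$ has at most four vertices, at most one tertiary vertex or exactly two binary vertices (the circuit-rank count $e=v+2$ forces the sum of excess out-degrees to be $3$), and no collapsible edges. This is now a genuinely \emph{finite} case analysis: enumerate the rigid connected graphs on $\leq4$ vertices with circuit rank $3$ that are not roses. For each such $G$ I would either display a chorded triangle $T_{0}$ sitting inside it as a proper subgraph — in which case Corollary~\ref{CCE} immediately gives $\pi\left( G\right) \geq5$ — or, when no copy of $T_{0}$ is available as a subgraph, exhibit directly a vertex pair $ij$ with $\pi_{ij}\left( G\right) \geq5$ by writing out the spanning-tree polynomials $p_{i},p_{j}$ via~(\ref{=priceformula}) and checking that the ratio genuinely depends on five edge weights (this parallels the single-edge case already handled explicitly at the end of the proof of Corollary~\ref{CCE}).

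The main obstacle is precisely this last enumeration: it must be complete, and for the handful of small rigid circuit-rank-$3$ graphs that are \emph{not} roses and do \emph{not} contain $T_{0}$ as a subgraph, one has to compute price ratios by hand and argue non-cancellation. The reason to expect this to work is that the only way to keep $\pi$ small with several independent cycles is to force enormous overlap among the spanning trees (as in a rose, where every vertex has a unique tree); rigidity plus "not a rose" with three independent cycles destroys that overlap, so a fifth influential variable must appear. I would organize the write-up so that the $T_{0}$-subgraph case is dispatched in one line via Corollary~\ref{CCE}, leaving only a short explicit list of exceptional small graphs to handle by direct computation — keeping the paper's stated strategy of identifying a finite collection of forbidden minors intact.
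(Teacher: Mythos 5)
Your plan follows the paper's own proof step for step: reduce to $c\left(G\right)=3$ via Lemma \ref{nonrose} and Proposition \ref{subgraph}, pass to a rigid graph by collapsing edges, invoke Lemma \ref{cover} to get $v\left(G\right)\leq4$, and finish with a finite case analysis that uses the chorded triangle $T_{0}$ and Corollary \ref{CCE} together with explicit price-ratio computations. However, there are two genuine gaps. The first is the parenthetical claim underpinning your rigidity reduction, namely that a collapse of a non-rose with $c=3$ is again a non-rose: this is false as stated. Take $G$ on vertices $\left\{0,1,2,3,4\right\}$ with edges $0\rightarrow1$, $0\rightarrow2$, $0\rightarrow3$, $1\rightarrow4$, $2\rightarrow4$, $3\rightarrow0$, $4\rightarrow0$. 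It is strongly connected with $c\left(G\right)=7-5+1=3$, and it is not a rose (vertex $4$ has in-degree $2$ but out-degree $1$, whereas in a rose every vertex has equal in- and out-degree); yet the edge $4\rightarrow0$ is collapsible, and its collapse is the bidirected $3$-star, a rose with $\pi=4$, so that collapse only yields $\pi\left(G\right)\geq4$. In this example the reduction can be rescued by collapsing $1\rightarrow4$ instead, but your argument gives no rule for choosing collapses, and ``collapse until rigid'' may terminate at a rose; you would need to prove that a rigid non-rose minor with $c=3$ is always reachable, or else treat directly the graphs all of whose collapses are roses.

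The second gap is that the finite case analysis you defer is precisely the substantive content of the paper's proof, not routine bookkeeping. After the reduction, the paper argues by cases on the shortest cycle: if $G$ contains a $3$-cycle, a covering argument (using that $G$ has either one tertiary or two binary vertices) shows some edge of that cycle is bidirected, so $G$ properly contains $T_{0}$ and Corollary \ref{CCE} applies; if $G$ has a $4$-cycle but no $3$-cycle, the two extra edges are forced to be non-adjacent bidirected edges and the unique resulting graph is checked by an explicit ratio with at least five influential weights; if $G$ has neither, it is a bidirected tree other than a star, again checked explicitly. Until you actually carry out this enumeration and the $T_{0}$-containment argument, what you have is a correct plan matching the paper's strategy, not yet a proof.
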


\begin{proof}
By Proposition \ref{subgraph} and Lemma \ref{nonrose} we may assume that
$c\left(  G\right)  =3$. By Lemma \ref{collapse}, we may further assume that
$G$ is rigid, and thus by Lemma \ref{cover} that $v\left(  G\right)  \leq4$.
We now divide the argument into three cases.

First suppose that $G$ contains a $3$-cycle $C$. We claim that at least one of
the edges of $C$ must be a bidirected edge in $G$, so that $G$ properly
contains a chorded triangle $T_{0}$, whence $\pi\left(  G\right)  \geq5$ by
Corollary \ref{CCE}. Indeed if $G$ has no other vertices outside $C$, then $G$
must have $5$ edges and $3$ vertices and the claim is obvious. Thus we may
suppose that there is an outside vertex $l$. We further claim that $C$
contains two vertices $i,j$ such that $i$ covers $j$. Granted this, it is
immediate that $G$ contains either the bidirected edge $ij$ and $ji$, or the
bidirected edge $jk$ and $kj$ where $k$ is the third vertex of $C$. To prove
the \textquotedblleft further\textquotedblright\ claim we note that the
special vertices of $G$ consist of either a) one tertiary vertex, or b) two
binary vertices. In case a) the connectedness of $G$ implies that the tertiary
vertex must be in $C$, and hence it must cover both the ordinary vertices in
$C$. In case b) either $C$ contains both binary vertices, one of which must
cover the unique ordinary vertex of $C$; or $C$ contains one binary vertex,
which must cover one of the two ordinary vertices of $C$.

Next suppose that $G$ does not contain a $3$-cycle, but does contain a
$4$-cycle labeled $1234$, say. Now $G$ has two additional edges, which cannot
be the diagonals $13,31,24,42$, since otherwise $G$ would have a $3$-cycle;
therefore $G$ must have two bidirected edges. The bidirected edges cannot be
adjacent else $G$ would have a collapsible vertex, therefore $G$ must be the
first graph below, which has $\pi\left(  G\right)  \geq5$.
\[
\quad%
\begin{tabular}
[c]{|lll|}\hline
$2$ & $\longrightarrow$ & $3$\\
$\uparrow\downarrow$ &  & $\uparrow\downarrow$\\
$1$ & $\longleftarrow$ & $4$\\\hline
\end{tabular}
\ \ \ \ \
\begin{tabular}
[c]{|c|}\hline
$p_{1/}p_{3}$\\\hline
$\dfrac{b_{21}b_{34}b_{41}}{b_{23}b_{12}\left(  b_{41}+b_{43}\right)  }%
$\\\hline
\end{tabular}
\ \ \ \qquad\quad%
\begin{tabular}
[c]{|lll|}\hline
$2$ & $\rightleftarrows$ & $3$\\
$\uparrow\downarrow$ &  & $\uparrow\downarrow$\\
$1$ &  & $4$\\\hline
\end{tabular}
\ \ \ \
\begin{tabular}
[c]{|c|}\hline
$p_{1/}p_{4}$\\\hline
$\dfrac{b_{21}b_{32}b_{43}}{b_{34}b_{23}b_{12}}$\\\hline
\end{tabular}
\ \ \ \ \
\]

Finally suppose $G$ has no $3$-cycles or $4$-cycles. Then every edge must be a
bidirected edge, and $G$ must be a tree with all bidirected edges. Since $G$
is not a star, this only leaves the second graph above, which has $\pi\left(
G\right)  \geq6.$
\end{proof}

We can now finish the proof of Theorem \ref{main}.

\begin{proof}
[Proof of Theorem \ref{main}]If $c\left(  G\right)  \leq2$ then, by Corollary
\ref{C-rank cases}, $G$ is a single vertex, a cycle, chorded cycle or a
$2$-rose. If $c\left(  G\right)  \geq3$ then the result follows by Proposition
\ref{c and pi}.
\end{proof}

\section{Proof of Theorem \ref{Emergence of money}}

In this section, after a couple of preliminary results, we apply Theorem
\ref{main} to prove Theorem \ref{Emergence of money}.

\begin{lemma}
\label{lem:4-chord}If $G$ is a chorded cycle on 4 or more vertices, then
$\tau\left(  G\right)  \geq3$.
\end{lemma}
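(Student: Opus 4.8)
If $G$ is a chorded cycle on $4$ or more vertices, then $\tau(G)\geq 3$.

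The plan is to argue directly from the structure of a chorded cycle, using the fact that $\tau(G)$ is the diameter of $G$ and that every vertex except the unique special vertex has exactly one outgoing edge. Write $G=C\cup P$ where $C$ is the cycle and $P$ is the chord, a path joining two distinct vertices of $C$ and otherwise disjoint from it. Let $i$ be the initial vertex of $P$; then $i$ is the unique special vertex (it has the two outgoing edges, one into $C$ and one into $P$), and every other vertex $k$ is ordinary with a single outgoing edge. First I would observe that from any ordinary vertex $k$, the set of vertices reachable in one step is a single vertex, and more generally the vertices reachable from $k$ within $t$ steps form a short initial segment of the unique directed trail leaving $k$ (until that trail first hits $i$, after which it can branch). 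So to reach a vertex that is "far" along this forced trail one simply has no shortcut.

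The key step is to exhibit an explicit pair of vertices at distance $\geq 3$. Since $v(G)\geq 4$, I would pick a vertex $k$ that is ordinary and whose forced outgoing trail must traverse at least three edges before reaching some chosen target $j$; concretely, take $j$ to be the predecessor of $i$ along the longer of the two arcs, or an interior vertex of the chord, chosen so that the only directed path from $k$ to $j$ has length $\geq 3$. Because $k$ is ordinary, the path out of $k$ is completely forced edge-by-edge until it reaches $i$; so if the forced trail from $k$ meets $i$ only after $\geq 3$ steps, or meets $j$ only after $\geq 3$ steps, we are done. With $v(G)\geq 4$ there are at least four vertices distributed among the two arcs of $C$ and the chord $P$, and a counting argument shows one of the two arcs, or the chord together with one arc, contains a directed subpath of length $\geq 3$ whose only entry point for a suitable $k$ forces a detour. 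I would formalize this by splitting into the two cases "the chord $P$ has length $\geq 2$" and "$P$ is a single edge, so $C$ has $\geq 4$ vertices," and in each case naming the two endpoints explicitly.

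I expect the main obstacle to be bookkeeping rather than ideas: one must be careful that the chosen source vertex is genuinely ordinary (so that its outgoing trail is forced and admits no shortcut) and that the chosen target is not reachable by the single alternative route through the special vertex $i$ in fewer than $3$ steps. The cleanest way to close this is to note that any directed path in $G$ from an ordinary vertex $k$ either stays on $k$'s forced trail or passes through $i$; bounding the length of the forced trail below by $3$ in one subcase, and the "$k \to i \to \text{target}$" length below by $3$ in the other, covers all possibilities. Once both subcases yield a pair at distance $\geq 3$, the diameter $\tau(G)$ is at least $3$, which is the claim.
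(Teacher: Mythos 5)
Your route is genuinely different from the paper's. The paper writes $G$ as a union of three directed paths, observes that any extra ordinary vertices have collapsible outgoing edges, collapses them (which cannot increase $\tau$) to reduce by induction to a chorded cycle on exactly four vertices, and then simply inspects the three possible $4$-vertex graphs. You instead argue directly on the general chorded cycle, using that $\tau$ is the diameter, that the initial vertex $i$ of the chord is the unique vertex of out-degree two, and that the walk out of any ordinary vertex is forced until it first reaches $i$. That idea is sound and does suffice; what the paper's collapse-to-$m=4$ trick buys is precisely the avoidance of the case bookkeeping, at the price of invoking the minor machinery, whereas your argument is self-contained.

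However, as written the decisive step is only gestured at, and some of your concrete choices do not work. Since the unique vertex of in-degree two is the common terminal vertex $w$ of the chord and of the cycle arc leaving $i$, the vertex $i$ has exactly one predecessor, so ``the predecessor of $i$ along the longer of the two arcs'' is not a well-defined choice; likewise the chord may have no interior vertex, and the phrase ``chosen so that the only directed path from $k$ to $j$ has length $\ge 3$'' assumes what has to be produced. Moreover the two announced cases (chord of length $\ge 2$ versus a single edge) are coarser than what is needed. A clean way to finish within your framework: let $A$ (cycle arc) and $B$ (chord) be the two $i\to w$ paths and $D$ the arc from $w$ back to $i$, with lengths $a,b,d$, so $a+b+d\ge 5$ and not $a=b=1$. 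If $a,b\ge 2$, the predecessor of $w$ on $A$ and the successor of $i$ on $B$ are at distance $d+2\ge 3$ (the walk is forced through $w$ and all of $D$). If, say, $b=1$ and $d\ge 2$, then $w$ and the successor of $i$ on $A$ are at distance $d+1\ge 3$; if $b=1$ and $d=1$, then $a\ge 3$ and $w$ is at distance $a\ge 3$ from the predecessor of $w$ on $A$ (the case $a=1$ is symmetric with $A,B$ interchanged). With this case check in place of the ``counting argument,'' your proof is complete and is a legitimate alternative to the paper's.
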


\begin{proof}
We can express $G$ as a union of two paths $P,Q$ from $1$ to $2$, say and a
third path $R$ from $2$ to $1$. At least one of the first two paths, say $P$
must have an intermediate vertex, say $3$. Since $m\geq4$ there is an
additional intermediate vertex $4$ on one of the paths.

If $m=4$ then we get three possible graphs depending on the location of the
vertex $4.$%
\[%
\begin{tabular}
[c]{|lll|}\hline
$3$ & $\rightarrow$ & $4$\\
$\uparrow$ &  & $\downarrow$\\
$1$ & $\leftrightarrows$ & $2$\\\hline
\end{tabular}
\qquad%
\begin{tabular}
[c]{|lll|}\hline
$3$ & $\rightarrow$ & $2$\\
$\uparrow$ & $\nearrow$ & $\downarrow$\\
$1$ & $\leftarrow$ & $4$\\\hline
\end{tabular}
\qquad%
\begin{tabular}
[c]{|lll|}\hline
$3$ & $\rightarrow$ & $2$\\
$\uparrow$ & $\swarrow$ & $\uparrow$\\
$1$ & $\rightarrow$ & $4$\\\hline
\end{tabular}
\]
For these graphs we have $\tau_{24}=3,\tau_{42}=3$ and $\tau_{34}=3$,
respectively. Thus $\tau\left(  G\right)  \geq3$ in all three cases.

If $m>4$ then $G$ can be realized as one of these graphs, albeit with
additional intermediate vertices on one or more of the paths $P,Q,R$. These
additional vertices are ordinary uncovered vertices, with collapsible outgoing
edges. Collapsing one of these edges does not increase time complexity, and
produces a smaller chorded cycle $G^{\prime}$. Arguing by induction on $m$ we
conclude $\tau\left(  G\right)  \geq\tau\left(  G^{\prime}\right)  \geq3.$
\end{proof}

\begin{lemma}
\label{lem:comp}If $G$ is the complete graph, then $\pi_{ij}\left(  G\right)
=m\left(  m-1\right)  $ for all $i\neq j$.
\end{lemma}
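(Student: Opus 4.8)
The plan is to compute the price ratio $p_i/p_j$ for the complete graph directly from formula~(\ref{=priceformula}) and show that every one of the $m(m-1)$ edge-weights $b_{kl}$ is influential. Fix $i \neq j$. By definition $p_i = \sum_{T \in \mathcal{T}_i} b_T$ and $p_j = \sum_{T \in \mathcal{T}_j} b_T$, where $\mathcal{T}_i, \mathcal{T}_j$ are the sets of spanning in-trees rooted at $i$ and $j$ respectively in the complete digraph on $\{1,\dots,m\}$. I would first record that since $p_i, p_j > 0$ for all positive weight assignments, the ratio $p_i/p_j$ is a well-defined positive rational function, and that an edge weight $b_{kl}$ is influential precisely when $p_i/p_j$ — equivalently the cross-difference $p_i(b) p_j(b') - p_i(b') p_j(b)$ for inputs $b, b'$ differing only in the $kl$ coordinate — is not identically zero.

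First I would fix an arbitrary edge $kl$ and split into the two relevant cases. Write $p_i = b_{kl}\,\alpha_i + \beta_i$ where $\alpha_i = \sum \{ b_{T \setminus kl} : T \in \mathcal{T}_i,\ kl \in T\}$ collects the contributions of $i$-trees through $kl$ and $\beta_i$ the rest; similarly $p_j = b_{kl}\,\alpha_j + \beta_j$. Then $p_i/p_j$ depends on $b_{kl}$ iff $\alpha_i \beta_j \neq \alpha_j \beta_i$ as polynomials in the remaining variables. So the task reduces to showing this inequality always holds. The key structural fact I would use: in a spanning in-tree rooted at $i$, the root $i$ has out-degree $0$, so no $i$-tree contains an edge $il$ leaving $i$; hence if $k = i$ then $\alpha_i = 0$ while $\alpha_j \neq 0$ (there is certainly some $j$-tree using the edge $il$, e.g. take any $j$-tree and reroute), giving $\alpha_i \beta_j - \alpha_j \beta_i = -\alpha_j \beta_i \neq 0$ since $\beta_i > 0$. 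The symmetric case $k = j$ is identical with roles swapped. For the generic case $k \notin \{i,j\}$, both $\alpha_i$ and $\alpha_j$ are nonzero, so I must instead exhibit inputs where the ratio actually changes — e.g. compare $b_{kl}$ very large versus very small (or $0$), using that the leading behaviour is governed by $\alpha_i/\alpha_j$ and the constant term by $\beta_i/\beta_j$, and argue these two ratios differ because the sets of trees involved are genuinely different.

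The main obstacle is that last generic case: proving $\alpha_i/\alpha_j \neq \beta_i/\beta_j$ (equivalently $\alpha_i\beta_j \neq \alpha_j\beta_i$) when $k \notin \{i,j\}$. Here I would look for a monomial appearing in $\alpha_i\beta_j$ but not in $\alpha_j\beta_i$, or vice versa — for instance, a spanning in-tree of $K_m$ rooted at $i$ that uses edge $kl$, chosen so that the complementary combinatorial data forces a unique matching with a $j$-tree. A clean way to do this is to pick a specific $i$-tree $T$ through $kl$ in which $j$ is a leaf (out-degree considerations again: $j$ has out-degree $1$ in any in-tree rooted at $i$), so that $T$ with the edge out of $j$ redirected to $j$'s parent's subtree... — more carefully, I would use the standard bijection relating $i$-trees and $j$-trees by "path reversal" between $i$ and $j$, and track how the weight $b_{kl}$ behaves under it, showing the reversal does not preserve membership of $kl$. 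Alternatively, a slicker route: specialize all weights to $1$ except perturb along a one-parameter family, reducing influentiality to non-constancy of an explicit univariate rational function, which can be checked by comparing values at $0$ and $\infty$ via the spanning-tree counts of two non-isomorphic subgraphs. Once every edge is shown influential, $\pi_{ij}(G) = m(m-1)$ for all $i \neq j$ follows immediately, and hence $\pi(G) = m(m-1)$, completing the entry needed for Theorem~\ref{Emergence of money}.
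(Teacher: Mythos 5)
Your reduction is sound as far as it goes: writing $p_i=b_{kl}\,\alpha_i+\beta_i$ and $p_j=b_{kl}\,\alpha_j+\beta_j$, influence of $b_{kl}$ on $p_i/p_j$ is indeed equivalent to $\alpha_i\beta_j-\alpha_j\beta_i$ not being the zero polynomial, and your treatment of the cases $k=i$ and $k=j$ (the root of an in-tree has out-degree zero, so $\alpha_i=0$ while $\alpha_j$ and $\beta_i$ are nonzero) is complete and correct. But those are the easy cases; the heart of the lemma, $k\notin\{i,j\}$, is exactly where the proposal stops being a proof. You offer two sketches and finish neither: the path-reversal bijection is left at the level of intention ("track how $b_{kl}$ behaves under it"), and the "slicker route" of setting the other weights to $1$ and varying $b_{kl}$ provably fails in the main subcase $k,l\notin\{i,j\}$. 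Indeed, the transposition of $i$ and $j$ is an automorphism of the complete graph fixing $k$ and $l$; it puts the $i$-trees containing (resp.\ avoiding) $kl$ in bijection with the $j$-trees containing (resp.\ avoiding) $kl$, so with all other weights equal to $1$ the ratio has the form $(At+B)/(At+B)$ in $t=b_{kl}$ and is constant. Comparing values at $0$ and $\infty$ therefore detects nothing: any specialization that is symmetric in $i$ and $j$ hides the dependence, and the whole difficulty is to choose an asymmetric one.

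The paper closes precisely this case with the subgraph machinery already in place (Proposition \ref{subgraph}): the $ij$ price ratio of a connected subgraph $H$ is a specialization of the ratio in $G$ (deleted edge weights set to $0$ or $1$), so it suffices to exhibit $H$ in which dependence on $b_{kl}$ is visible. For $\{k,l\}=\{i,j\}$ an $m$-cycle through the edge $ij$ works; otherwise one takes a $2$-rose with special vertex $k$, with the edge $kl$ and the vertex $i$ on one petal and $j$ on the other, so that $p_i$ and $p_j$ are single tree monomials, the first containing $b_{kl}$ and the second not (when $l=j$, exchange the roles of $i$ and $j$, i.e.\ work with $p_j/p_i$). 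You could graft this directly onto your framework: specializing the weights outside such a $2$-rose to $0$ sends $\alpha_i\beta_j$ to a nonzero monomial and $\alpha_j\beta_i$ to $0$, proving $\alpha_i\beta_j\neq\alpha_j\beta_i$ in the generic case. As written, however, the proposal establishes the claim only for edges leaving $i$ or $j$, so it has a genuine gap.
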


\begin{proof}
Fix a pair of vertices $i\neq j$ in $G$. Then we claim that the price ratio
$p_{ij}\left(  G\right)  $ depends on each of the $m\left(  m-1\right)  $ edge
weights $b_{kl}$. Indeed if $H$ is any "spanning" connected subgraph of $G$
then $p_{ij}\left(  H\right)  $ is obtained from $p_{ij}\left(  G\right)  $ by
specializing to $0$ the weights of all edges outside $H$. Therefore it
suffices to find a connected subgraph $H$ such that $p_{ij}\left(  G\right)  $
depends on $b_{kl}$.

We consider two cases. If $\left\{  i,j\right\}  =\left\{  k,l\right\}  $ then
exchanging $i,j$ if necessary we may assume $i=k,j=l$. Let $H$ be an $m$-cycle
two of whose edges are $ij$ and $hi$ (say); then $p_{i}/p_{j}=b_{hi}/b_{ij}$
depends on $b_{kl}=b_{ij}.$

If $\left\{  i,j\right\}  \neq\left\{  k,l\right\}  $ then let $H$ be an
$2$-rose with loops $C_{1}$ and $C_{2}$ such that

\begin{enumerate}
\item $k$ is the special vertex, and $kl$ is an edge in $C_{1}$

\item $i$ belongs to $C_{1}$ and $j$ belongs to $C_{2}$
\end{enumerate}

Then $p_{i}$ and $p_{j}$ are each given by unique directed trees $T_{i}$ and
$T_{j}$. Moreover $T_{i}$ involves $kl$ while $T_{j}$ does not. Hence
$p_{ij}\left(  H\right)  $ depends on $b_{kl}.$
\end{proof}

\begin{proof}
[Proof of Theorem \ref{Emergence of money}](\textbf{Completion)}Let
$\mathfrak{S}$ denote the set consisting of the three special mechanisms:
star, cycle and complete. We need to show that $\mathfrak{M}_{\preceq
}=\mathfrak{S,}$ where $\mathfrak{M}_{\preceq}$ denotes the set of $\preceq
$-minimal elements of $\mathfrak{M=M(m).}$

Let us say that $G$ is a minimal graph if $M_{G}$ is a minimal mechanism of
$\mathfrak{M}$. Now the star mechanism has complexity $\left(  \tau
,\pi\right)  =\left(  2,4\right)  $. Therefore if $G$ is any minimal graph
then either $\tau\left(  G\right)  =1$ or $\pi\left(  G\right)  \leq4$. For
$\tau\left(  G\right)  =1$ we get the complete graph, which has complexity
$\left(  \tau,\pi\right)  =\left(  1,m\left(  m-1\right)  \right)  $ by Lemma
\ref{lem:comp}. The graphs with $\pi\left(  G\right)  \leq4$ are characterized
by Theorem \ref{main}, and we have three possibilities for $G.$

\begin{enumerate}
\item \textit{Chorded cycle. }In this case we have $\left(  \tau,\pi\right)
=\left(  3^{+},4\right)  $ by Lemma \ref{lem:4-chord}, and so $G$ is not minimal.

\item \textit{Cycle. }In this case we have $\left(  \tau,\pi\right)  =\left(
m-1,2\right)  $ by Lemma \ref{lm:cycle}.

\item $k$\textit{-rose, }$k\geq2$. If each petal of $G$ has exactly 2 edge
then $G$ is the star mechanism. Otherwise after collapsing edges, we obtain
the following minor with $\tau_{12}=3$%
\[%
\begin{tabular}
[c]{|lllll|}\hline
$1$ &  &  &  & \\
$\downarrow$ & $\nwarrow$ &  &  & \\
$\cdot$ & $\rightarrow$ & $\cdot$ & $\leftrightarrows$ & $2$\\\hline
\end{tabular}
\ \
\]
Thus $G$ has complexity $\left(  \tau,\pi\right)  =\left(  3^{+},4\right)  $
and so is not minimal.
\end{enumerate}

Thus the three graphs in the statement of Theorem \ref{Emergence of money} are
the only \emph{possible} minimal graphs, and have the indicated complexities.
Since they are incomparable with each other, each is minimal. Thus we conclude
$\mathfrak{M}_{\preceq}=\mathfrak{S}$ as desired.
\end{proof}

\begin{remark}
For $m=3$, Lemma \ref{lem:4-chord} does not hold and we have an additional
strongly minimal mechanism with $\left(  \tau,\pi\right)  =\left(  2,4\right)
$, namely the chorded triangle
\[%
\begin{tabular}
[c]{|lll|}\hline
$\cdot$ &  & \\
$\downarrow$ & $\nwarrow$ & \\
$\cdot$ & $\leftrightarrows$ & $\cdot$\\\hline
\end{tabular}
\]

\end{remark}

\section{Proof of Theorem \ref{characterization theorem}}

Note that a mechanism is determined uniquely by its \emph{net trade} function
$\nu(a,b):=r(a,b)-\overline{a}$ which, although initially defined for $a\leq
b$, admits a natural extension as follows.

\begin{proposition}
\label{Linearity}The function $\nu$ admits a unique extension to $S\times
S_{+}$ satisfying
\[
\nu(\lambda a+\lambda^{\prime}a^{\prime},b)=\lambda\nu(a,b)+\lambda^{\prime
}\nu(a^{\prime},b),\quad\nu\left(  a,\lambda b\right)  =\nu\left(  a,b\right)
\text{ for }\lambda,\lambda^{\prime}>0
\]

\end{proposition}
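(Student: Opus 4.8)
The plan is to build the extension in two stages, first extending in the $a$-variable by positive-scalar homogeneity together with additivity, and then checking that the resulting object is well-defined, unique, and has the claimed invariance in $b$. Fix $b\in S_+$. On the original domain $\{a\in S: 0\le a\le b\}$ we already know from \emph{Invariance} (Condition \ref{Invariance}) that $\nu(\lambda a,\lambda b)=\lambda\nu(a,b)$; specializing $\lambda$ to a scalar gives $\nu(ta,tb)=t\nu(a,b)$ whenever $0\le ta\le b$, i.e. for small $t$. The first step is to use this to \emph{define}, for arbitrary $a\in S$, the value $\nu(a,b):=t^{-1}\nu(ta,b)$ where $t>0$ is any scalar small enough that $ta\le b$; one must check this is independent of the choice of $t$, which follows from the small-$t$ homogeneity just noted. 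This already forces the formula $\nu(\lambda a,b)=\lambda\nu(a,b)$ for all $\lambda>0$ and all $a\in S$.

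The second step is additivity: $\nu(a+a',b)=\nu(a,b)+\nu(a',b)$ for all $a,a'\in S$. By the homogeneity from step one it suffices to prove this when $a,a',a+a'$ all lie in $[0,b]$, and there it is exactly the content of \emph{Aggregation} (Condition \ref{Aggregation}) as recorded in the text: \emph{Anonymity} and \emph{Aggregation} together say the return to a trader is a function $r(a,b)$ of his own offer and the aggregate, and that splitting one trader's offer $a^n=b^n+b^{n+1}$ leaves the others' returns unchanged — applied with two traders whose offers are $a$ and $a'$ against aggregate $b$, and then with a single trader offering $a+a'$, this gives $r(a,b)+r(a',b)=r(a+a',b)$ after subtracting $\overline{a+a'}=\overline a+\overline{a'}$, hence additivity of $\nu$. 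Combining the two steps, $\nu(\cdot,b)$ is $\mathbb{R}_{>0}$-homogeneous and additive on $S$, which is the first displayed identity; uniqueness is automatic since any extension satisfying that identity must agree with $t^{-1}\nu(ta,b)$ for small $t$, and on small arguments it must agree with the original $\nu$.

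For the $b$-invariance $\nu(a,\lambda b)=\nu(a,b)$ for a positive diagonal matrix $\lambda$: pick $t>0$ small enough that $ta\le b$ and also $ta\le \lambda b$ (both possible since $b,\lambda b\in S_+$). Then $\nu(a,\lambda b)=t^{-1}\nu(ta,\lambda b)$, and I want to compare $\nu(ta,\lambda b)$ with $\nu(ta,b)$ on the original domain. Here I would apply \emph{Invariance} in its full diagonal form: $\nu(\lambda a',\lambda b')=\lambda\nu(a',b')$; the subtlety is that the diagonal matrix rescales the offer too, so one does \emph{not} immediately get $\nu(ta,\lambda b)=\nu(ta,b)$. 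I expect this to be the main obstacle, and the way around it is to observe that the return to an offer whose only nonzero entries are $\{a_{ij}\}$ consists only of commodity $j$ (the standing convention on opportunities), so rescaling the $b$'s by $\lambda$ and simultaneously rescaling each such single-commodity offer can be inverted componentwise; more cleanly, decompose $a=\sum_{ij} a_{ij}e_{ij}$ into its edge-components using the additivity just established, handle each rank-one offer $a_{ij}e_{ij}$ separately — for such an offer the net trade lies in the two coordinates $i,j$ and \emph{Invariance} pins down how it transforms under the diagonal rescaling — and reassemble. After checking the rank-one case, linearity in $a$ propagates the $b$-invariance to all of $S$, completing the proof.
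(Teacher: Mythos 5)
There are two genuine gaps. First, your extension step is unsupported (indeed circular): to show that $\nu(a,b):=t^{-1}\nu(ta,b)$ is independent of $t$ --- equivalently that $\nu(\lambda a,b)=\lambda\nu(a,b)$ with $b$ held \emph{fixed} --- you appeal to \emph{Invariance}, but \emph{Invariance} rescales both arguments: with a scalar $t$ it gives $\nu(ta,tb)=t\,\nu(a,b)$, i.e. $\nu(ta,b)=t\,\nu(a,t^{-1}b)$, which only equals $t\,\nu(a,b)$ if you already know the $b$-invariance that you postpone to the end. The workable route (the paper's) is the reverse of yours: get additivity in the first argument from \emph{Aggregation} (your step two is fine when $a+a'\le b$), and then pass from additivity to positive-real homogeneity; that passage is not free, since the Cauchy equation has non-linear solutions absent a regularity hypothesis. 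The paper supplies the regularity from conservation of commodities ($r(a,b)\le\overline{b}$ for $a\le b$) and cites the Aczel--Dhombres result on bounded additive functions. Your proposal never invokes boundedness or any substitute, so the real-scalar homogeneity, and with it the well-definedness and uniqueness of your extension, is unproved.

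Second, you misread the second identity: in the Proposition the $\lambda$ in $\nu(a,\lambda b)=\nu(a,b)$ is a positive \emph{scalar}, not a general positive diagonal matrix. For a diagonal matrix the statement is false: already for the star mechanism $p_i/p_j=b_{mi}b_{jm}/(b_{im}b_{mj})$, so rescaling $b_{kl}\mapsto\lambda_k b_{kl}$ multiplies $p_i/p_j$ by $\lambda_j/\lambda_i$ and changes the return to an offer on edge $ij$; hence the rank-one edge-decomposition you sketch as the way around ``the main obstacle'' cannot close. For the scalar version no such machinery is needed: for $a\le b$ and scalar $\lambda\ge1$, first-argument linearity at the aggregate $\lambda b$ gives $r(\lambda a,\lambda b)=\lambda r(a,\lambda b)$, while \emph{Invariance} gives $r(\lambda a,\lambda b)=\lambda r(a,b)$; comparing yields $r(a,\lambda b)=r(a,b)$, and the case $0<\lambda<1$ follows by applying this with $\lambda^{-1}$.
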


\begin{proof}
Since $\nu(a,b):=r(a,b)-\overline{a}$, it suffices to show
\begin{equation}
r(\lambda a+\lambda^{\prime}a^{\prime},b)=\lambda r(a,b)+\lambda^{\prime
}r(a^{\prime},b),\quad r\left(  a,\lambda b\right)  =r\left(  a,b\right)
\text{ for }\lambda,\lambda^{\prime}>0 \label{=rlin}%
\end{equation}

But this is just Lemma 1 of \cite{Dubey-Sahi: 2003}, whose proof we now
reproduce for the sake of completeness.

First observe that, by the conservation of commodities, $r(a,b)\leq
\overline{b}$ for all $a\leq b;$ moreover if $a$ and $a^{\prime}$ in $S$ are
such that $a+a^{\prime}\leq b,$ then \emph{Aggregation }implies the functional
(Cauchy) equation $r(a+a^{\prime},b)=r(a,b)+r(a^{\prime},b)$.

From Corollary 2 in \cite{Aczel-Dhombres:1989} we conclude that, for all
non-negative $\lambda$ and $\lambda^{\prime}$ such that $\lambda
a+\lambda^{\prime}a^{\prime}\leq b,$ the first equality of (\ref{=rlin}) holds.

Next let $a\leq b$ and choose $\lambda\geq1.$ Then the argument just given
shows that $r(\lambda a,\lambda b)=\lambda r(a,\lambda b).$ On the other hand,
\emph{Invariance }implies that the left side equals $\lambda r(a,b).$
Comparing these expressions we obtain the second inequality of (\ref{=rlin}).

Thus even for $a$ \emph{not} less than $b$, we may define $r(a,b)$ via
(\ref{=rlin}) by choosing $\lambda$ sufficiently large. This extends $r$ to
all of $S\times S_{+}.$
\end{proof}

In view of the above result, we drop the restriction $a\leq b$ when
considering $\nu\left(  a,b\right)  $.

The net trade vector can have negative and positive components, and hence
belongs to $\mathbb{R}^{m}$. The next definition pertains to such vectors in
$\mathbb{R}^{m}.$

\begin{definition}
By an $i$-vector, we mean a vector whose $i$th component is positive and all
other components are zero. By an $\bar{\imath}j$-vector we mean a vector that
has a negative $i$-component, a positive $j$-component and zeros in all other components.
\end{definition}

\begin{proposition}
\label{Convertibility}For $b\in S_{+}$ and any $i\neq j$ there is $a\in S$
such that $\nu(a,b)$ is an $\bar{\imath}j$-vector.
\end{proposition}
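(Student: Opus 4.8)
The plan is to exploit the graph-theoretic structure of $G$ together with the linearity of $\nu$ established in Proposition \ref{Linearity} and the \emph{Non-dissipation}/no-arbitrage consequences drawn in Section \ref{no arbitrage}. Since $G$ is connected, fix a directed path $i=i_{0},i_{1},\ldots,i_{k}=j$ from $i$ to $j$. The idea is to build an offer $a$ supported on the edges $i_{0}i_{1},i_{1}i_{2},\ldots,i_{k-1}i_{k}$ of this path, chosen so that the intermediate commodities $i_{1},\ldots,i_{k-1}$ exactly ``pass through'': that is, the amount of $i_{\ell}$ received as a return from the offer on edge $i_{\ell-1}i_{\ell}$ is re-offered on edge $i_{\ell}i_{\ell+1}$. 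The convention that an offer whose only nonzero components are $\{a_{i\ell,\,i_{\ell+1}}\}$ returns only commodity $i_{\ell+1}$, combined with the price-mediation formula (\ref{MGr}), tells us that offering $x$ units of $i_{\ell}$ on edge $i_{\ell}i_{\ell+1}$ returns $(p_{i_{\ell}}/p_{i_{\ell+1}})x$ units of $i_{\ell+1}$; so to make the intermediate nodes cancel we set $a_{i_0 i_1}$ arbitrary $>0$ and then recursively $a_{i_\ell,\,i_{\ell+1}} = (p_{i_{\ell-1}}/p_{i_\ell})\,a_{i_{\ell-1},\,i_\ell}$, i.e. $a_{i_\ell,\,i_{\ell+1}} = (p_i/p_{i_\ell})\,a_{i_0 i_1}$ after telescoping.

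With this choice, $\overline{a}$ has a negative contribution only at vertex $i$ (of size $a_{i_0 i_1}>0$) and no contribution elsewhere among the intermediate vertices once we subtract the returns, while the return vector $r(a,b)$ is a positive multiple of the $j$th unit vector, of size $(p_i/p_j)\,a_{i_0 i_1}$. Hence $\nu(a,b)=r(a,b)-\overline{a}$ has a strictly negative $i$-component, a strictly positive $j$-component, and all other components zero — exactly an $\bar{\imath}j$-vector. One subtlety: if $j$ happens to be an intermediate vertex of the chosen path (it need not be, since paths have distinct intermediate vertices different from the endpoints, but the edges $i_\ell i_{\ell+1}$ might revisit $j$ only if the path is not simple) — but by the definition of path in the excerpt the intermediate vertices are distinct from each other and from $i,j$, so no such collision occurs, and the telescoping is clean. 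A second subtlety is that $b\in S_+$ guarantees every $b_{i_\ell i_{\ell+1}}>0$, so all the price ratios $p_{i_{\ell-1}}/p_{i_\ell}$ appearing are well-defined finite positive numbers by (\ref{=priceformula}); and the extended domain of $\nu$ from Proposition \ref{Linearity} means we need not worry about whether $a\le b$.

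I expect the main (and essentially only) obstacle to be bookkeeping: verifying that after the telescoping substitution all intermediate net-trade components are \emph{exactly} zero rather than merely nonnegative. This is where one must be careful that the offer on edge $i_\ell i_{\ell+1}$ is made \emph{only} in commodity $i_\ell$ and hence, by the opportunity-convention, returns \emph{only} commodity $i_{\ell+1}$ — so the gross amount of commodity $i_\ell$ tied up is precisely $a_{i_{\ell-1} i_\ell} \cdot (p_{i_{\ell-1}}/p_{i_\ell})$ offered back out, which equals the amount received, leaving zero net at $i_\ell$. Everything else is immediate from linearity of $\nu$ in its first argument (Proposition \ref{Linearity}) and the explicit return formula (\ref{MGr}).
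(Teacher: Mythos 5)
There is a genuine gap: your argument is circular in the context where this proposition lives. Proposition \ref{Convertibility} is a step in the proof of Theorem \ref{characterization theorem}, so $\nu$ here is the net-trade function of an \emph{abstract} mechanism $M$ about which we know only Anonymity, Aggregation, Invariance, Non-dissipation (plus their consequences, Proposition \ref{Linearity} and no-arbitrage) and the opportunity convention that an offer made solely on edges $i\ell$ returns only commodity $\ell$. You construct the offer by calibrating the amounts along the path with the price ratios $p_{i_{\ell-1}}/p_{i_\ell}$ and invoke the return formula (\ref{MGr}) together with (\ref{=priceformula}); but those formulas describe $M_G$, and the assertion that $M$'s returns are mediated by such prices is exactly what the remainder of the section is establishing (prices are only constructed in Lemma \ref{budget balance}, whose proof in turn uses this very proposition and Lemma \ref{Common exchange rate}). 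So you are assuming the conclusion of the characterization theorem to prove one of its lemmas. If $M$ were already known to be $M_G$, the proposition would be essentially immediate and would not be needed in this form.

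The fix is to avoid any knowledge of exchange rates: as in the paper, offer an arbitrary positive amount $w^1$ of commodity $i$ on the first edge of the path; by the opportunity convention the return $w^2$ consists only of the next commodity, and $w^2\neq0$ by Non-dissipation (otherwise $\nu=-w^1\lvertneqq0$); then offer $w^2$ in full on the next edge, and so on. Summing the offers $w^1,\ldots,w^{t-1}$ and using the additivity of $\nu$ from Proposition \ref{Linearity}, the intermediate terms telescope and $\nu(w,b)=w^t-w^1$, an $\bar{\imath}j$-vector. This sidesteps your calibration entirely: you never need to know how much each step returns, only that it returns something, in the right commodity. (A secondary slip in your write-up: even granting (\ref{MGr}), $r(a,b)$ is not a positive multiple of the $j$th unit vector, since the offers on intermediate edges generate returns at the intermediate vertices; it is only the net trade $\nu(a,b)$ that vanishes there, as your final paragraph correctly notes.)
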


\begin{proof}
Since the graph $G$ underlying the mechanism is connected, there is a directed
path from $i$ to $j.$ Denote the nodes on the path by $i=1,\ldots,t=j$. Let
$w^{1}$ be an $i$-vector which can be offered on edge $12$ to get a return
$w^{2}\neq0$ consisting only of commodity $2$ (here $w^{2}\neq0$ by
\emph{Non-dissipation}); then $w^{2}$ can be offered on edge $23$ to get
$w^{3}\neq0$ consisting only of commodity $3$, and so on. This yields a
sequence $w^{1},\ldots,w^{t}$ such that
\[
w^{i}+\nu\left(  w^{i},b\right)  =w^{i+1}\text{ for }i=1,\ldots,t-1
\]
If $w=\sum w^{i}$ then by Proposition \ref{Linearity} we have
\[
\nu\left(  w,b\right)  =%
{\textstyle\sum}
\nu\left(  w^{i},b\right)  =w^{t}-w^{1}%
\]
which is an $\bar{\imath}j$-vector.
\end{proof}

It will be convenient to write an $\bar{\imath}j$-vector in the form $\left(
-x,y\right)  $ after suppressing the other components. In the context of the
above proposition if $\nu\left(  a,b\right)  =\left(  -x,y\right)  $ then by
linearity $\nu\left(  a/x,b\right)  =\left(  -1,y/x\right)  $, and we will say
that the offer $a$ (or $a/x$) achieves an $ij$-\emph{exchange ratio} of $y/x$
at $b$.

Proposition \ref{Convertibility} shows that there exists at least one offer
$a$ to achieve an $\bar{\imath}j$-vector in trade, at any given $b$. But $a$
is by no means unique. There may be many paths from $i$ to $j,$ along which
$i$ can be exchanged exclusively for $j;$ and, also, there may be more
complicated trading strategies, that use edges no longer confined to any
single path, to accomplish such an exchange. These could give rise to offers
different from $a$ and yield (for the fixed aggregate $b)$ other $\bar{\imath
}j$-vectors in trade. But, as the following lemma shows, the \emph{same
}exchange ratio obtains under all circumstances.

\begin{lemma}
\label{Common exchange rate}If $a^{\prime},a^{\prime\prime}$achieve
$ij$-exchange ratios $\alpha^{\prime},\alpha^{\prime\prime}$ at $b$, then
$\alpha^{\prime}=\alpha^{\prime\prime}$.
\end{lemma}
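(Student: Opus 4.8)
The plan is to show that the $ij$-exchange ratio achieved at a fixed aggregate $b$ is forced by the structure already established, namely by the no-arbitrage property and the linearity of $\nu$. The key idea is this: suppose $a'$ achieves exchange ratio $\alpha'$ and $a''$ achieves exchange ratio $\alpha''$ at $b$, so that (after normalizing by linearity, as in the discussion preceding the lemma) $\nu(a',b)$ is the $\bar{\imath}j$-vector $(-1,\alpha')$ and $\nu(a'',b)$ is $(-1,\alpha'')$. The first step is to combine these two offers cleverly. Consider the offer $a = a' + c\,a''$ for a suitable scalar $c>0$; by the additive linearity of $\nu$ in its first argument (Proposition \ref{Linearity}), $\nu(a,b) = \nu(a',b) + c\,\nu(a'',b)$, which in suppressed-coordinate form is $(-1-c,\ \alpha'+c\alpha'')$.

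The second step is to run the same construction in the ``reverse'' direction. Since $G$ is connected there is also a directed path from $j$ to $i$, so by Proposition \ref{Convertibility} there is an offer $\tilde a$ with $\nu(\tilde a,b) = (\beta, -1)$ a $\bar{\jmath}i$-vector (positive $i$-component $\beta$, negative $j$-component $-1$), for some $\beta>0$. Now form a combined offer that cancels the $j$-components: take $\nu(a',b) + \alpha'\,\nu(\tilde a,b) = (-1 + \alpha'\beta,\ 0)$, a pure $i$-vector or its negative, and similarly $\nu(a'',b) + \alpha''\,\nu(\tilde a,b) = (-1 + \alpha''\beta,\ 0)$. By no-arbitrage (established in the Comments subsection: for any $a,b$ one cannot have $\nu(a,b)\gvertneqq 0$ nor $\nu(a,b)\lvertneqq 0$), a net trade that is a nonzero scalar times a single unit vector is impossible unless it is zero; hence $-1 + \alpha'\beta = 0$ and $-1 + \alpha''\beta = 0$. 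Therefore $\alpha' = 1/\beta = \alpha''$, which is exactly the claim. (If one prefers to avoid invoking a reverse path, an alternative is: $\nu(a',b) - \nu(a'',b) = (0,\ \alpha'-\alpha'')$ is a pure $j$-vector realized as a net trade of the offer $a' + \lambda(\text{something})$ after using linearity to clear the negative part; if $\alpha'\neq\alpha''$ this is a nonzero net trade $\gvertneqq 0$ or $\lvertneqq 0$, contradicting no-arbitrage directly — this is the cleaner route and I would use it.)

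So the streamlined version I would actually write is: by linearity $\nu(a',b)-\nu(a'',b) = (0,\alpha'-\alpha'')$; but the left side is $\nu(a'-a'',b)$ only if $a'\geq a''$, so instead choose $\lambda$ large, set $a = \lambda a' + (\text{offer realizing }(-1,\alpha''))$ appropriately — more carefully, use that $\nu(a',b)+\nu(a'',b)=\nu(a'+a'',b)=(-2,\alpha'+\alpha'')$ and compare with $2\nu(a',b)=(-2,2\alpha')$; subtracting, $\nu(a'+a'',b) - 2\nu(a',b)$ equals a net trade vector $(0,\alpha''-\alpha')$ and is, by Proposition \ref{Linearity}, equal to $\nu(a'+a'' - 2a', b) = \nu(a''-a',b)$ once we pass to a large enough scaling so the argument lies in $S$; if $\alpha'\neq\alpha''$ this is $\gvertneqq 0$ or $\lvertneqq 0$, contradicting no-arbitrage. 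I expect the main obstacle to be purely bookkeeping: making sure all offers stay in the cone $S$ (nonnegative, supported on edges of $G$) when forming the linear combinations, which is handled exactly as in the proof of Proposition \ref{Linearity} by scaling up by a large positive $\lambda$ before subtracting. No new conceptual ingredient beyond linearity of $\nu$ and no-arbitrage should be needed.
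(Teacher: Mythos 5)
Your first argument (via the reverse offer $\tilde a$) is correct and is essentially the paper's own proof: the paper likewise uses Proposition \ref{Convertibility} to produce an offer whose net trade is a $\bar{\jmath}i$-vector, rescales, adds it to $a'$ and to $a''$ using Proposition \ref{Linearity} so that one coordinate cancels, and then forces the remaining coordinate to vanish by Non-dissipation (in its strengthened no-arbitrage form), giving $\alpha'=1/\beta=\alpha''$; your version just cancels the $j$-components where the paper cancels the $i$-components.

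The ``streamlined version'' you say you would actually write, however, has a genuine gap. No-arbitrage is a statement about $\nu(c,b)$ for an \emph{actual} offer $c\in S$; it says nothing about a formal difference of two net trades. To apply it to $(0,\alpha''-\alpha')=\nu(a'+a'',b)-2\nu(a',b)$ you must realize this vector as $\nu(c,b)$ for some $c\in S$, and the natural candidate $c=a''-a'$ (equivalently $a'+a''-2a'$) need not be a nonnegative matrix supported on the edges of $G$. Scaling by a large $\lambda>0$ does not repair this: in the proof of Proposition \ref{Linearity} the scaling serves only to relax the constraint $a\leq b$, whereas $\lambda(a''-a')$ has exactly the same negative entries as $a''-a'$, so the argument never enters the domain on which $\nu$ (and hence no-arbitrage) is defined. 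Indeed, if the subtraction route worked one would not need Proposition \ref{Convertibility} here at all; adjoining the common reverse offer is precisely what turns the comparison of $a'$ with $a''$ into the net trade of an honest offer in $S$. So keep your first argument and discard the streamlined one.
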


\begin{proof}
By Proposition \ref{Convertibility} there exists an $a$ such that $\nu\left(
a,b\right)  $ is a $\bar{j}i$-vector; if $\alpha$ is the corresponding
exchange ratio then by rescaling $a,a^{\prime},a^{\prime\prime}$ we may assume
that
\[
\nu\left(  a,b\right)  =\left(  1,-\alpha\right)  ,\nu\left(  a^{\prime
},b\right)  =\left(  -1,\alpha^{\prime}\right)  ,\nu\left(  a^{\prime\prime
},b\right)  =\left(  -1,\alpha^{\prime\prime}\right)  .
\]
By Proposition \ref{Linearity} we get
\[
\nu\left(  a+a^{\prime},b\right)  =\left(  0,\alpha^{\prime}-\alpha\right)
\]
Now by \emph{Non-dissipation} we get $\alpha\leq\alpha^{\prime}$, and
exchanging the roles of $i$ and $j$ we conclude that $\alpha^{\prime}%
\leq\alpha$ and hence\footnote{Equivalently: \emph{no-arbitrage }of subsection
\ref{no arbitrage} directly implies that $\alpha=\alpha^{\prime}.$} that
$\alpha=\alpha^{\prime}$. Arguing similarly we get $\alpha=\alpha
^{\prime\prime}$ and hence that $\alpha^{\prime}=\alpha^{\prime\prime}$
\end{proof}

\begin{lemma}
\label{budget balance}Denote the net trade function of $M$ by $\nu.$ Then
there is a unique map $p:$ $\mathbb{R}_{++}^{K}\rightarrow$ $\mathbb{R}%
_{++}^{m}/\sim$ satisfying $p(b)\cdot\nu(a,b)=0.$
\end{lemma}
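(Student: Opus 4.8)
The plan is to construct the price map $p(b)$ explicitly from the exchange ratios produced by the mechanism, and then to show that the value-conservation identity $p(b)\cdot\nu(a,b)=0$ holds for all offers $a$, finally arguing uniqueness. First I would fix $b\in S_+$ and, for each ordered pair $i\neq j$, use Proposition \ref{Convertibility} to obtain an offer achieving an $ij$-exchange ratio, which by Lemma \ref{Common exchange rate} is a well-defined number $\rho_{ij}(b)>0$ depending only on $b$ (not on the offer). The natural candidate is to set $p_i(b)/p_j(b)=\rho_{ji}(b)$, i.e. the amount of $j$ one gets per unit of $i$. To see this is consistent — that it really comes from a ray in $\mathbb{R}_{++}^m/\!\sim$ — I would verify the cocycle relation $\rho_{ij}(b)\,\rho_{jk}(b)=\rho_{ik}(b)$: given an offer achieving an $ij$-exchange and one achieving a $jk$-exchange, Proposition \ref{Linearity} lets me scale and add them so that the intermediate $j$-components cancel, yielding an $ik$-exchange with ratio equal to the product; Lemma \ref{Common exchange rate} then forces $\rho_{ik}$ to be exactly that product. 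Composing along a spanning tree then defines $p(b)$ up to scalar, giving existence of a map $p:\mathbb{R}_{++}^K\to\mathbb{R}_{++}^m/\!\sim$.

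Next I would establish the identity $p(b)\cdot\nu(a,b)=0$ for an \emph{arbitrary} offer $a\in S$, not just for offers of the special "single-path" form. The idea is that $\nu(a,b)$ should be expressible, via the linearity of Proposition \ref{Linearity}, as a nonnegative combination of the elementary net trades $\nu(a_{ij}e_{ij},b)$ coming from putting weight only on the single edge $ij$. For the edge $ij$ the return consists exclusively of commodity $j$ (this is built into the definition of an exchange mechanism and the opportunity $ij$), so $\nu(a_{ij}e_{ij},b)$ is an $\bar{\imath}j$-vector — or, more carefully, we may need Non-dissipation/no-arbitrage to rule out its vanishing in a degenerate way — hence it is annihilated by $p(b)$ because its $ij$-exchange ratio is precisely $\rho_{ij}(b)=p_j(b)/p_i(b)$. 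Summing over all edges and using additivity of $\nu$ in the first argument gives $p(b)\cdot\nu(a,b)=0$. The one subtlety I should handle is that $a$ need not satisfy $a\le b$, but Proposition \ref{Linearity} has already extended $\nu$ (hence the decomposition into edge-offers) to all of $S\times S_+$, so this is not an obstruction.

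For uniqueness, suppose $q:\mathbb{R}_{++}^K\to\mathbb{R}_{++}^m/\!\sim$ also satisfies $q(b)\cdot\nu(a,b)=0$ for all $a$. Fixing $b$ and a pair $i\neq j$, Proposition \ref{Convertibility} supplies an $a$ with $\nu(a,b)=(-1,\rho_{ij}(b))$ in the $i,j$ coordinates and zero elsewhere; the equation $q(b)\cdot\nu(a,b)=0$ then forces $q_i(b)=\rho_{ij}(b)\,q_j(b)$, i.e. $q_i(b)/q_j(b)=p_i(b)/p_j(b)$. Since this holds for every pair, $q(b)$ and $p(b)$ define the same ray, so $q=p$ as maps into $\mathbb{R}_{++}^m/\!\sim$.

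I expect the main obstacle to be the decomposition step in the second paragraph: showing that the net trade from an arbitrary matrix offer $a$ is genuinely the sum of the net trades from its individual edge-weights, with each summand being a (possibly zero) $\bar{\imath}j$-vector that $p(b)$ annihilates. Linearity in the first argument handles the algebra, but one must be careful that an offer placing weight only on edge $ij$ indeed returns only commodity $j$ (so that the summand lies in the span of $e_i,e_j$) and that the associated exchange ratio is the \emph{same} $\rho_{ij}(b)$ used to build $p$ — this is exactly what Lemma \ref{Common exchange rate} guarantees, so the work is really in assembling these pieces cleanly rather than in any new idea. The cocycle/consistency check for $p$ in the first paragraph is the other place requiring genuine (though short) argument; everything else is bookkeeping with Propositions \ref{Linearity} and \ref{Convertibility}.
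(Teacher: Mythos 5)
Your proposal is correct, but it reaches the identity $p(b)\cdot\nu(a,b)=0$ by a genuinely different route than the paper. The paper anchors the price at commodity $1$ (taking $p_j^{-1}$ to be the $1j$-exchange ratio) and proves the identity for arbitrary $a$ by induction on the number of nonzero components of $\nu(a,b)$, using Proposition \ref{Convertibility} to add an offsetting offer that cancels one component at a time; pairwise consistency of the exchange ratios is then obtained as a byproduct rather than verified in advance. You instead build the full system of pairwise ratios $\rho_{ij}(b)$, check the cocycle identity $\rho_{ij}\rho_{jk}=\rho_{ik}$ by scaling and adding offers via Proposition \ref{Linearity} and invoking Lemma \ref{Common exchange rate} (note you also need the inverse relation $\rho_{ij}\rho_{ji}=1$ when anchoring along a tree, which follows by the same cancellation argument already used in the proof of Lemma \ref{Common exchange rate}), and then decompose the offer matrix $a=\sum_{ij}a_{ij}e_{ij}$ edge by edge, using the definitional stipulation that an offer supported on the single edge $ij$ returns only commodity $j$, so that each summand $\nu(a_{ij}e_{ij},b)$ is an $\bar{\imath}j$-vector (nonvanishing by \emph{Non-dissipation}) annihilated by $p(b)$. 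In short, the paper decomposes the net-trade vector while you decompose the offer; your decomposition is essentially the trader-splitting device the paper deploys only later, in completing the proof of Theorem \ref{characterization theorem}, so your version of the lemma absorbs that step and makes the budget identity nearly immediate once consistency of the ratios is in hand, at the cost of proving the cocycle relation up front --- which your sketch does handle. Your uniqueness argument coincides in substance with the paper's (the mechanism's exchange ratios leave only one candidate for each $p_i/p_j$), merely made explicit through Proposition \ref{Convertibility}. The only blemish is notational: you write $p_i/p_j=\rho_{ji}$ while describing the quantity as the amount of $j$ obtained per unit of $i$, which in the paper's convention is the $ij$-ratio and is indeed what must equal $p_i/p_j$; the quantity you use is the right one, so this is a labeling slip rather than a gap.
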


\begin{proof}
Fix $b\in S_{+}$ and consider the vector%
\[
p=\left(  1,p_{2},\ldots,p_{m}\right)
\]
where $p_{j}^{-1}$ is the $1j$-exchange ratio at $b$, as in Lemma
\ref{Common exchange rate}. We will show that $p$ satisfies the budget balance
condition, \textit{i.e.} that
\begin{equation}
p\cdot\nu\left(  a,b\right)  =0\text{ for all }a. \label{=p.nu}%
\end{equation}
We argue by induction on the number $d\left(  a,b\right)  $ of non-zero
components of $\nu\left(  a,b\right)  $ in positions $2,\ldots,m$. If
$d\left(  a,b\right)  =0$ then $\nu\left(  a,b\right)  =0$ by
\emph{Non-dissipation} (enhanced to \emph{no-arbitrage}, see Subsection
\ref{no arbitrage}) and (\ref{=p.nu}) is obvious. If $d\left(  a,b\right)  =1$
then $\nu\left(  a,b\right)  $ is either an $\bar{1}j$-vector or a $\bar{j}1$
vector, which by the definition of $p_{j}$ and Lemma
\ref{Common exchange rate} is necessarily of the form%
\[
\left(  -x,xp_{j}^{-1}\right)  \text{ or }\left(  x,-xp_{j}^{-1}\right)  ;
\]
for such vectors (\ref{=p.nu}) is immediate. Now suppose $d\left(  a,b\right)
=d>1$ and fix $j$ such that $\nu_{j}\left(  a,b\right)  \neq0$. Then we can
choose $a^{\prime}$ such that $\nu\left(  a^{\prime},b\right)  $ is a $\bar
{1}j$ or a $\bar{j}1$- vector such that $\nu_{j}\left(  a,b\right)  =-\nu
_{j}\left(  a^{\prime},b\right)  .$ It follows that $d\left(  a+a^{\prime
},b\right)  <d$ and by linearity we get
\[
p\cdot\nu\left(  a,b\right)  =p\cdot\nu\left(  a+a^{\prime},b\right)
-p\cdot\nu\left(  a^{\prime},b\right)  .
\]
By the inductive hypothesis the right side is zero, hence so is the left side.

Finally the uniqueness of the price function is obvious, because the return
function of the mechanism dictates how many units of $j$ may be obtained for
one unit of $i$, yielding just one possible candidate for the exchange rate
for every pair $ij.$
\end{proof}

We can now prove Theorem \ref{characterization theorem}

\begin{proof}
\textbf{of Theorem} \ref{characterization theorem} (\textbf{Completion)} To
prove that $M=M_{G}$ it is enough to show that $p$ and $r$ satisfy
(\ref{=price}) and (\ref{MGr}).

Let us write, as before,
\[
b=\sum a^{\alpha},p=p(b)\text{ and }\nu\left(  a,b\right)  =r(a,b)-\overline
{a}.
\]
Consider replacing trader $\alpha$ by $m$ traders $\alpha_{1},\ldots
,\alpha_{m},$ where trader $\alpha_{j}$ makes only the offers $\left\{
a_{ij}^{\alpha}:1\leq i\leq m\right\}  $ in $a^{\alpha}$ that entitle $\alpha$
to the return of commodity $j.$ By \emph{Aggregation }this will have no effect
on traders other than $\alpha;$ and hence $\alpha_{j}$ will get precisely the
return $r_{j}(a^{\alpha},b).$ By Lemma \ref{budget balance}, applied to each
such trader $\alpha_{j},$ we have
\begin{equation}
p_{j}r_{j}(a^{\alpha},b)=\sum_{i}p_{i}a_{ij}^{\alpha} \label{=bbal}%
\end{equation}
which is just (\ref{MGr}).

Now (\ref{=price}) follows by summing (\ref{=bbal}) over all $\alpha.$
\end{proof}

\section{ A Continuum of Traders\label{continuum}}

Our analysis easily extends to the case where the set of individuals $T$ is
the unit interval $\left[  0,1\right]  $, endowed with a nonatomic population
measure \footnote{Denote the measure $\mu.$ And since $\mu$ is to be held
fixed throughout, we may suppress it, abbreviating $\int_{T}\mathbf{f}%
$\textbf{ }$(t)d\mu(t)$ by $\int_{T}\mathbf{f}$ \ for any measurable
function\textbf{ }$\mathbf{f}$\textbf{ }on $\left[  0,1\right]  .$}. Let
$\mathcal{S}$ denote the collection of all integrable functions
$\boldsymbol{a}:T\mapsto S$ such that $\int_{T}\boldsymbol{a\in}$ $S_{+}$. (An
element of $\mathcal{S}$ represents a choice of offers by the traders in $T$
which are positive on aggregate.) In the same vein, let $\mathcal{R}$ denote
the collection of all integrable functions from $T$ to $C,$ whose elements
$\mathbf{r}:T\mapsto C$ represent returns to $T.$ An \emph{exchange mechanism}
$M$, on a given set of $m$ commodities, is a map from $\mathcal{S}$ to
$\mathcal{R}$ such that, if $M$ maps\textbf{ }$\boldsymbol{a}$ to $\mathbf{r}$
then we have (reflecting conservation of commodities):%
\[
\int_{T}\boldsymbol{a=}\int_{T}\mathbf{r}%
\]

We wrap the \emph{Aggregation} and \emph{Anonymity} conditions into one, and
directly postulate that the return to any individual depends only on his own
offer and the integral of everyone's offers, and that this return function is
the same for everyone. Thus we have a function $r$ from $S\times S_{+}$ to $C$
such that $\mathbf{r(}t\mathbf{)}=r(a,b)$, where $a=\mathbf{a(}t\mathbf{)}$
and $b=\int_{T}\boldsymbol{a}.$ The following lemma is essentially from
\cite{Dubey-MasColell-Shubik:1980}.

\begin{proposition}
$r(a,b)$ is linear in $a$ (for fixed $b$) and $r(a,\lambda b)=r(a,b)$ for any
$a,b$ and positive scalar $\lambda.$
\end{proposition}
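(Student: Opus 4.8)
The plan is to mimic, in the continuum setting, the argument already used for the finite case in Proposition \ref{Linearity}, replacing \emph{Aggregation} (which is no longer a separate axiom) by the postulated structural form $\mathbf{r}(t)=r(a,b)$ together with conservation of commodities. The two things to establish are additivity of $r$ in its first argument on a suitable domain, and the scale-invariance $r(a,\lambda b)=r(a,b)$; from additivity plus the standing boundedness one upgrades to full positive-homogeneity (hence linearity) by the same Cauchy-equation citation (\cite{Aczel-Dhombres:1989}, Corollary 2) invoked earlier.

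First I would prove the Cauchy equation $r(a+a',b)=r(a,b)+r(a',b)$ whenever $a,a'\in S$ and $a+a'\le b$. The device is to realize the left-hand side by one population profile and the right-hand side by another with the same aggregate $b$. Concretely, split $T$ into three measurable pieces and define a profile $\boldsymbol{a}_1$ that offers $a$ on the first piece, $a'$ on the second, and fills in the remainder so that $\int_T\boldsymbol{a}_1=b$; then define $\boldsymbol{a}_2$ offering $a+a'$ on the first piece and the same filler on the rest, again with aggregate $b$. (Nonatomicity is what makes these constructions possible with prescribed integrals; one uses Lyapunov's theorem or an elementary rescaling argument to hit $b$ exactly, relying on $b\in S_+$ so there is slack on every active edge.) Since the aggregate is $b$ in both cases, the return function $r(\cdot,b)$ governs each individual's outcome, and comparing the return on the first piece in the two profiles — using that a trader offering $a+a'$ can equivalently be viewed, via a further nonatomic split, as two traders offering $a$ and $a'$ with the aggregate unchanged — yields the additive identity. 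Boundedness $r(a,b)\le\overline{b}$ for $a\le b$ follows from conservation of commodities exactly as in the finite case, so Corollary 2 of \cite{Aczel-Dhombres:1989} gives $r(\lambda a+\lambda'a',b)=\lambda r(a,b)+\lambda' r(a',b)$ for all non-negative $\lambda,\lambda'$ with $\lambda a+\lambda'a'\le b$, and then linearity on all of $S$ follows by choosing, for given $a$, a large $\lambda$ with $a/\lambda\le b$ and defining $r(a,b):=\lambda\,r(a/\lambda,b)$ consistently.

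For scale-invariance in $b$, I would reproduce the finite-case trick: for $a\le b$ and $\lambda\ge 1$, the additivity just proved (applied to the profile with aggregate $\lambda b$, splitting $\lambda b$ into $\lambda$ equal copies of $b$ realized on disjoint pieces) gives $r(\lambda a,\lambda b)=\lambda\,r(a,\lambda b)$, while rescaling the underlying measurements shows $r(\lambda a,\lambda b)=\lambda\,r(a,b)$ — here one needs the continuum analogue of \emph{Invariance}, namely that the maps are unchanged under a common change of commodity units, which I would either state as a standing hypothesis in this section or note follows from the numericity already imposed. Comparing the two expressions gives $r(a,\lambda b)=r(a,b)$ for $\lambda\ge1$, and inverting the roles of $b$ and $\lambda b$ extends this to all $\lambda>0$; the identity then propagates to all $a\in S$ through the linear extension.

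The main obstacle I expect is the bookkeeping in the first step: making the "split and re-aggregate" profiles rigorous in the nonatomic setting, i.e.\ exhibiting measurable offer functions with \emph{exactly} the prescribed aggregate $b\in S_+$ while carving out disjoint pieces on which prescribed finite offers are made. Since $b$ is strictly positive on every edge of $G$ there is genuine room to do this, and Lyapunov convexity (the range of a nonatomic vector measure is convex) delivers the needed pieces; but one must be careful that the "filler" offers stay in $S$ (zero off $G$, non-negative) and that the construction does not secretly use atoms or a particular value of the population measure. Once that is set up cleanly, everything else is the same Cauchy-equation routine as in the finite case and is essentially the content of \cite{Dubey-MasColell-Shubik:1980}.
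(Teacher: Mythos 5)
The central step of your plan---the Cauchy equation $r(a+a^{\prime},b)=r(a,b)+r(a^{\prime},b)$---is not delivered by the construction you describe, and the device you invoke for it is unavailable in the continuum. Individual traders are null sets, so ``viewing a trader offering $a+a^{\prime}$ as two traders offering $a$ and $a^{\prime}$ with the aggregate unchanged'' is impossible: if a set $A$ of mass $\varepsilon$ offers $a+a^{\prime}$ and you split it into two halves offering $a$ and $a^{\prime}$, its contribution to the aggregate drops from $\varepsilon(a+a^{\prime})$ to $\tfrac{1}{2}\varepsilon(a+a^{\prime})$; to keep the aggregate fixed the halves must offer $2a$ and $2a^{\prime}$, and conservation of commodities then yields only the affine identity $r(a+a^{\prime},b)=\tfrac{1}{2}r(2a,b)+\tfrac{1}{2}r(2a^{\prime},b)$. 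Similarly, in your two-profile comparison with the same filler, the second piece in the second profile is forced to offer $0$ (otherwise the aggregate is not $b$), and conservation gives $r(a,b)+r(a^{\prime},b)=r(a+a^{\prime},b)+r(0,b)$, i.e.\ additivity only up to the term $r(0,b)$. A map of the form $a\mapsto L(a)+c(b)$ with $c(b)\neq0$ is consistent with everything you have used, so the Cauchy equation is genuinely not established by these comparisons.

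The missing ingredient is $r(0,b)=0$, and it is not free: the paper obtains it from \emph{Non-dissipation}, by partitioning $T$ into non-null $T_{1},T_{2}$, letting $T_{1}$ offer $0$ and $T_{2}$ offer $b/\mu(T_{2})$; if $r(0,b)$ were nonzero (it is automatically $\geq0$ since returns lie in $C$), conservation would force the traders in $T_{2}$ to receive strictly less than they offer, a contradiction. Your proposal never invokes Non-dissipation in the linearity step and never addresses $r(0,b)$. Once that step is added, your argument essentially collapses into the paper's: the paper proves the affine identity by exactly your mass-shifting construction, upgrades affine to linear via $r(0,b)=0$, and then gets $r(a,\lambda b)=r(a,b)$ from \emph{Invariance} together with homogeneity, just as in your last paragraph (so the detour through Aczel--Dhombres is unnecessary here). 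A further small point: in the continuum the bound $r(a,b)\leq\overline{b}$ does not follow ``exactly as in the finite case,'' since conservation is only an integral condition; one gets a bound such as $r(a,b)\leq\overline{b}/\mu(A)$ by having a positive-mass set $A$ offer $a$, which suffices but needs saying.
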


\begin{proof}
We will first show that if $a,c$ $\in$ $S$ and $0<\lambda<1$, then
\[
r(\lambda a+(1-\lambda)c,b)=\lambda r(a,b)+(1-\lambda)r(c,b)
\]
There clearly exists an integrable map $\mathbf{d}$ from $T=[0,1]$ to space of
offers $S$ such that (i) positive mass of traders choose $a$ in $\mathbf{d}$;
(ii) positive mass of traders choose $c$ in $\mathbf{d}$ ; and (iii) the
integral of $\mathbf{d}$ on $T$ is $b.$ So $\int_{T}r(\mathbf{d}^{\alpha
}\mathbf{,}b)d\mu(\alpha)=\int_{T}r(\mathbf{d,}b)=\overline{b}$ since
commodities are conserved. Shift $\varepsilon\lambda$ mass from $a$ to
$\lambda a+(1-\lambda)c$ and $(1-\lambda)\varepsilon$ mass from $c$ to
$\lambda a+(1-\lambda)c$ , letting the rest be according to $\mathbf{d}.$ This
yields a new function (from $T$ to $S$ ) which we call $\mathbf{e}.$ Clearly
the integral of $\mathbf{e}$ on $T$ is also $b.$ Therefore, once again by
conservation of commodities, we must have $\int_{T}r(\mathbf{e,}%
b)=\overline{b},$ hence $\int_{T}r(\mathbf{d,}b)=$ $\int_{T}r(\mathbf{e,}b).$
But this can only be true if the displayed equality holds, proving that (every
coordinate of) $r$ is affine in $a$ for fixed $b$.

Now $r(0,b)\geq0$ by assumption. Suppose $r(0,b)\gvertneqq0$. Partition $T$
into two non-null sets $T_{1}$ and $T_{2}.$ Consider the case where all the
individuals in $T_{1}$ offer $0,$ and all in $T_{2}$ offer $b/\mu(T_{2}).$
Then, since everone in $T_{1\text{ }}$gets the return $r(0,b)\gvertneqq0,$ by
conservation of commodities everyone in $T_{2}$ gets $\overline{b}-\mu(T_{1})$
$r(0,b)\lvertneqq b/\mu(T_{2}),$ contradicting \emph{non-dissipation. }So
$r(0,b)=0,$ showing $r$ is linear.

Finally $\lambda r(a,b)=r(\lambda a,\lambda b)=\lambda r(a,\lambda b),$where
the first equality comes from \emph{Invariance} and the second from linearity.
\end{proof}

\begin{remark}
As mentioned in the introduction, when there is a continuum of traders, the
star mechanism leads to equivalence (or, near-equivalence) of Nash and Walras
equilibria under suitable postulates regarding the commodity or fiat money.
(See \cite{Dubey-Shapley: 1994} for a detailed discussion.)\ 
\end{remark}


\begin{thebibliography}{99}                                                                                               %


\bibitem {Aczel-Dhombres:1989}Aczel, J. and J. Dhombres (1989). Functional
Equations in Several Variables, \textit{Cambridge University Press,
Cambridge.}

\bibitem {Anantharam-Tsoucas: 1989}Anantharam, V., \& P. Tsoucas (1989). A
proof of the Markov chain tree theorem. \textit{Statistics \& Probability
Letters}, 8(2), 189-192.

\bibitem {Bannerjee-Maskin(1996)}Bannerjee, A.V. and E. Maskin (1996). A
Walrasian theory of money and barter, \textit{Quarterly Journal of Economics,
}111(4): 955-1005.

\bibitem {Berge}Berge, C. (2001). Cyclomatic number, The Theory of Graphs,
Courier Dover Publications, pp. 27--30.

\bibitem {Dubey:1982}Dubey, P. (1982) Price-quantity strategic market games
\textit{Econometrica, }50 (1): 111-126.

\bibitem {Dubey-Shubik:1978}Dubey, P. and M. Shubik (1978). The noncooperative
equilibria of a closed trading economy with market supply and bidding
strategies. \textit{Journal of Economic Theory, }17 (1): 1-20.

\bibitem {Dubey-MasColell-Shubik:1980}Dubey, P. , A. Mas-Colell and M. Shubik
(1980). Efficiency properties of strategic market games: an axiomatic
approach. \textit{Journal of Economic Theory, }22 (2): 363-76.

\bibitem {Dubey-Shapley: 1994}Dubey, P. and L. S. Shapley (1994).
Noncooperative general exchange with a continuum of traders. \textit{Journal
of Mathematical Economics, }23: 253-293.

\bibitem {Dubey-Sahi: 2003}Dubey, P. and S. Sahi (2003). Price-mediated trade
with quantity signals. \textit{Journal of Mathematical Economics,} Special
issue on strategic market games\textit{ }(in honor of Martin Shubik), ed. G.
Giraud, 39: 377-389.

\bibitem {Dubey-Geanakoplos-Shubik}Dubey, P., J.Geanakoplos and M. Shubik
(2005) Default and Punishment in General Equilibrium (2005).
\textit{Econometrica, Vol. 73, No. 1, pp. 1-37.}

\bibitem {Dubey-Sahi-Shubik: 2014}Dubey, P., S. Sahi and M. Shubik (2014).
Minimally Complex Exchange Mechanisms: Emergence of Prices, Markets and Money,
Cowles Foundation Discussion Paper 1945.

\bibitem {Dubey-Sahi-Shubik:2015}Dubey, P., S. Sahi and M. Shubik (2015)
Graphical Exchange Mechanisms, arxiv: 1512.04637.

\bibitem {Dubey-Geanakoplos ISLM}Dubey, P. and J. Geanakoplos (2003). Inside
and outside fiat money, gains to trade, and IS-LM. \textit{Economic Theory},
21(2-3): 347-497.

\bibitem {Foley:1970}Foley, D.K. (1970). Economic equilibrium with costly
marketing.\textit{ Journal of Economic Theory, }2(3): 276-91.

\bibitem {Hahn:1971}Hahn, F. H. (1971). Equilibrium with transaction costs.
\textit{Econometrica}\emph{, }39 (3): 417-39.

\bibitem {Heller: 1974}Heller, W. P. (1974). The holding of money balances in
general equilibrium. \textit{Journal of Economic Theory, }7: 93-108.

\bibitem {heller-Starr: 1976}Heller, W.P. and R. Starr. (1976). Equilibrium
with non-convex transaction costs: monetary and non-monetary economies.
\textit{Review of Economic Studies, }43 (2): 195-215.

\bibitem {Hill:1966}Hill, T. L. (1966). Studies in irreversible thermodynamics
IV. diagrammatic representation of steady state fluxes for unimolecular
systems, Journal of Theoretical Biology 10 (3): 442-459.

\bibitem {Howitt-Clower: 2000}Howitt, P. and R. Clower. (2000). The emergence
of economic organization. \textit{Journal of Economic Behavior and
Organization, }41: 55-84.

\bibitem {Iwai:1996}Iwai, K. (1996). The bootstrap theory of money: a search
theoretic foundation for monetary economics. \textit{Structural Change and
Economic Dynamics, }7: 451-77.

\bibitem {Jevons:1875}Jevons, W.S. (1875). Money and the mechanism of
exchange. \textit{London: D. Appleton.}

\bibitem {Jones:1976}Jones, R.A.(1976). The origin and development of media of
exchange: \textit{Journal of Political Economy, }84: 757-75.

\bibitem {Kiyotaki-Wright: 1989}Kiyotaki, N and R. Wright (1989). On money as
a medium of exchange. \textit{Journal of Political Economy,} 97: 927-54.

\bibitem {Kiyotaki-Wright: 1993}Kiyotaki, N and R. Wright (1993). A
search-theoretic approach to monetary economics. \textit{American Economic
Review, }83 (1): 63-77.

\bibitem {Li-Wright:1998}Li, Y. and R. Wright (1998). Government transaction
policy, media of exchange, and prices. \textit{Journal of Economic Theory, }81
(2): 290-313.

\bibitem {Mertens:2003}Mertens, J. F. (2003). The limit-price mechanism.
\textit{Journal of Mathematical Economics, }Special issue on strategic market
games\textit{ }(in honor of Martin Shubik), ed. G. Giraud, 39: 433-528.

\bibitem {Ostroy 1973}Ostroy, J. (1973). The informational efficiency of
monetary exchange, \textit{American Economic Review,}63 (4): 597-610.

\bibitem {Ostroy-Starr: 1974}Ostroy, J. and R. Starr (1974). Money and the
decentralization of exchange. \textit{Econometrica, }42 (6): 1093-1113.

\bibitem {Ostroy-Starr: 1990}Ostroy, J. and R. Starr (1990). The transactions
role of money. In B. Friedman and F. Hahn (eds) \textit{Handbook of Monetary
Economics, }New York: Elsevier, North- Holland: 3-62.

\bibitem {Peck:1992}Peck, J., K. Shell and S. E. Spear (1992). The market
game: existence and structure of equilibria. \textit{Journal of Mathematical
Economics}

\bibitem {Peck-Shell:1992}Peck, J. and K. Shell (1992). Market uncertainty:
correlated sunspot equilibria in imperfectly competitive economies.
\textit{The Review of Economic Studies.}

\bibitem {Postlewaite:1978}Postlewaite, A. and D. Schmeidler (1978).
Approximate efficiency of non-Walrasian equilibria. \textit{Econometrica, 46
(1): 127-36.}

\bibitem {Sahi:2013}Sahi, S. (2014). Harmonic vectors and matrix tree
theorems. \textit{Journal of Combinatorics, 5(2)}: 195-202.

\bibitem {Sahi-Yao:1989}Sahi, S. and S. Yao. (1989). The noncooperative
equilibria of a closed trading economy with complete markets and consistent
prices. \textit{Journal of Mathematical Economics, }18: 325-34.

\bibitem {Shapley: 1976}Shapley, L. S. (1976). Noncooperative general
excchange. In \textit{Theory and Measurement of Economic Externalities.
}Academc Press (ed)\textit{ }A. Y. Lin: 155-175.

\bibitem {Shapley-Shubik: 1977}Shapley, L. S. and M. Shubik (1977). Trade
using one commodity as a means of payment. \textit{Journal of Political
Economy, }85: 937-968.

\bibitem {Shubik: 1973}Shubik, M. (1973). Commodity money, oligopoly, credit
and bankruptcy in a general equilibrium model. \textit{Western Economic
Journal, }11: 24-38.

\bibitem {Shubik-Wilson}Shubik, M. and C. Wilson (1977). The optimal
bankruptcy rule in a trading economy using fiat money. \textit{Zeitschrift fur
Nationalokonomie, }37 (3-4): 337-354.

\bibitem {Shubik:1993}Shubik, M. (1999). The theory of money and financial
institutions. \textit{Cambridge, MA: MIT Press.}

\bibitem {starr:2012}Starr, R.M. (2012). Why is there money? \textit{Edward
Elgar}, Cheltenham, UK \& Northhampton, MA, US.

\bibitem {Starret:1973}Starret, D.A. (1973). Inefficiency and the demand for
money in a sequence economy. \textit{Review of Economic Studies,}40 (4): 437-48.

\bibitem {Trejos-Wright: 1995}Trejos, A. and R.Wright (1995). Search,
bargaining, money and prices. \textit{Journal of Political Economy, } 103(1) : 118-41.

\bibitem {Wallace: 1980}Wallace, N. (1980). The overlapping generations model
of fiat money. In J. Karaken and N. Wallace, \textit{Models of Monetary
Economics, }Minneapolis, MN: Federal Reserve Bank of Minneapolis: 49-82.
\end{thebibliography}
\end{document}